\pdfoutput=1
\documentclass[journal,twoside]{IEEEtran}

\usepackage{packages}
\usepackage{theoremstyles}

\usepackage{tikz}
\usepackage{pgfplots}
\usepackage{pgfplotstable}
\pgfplotsset{compat=newest}

\usepackage{balance}
\usepackage{setspace}
\usepackage[disable]{todonotes}

\usepackage{import}

\usepackage[misc,geometry]{ifsym}

\usepackage[caption=false]{subfig}

\usepackage[colorlinks=false,pdfborder={0 0 0}]{hyperref}
\usepackage[capitalize]{cleveref}

\newcommand{\eg}{e.g.\xspace}
\newcommand{\ie}{i.e., }

\newcommand{\wrt}{with respect to\xspace}

\newcommand{\colregs}{COLREGS\xspace}
\newcommand{\nasheq}{NE\xspace}
\newcommand{\nasheqs}{NEs\xspace}
\newcommand{\ibr}{IBR\xspace}

\newcommand{\vect}[1]{%
    \mathbf{#1}%
}

\newcommand{\set}[1]{%
    \mathcal{#1}%
}
\newcommand{\powerset}[1]{%
    2^{#1}%
}

\DeclarePairedDelimiter\abs{\lvert}{\rvert}
\DeclarePairedDelimiter\norm{\lVert}{\rVert}

\DeclareMathOperator*{\argmax}{arg\,max}

\newcommand*{\bigO}{\mathcal{O}}

\algnewcommand{\LineComment}[1]{\State $\triangleright$ #1}

\DeclareSIUnit\mps{\meter \per \second}
\DeclareSIUnit\mpss{\meter \per \second \squared}

\DeclareMathOperator{\globally}{\mathbf{G}}

\NewDocumentCommand{\robustness}{mmo}{%
    \rho_{#1}(#2\IfNoValueF{#3}{, #3})%
}
\newcommand{\negRobustnessSymb}[1]{%
    \rho^{-}_{#1}%
}
\NewDocumentCommand{\negRobustness}{mmo}{%
    \negRobustnessSymb{#1}(#2\IfNoValueF{#3}{, #3})%
}

\NewDocumentCommand{\stateDyn}{o}{%
    \vect{x}\IfNoValueF{#1}{_{#1}}%
}

\NewDocumentCommand{\stateSpace}{o}{%
    \set{X}\IfNoValueF{#1}{_{#1}}%
}

\NewDocumentCommand{\inputDyn}{o}{%
    \vect{u}\IfNoValueF{#1}{_{#1}}%
}

\NewDocumentCommand{\inputSpace}{o}{%
    \set{U}\IfNoValueF{#1}{_{#1}}%
}

\NewDocumentCommand{\trajectory}{o}{%
    \xi\IfNoValueF{#1}{_{#1}}%
}

\NewDocumentCommand{\feasibleTrajs}{o}{%
    \Xi\IfNoValueF{#1}{_{#1}}%
}

\newcommand{\scenario}{%
    S%
}

\newcommand{\egoIndex}{%
    i%
}
\newcommand{\obstacleIndex}{%
    j%
}

\newcommand{\poset}[1]{%
    \mathsf{#1}%
}
\newcommand{\posetPrec}{%
    \prec%
}
\newcommand{\posetPreceq}{%
    \preceq%
}
\newcommand{\posetSucc}{%
    \succ%
}
\newcommand{\posetSucceq}{%
    \succeq%
}

\newcommand{\preorderPrec}{%
    \prec%
}
\newcommand{\preorderPreceq}{%
    \precsim%
}
\newcommand{\preorderSucc}{%
    \succ%
}
\newcommand{\preorderSucceq}{%
    \succsim%
}

\newcommand{\refinedby}{\sqsubseteq}
\newcommand{\refines}{\sqsupseteq}

\newcommand{\game}{G}
\newcommand{\players}{\mathcal{A}}
\newcommand{\otherPlayers}{\mathcal{A}_{- \egoIndex}}
\NewDocumentCommand{\actionSpace}{o}{%
    \Gamma\IfNoValueF{#1}{_{#1}}%
}
\NewDocumentCommand{\actionProfile}{o}{%
    \gamma\IfNoValueF{#1}{_{#1}}%
}
\NewDocumentCommand{\metrics}{o}{%
    \set{M}\IfNoValueF{#1}{_{#1}}%
}
\NewDocumentCommand{\preference}{o}{%
    \poset{P}\IfNoValueF{#1}{_{#1}}%
}
\NewDocumentCommand{\actionPrec}{o}{%
    \preorderPrec^{\actionSpace}\IfNoValueF{#1}{_{#1}}%
}
\NewDocumentCommand{\actionPreceq}{o}{%
    \preorderPreceq^{\actionSpace}\IfNoValueF{#1}{_{#1}}%
}
\NewDocumentCommand{\actionSucc}{o}{%
    \preorderSucc^{\actionSpace}\IfNoValueF{#1}{_{#1}}%
}
\NewDocumentCommand{\actionSucceq}{o}{%
    \preorderSucceq^{\actionSpace}\IfNoValueF{#1}{_{#1}}%
}
\newcommand{\bestResponse}[1]{%
    \mathsf{BR}_{#1}%
}
\newcommand{\nes}[1]{%
    \mathsf{NE}(#1)%
}

\newcommand{\metricTags}{%
    \mathcal{L}%
}
\newcommand{\concern}{%
    \mathcal{C}%
}
\newcommand{\concerns}{%
    \mathfrak{C}%
}
\newcommand{\concernOfMetric}[1]{%
    \concern_{#1}%
}

\newcommand{\guideline}{%
    \poset{G}%
}

\newcommand{\inducedPreference}[2][\metrics]{%
    \preference(#1, #2)%
}

\newcommand{\inject}[1]{%
    \ltimes_{#1}%
}

\newcommand{\metricTag}[1]{%
    \mathtt{#1}%
}
\newcommand{\collTag}{%
    \metricTag{collision}%
}
\newcommand{\giveWayTag}{%
    \metricTag{give\text{-}way}%
}
\newcommand{\standOnTag}{%
    \metricTag{stand\text{-}on}%
}
\newcommand{\nucTag}{%
    \metricTag{nuc}%
}
\newcommand{\hamperedTag}{%
    \metricTag{hampered}%
}
\newcommand{\fishingTag}{%
    \metricTag{fishing}%
}
\newcommand{\sailingTag}{%
    \metricTag{sailing}%
}
\newcommand{\powerDrivenTag}{%
    \metricTag{power\text{-}driven}%
}
\newcommand{\highVulnerabilityTag}{%
    \metricTag{high\text{-}vul}%
}
\newcommand{\lowVulnerabilityTag}{%
    \metricTag{low\text{-}vul}%
}

\newcommand{\collMetric}[2]{%
    m^\mathtt{coll}_{#1, #2}%
}

\newcommand{\giveWayMetric}[2]{%
    m^\mathtt{gw}_{#1, #2}%
}
\newcommand{\standOnMetric}[2]{%
    m^\mathtt{so}_{#1, #2}%
}

\newcommand{\legalMetrics}[1]{%
    \metrics[#1]^\mathtt{legal}
}

\newcommand{\colregsMetrics}[1]{%
    \metrics[#1]^\mathtt{\colregs}
}

\newcommand{\legalGuideline}{%
    \guideline^\mathtt{legal}%
}

\newcommand{\colregsGuideline}{%
    \guideline^\mathtt{\colregs}%
}

\newcommand{\legalGame}{%
    \game_\mathtt{legal}%
}
\newcommand{\trueGame}{%
    \game_\mathtt{true}%
}
\newcommand{\localGame}[1]{%
    \game^{#1}%
}

\newcommand{\socialMetric}{%
    s%
}

\definecolor{RPTHRed}{HTML}{E31B23}

\colorlet{trueVal}{TUMGreen!70}
\colorlet{falseVal}{TUMRed!70}
\colorlet{unk1}{TUMPurple!70}
\colorlet{unk2}{TUMBlue!70}

\colorlet{backdrop}{TUMLightGray}

\colorlet{innerego}{TUMDarkYellow}
\colorlet{outerego}{TUMGreen}

\definecolor{egofill}{HTML}{e37222}
\definecolor{egostroke}{HTML}{9c4100}

\definecolor{obstaclefill}{HTML}{1d7eea}
\definecolor{obstaclestroke}{HTML}{00478f}

\definecolor{staticobstaclefill}{HTML}{d95558}
\definecolor{staticobstaclestroke}{HTML}{831d20}

\definecolor{goalfill}{HTML}{f1b514}
\definecolor{goalstroke}{HTML}{302404}

\definecolor{reachfill}{HTML}{7bc4c3}
\definecolor{reachstroke}{HTML}{5c9392}

\definecolor{refpath}{HTML}{ff477e}

\definecolor{initial}{HTML}{008000}

\definecolor{sampled}{HTML}{0c0cf5}

\newcommand{\comment}[3]{}

\renewcommand{\Cref}[1]{\cref{#1}}
\crefname{section}{Sec.}{Sec.}
\crefname{table}{Tab.}{Tab.}
\crefname{equation}{Eq.}{Eq.}
\crefname{algorithm}{Alg.}{Alg.}
\crefname{definition}{Def.}{Def.}
\crefname{lemma}{Lemma}{Lemma}
\crefname{problem}{Problem}{Problem}
\crefname{theorem}{Theorem}{Theorem}
\crefname{corollary}{Corollary}{Corollary}
\crefname{proposition}{Proposition}{Proposition}

\usetikzlibrary{arrows,backgrounds,calc,cd,fit,plotmarks,positioning,shapes.callouts,shapes.geometric}
\usepgfplotslibrary{groupplots}

\usetikzlibrary{external}
\AtBeginDocument{\pgfplotsset{cycle list name=tumcolormark}}

\pgfplotsset{
    colormap={trajTUMBlue}{
        color=(TUMBlue!20)
        color=(TUMBlue)
    },
    colormap={trajTUMOrange}{
        color=(TUMOrange!20)
        color=(TUMOrange)
    },
    colormap={trajTUMGreen}{
        color=(TUMGreen!20)
        color=(TUMGreen)
    },
    colormap={trajTUMPurple}{
        color=(TUMPurple!20)
        color=(TUMPurple)
    },
    colormap={trajTUMDarkRed}{
        color=(TUMDarkRed!20)
        color=(TUMDarkRed)
    },
}

\pgfplotscreateplotcyclelist{trajectory}{%
    TUMBlue,colormap name=trajTUMBlue,every mark/.append style={fill=TUMBlue}\\%
    TUMOrange,colormap name=trajTUMOrange,every mark/.append style={fill=TUMOrange}\\%
    TUMGreen,colormap name=trajTUMGreen,every mark/.append style={fill=TUMGreen}\\%
    TUMPurple,colormap name=trajTUMPurple,every mark/.append style={fill=TUMPurple}\\%
}

\pgfplotsset{
    trajectory/.style={
        ultra thick,
        line cap=round,
    },
    gradient/.style={
        mesh,
        shader=interp,
        point meta={\thisrow{time}},
    },
    pick traj/.style={
        x filter/.expression={\thisrow{traj} == #1 ? x : nan}
    },
    initial state/.style={
        only marks,
        mark=*,
    },
    goal/.style={
        only marks,
        mark=asterisk,
        mark size=3pt,
        every mark/.append style={thick},
    },
    defector/.style={
        TUMDarkRed,
        colormap name=trajTUMDarkRed,
        every mark/.append style={fill=TUMDarkRed},
    },
    original/.style={line width=2pt, TUMMediumGray},
    refined/.style={line width=0.65pt, TUMBlue},
}

\NewDocumentCommand{\plotTrajectories}{sO{1}mO{}}{%
    \pgfmathtruncatemacro{\lastid}{#2 - 1}
    \foreach \id in {0, ..., \lastid} {
        \addplot+[forget plot, trajectory, pick traj=\id, #4] table[x=x, y=y] {#3};
    }
    \IfBooleanT{#1}{
        \pgfplotsset{cycle list shift=1}
    }
}

\pgfplotsset{
    signal/.style={
        color=TUMDarkerBlue,
        no marks,
    },
    signalIncluded/.style={
        color=TUMDarkerBlue,
        mark=*,
        only marks,
        every mark/.append style={fill=TUMDarkerBlue},
    },
    signalExcluded/.style={
        signalIncluded,
        every mark/.append style={fill=TUMWhite},
    },
}

\usetikzlibrary{shapes.misc}
\tikzset{cross/.style={cross out, draw=black, minimum size=2*(#1-\pgflinewidth), inner sep=0pt, outer sep=0pt},cross/.default={1pt}}

\tikzset{inner xseps/.style 2 args={inner xsep=#1 + #2,xshift=#2 - #1}}
\tikzset{inner yseps/.style 2 args={inner ysep=#1 + #2,yshift=#2 - #1}}

\newcommand{\vessel}[5]{
    \begin{scope}[shift={#1}, rotate=#2]
        \draw [fill=#3](1, -1) -- (2, 0) -- (1, 1) -| (-1, -1) -- cycle;
        \node at (0.5, 0) {#4};
        \coordinate (front_#5) at (2, 0) {};
    \end{scope}
}

\newcommand{\robot}[2][TUMBlack]{%
    \begin{scope}[scale=0.2]
        \draw[fill=#1] (-1.2,-2) rectangle (1.2,2);

        \draw[fill=#1] (-1.5,-1.5) rectangle (-1.2,1.5);
        \draw[fill=#1] ( 1.2,-1.5) rectangle ( 1.5,1.5);

        \draw[fill=#1] (-0.5,2) arc (180:0:0.5 and 0.3) -- cycle;

        \coordinate (#2_front) at (0, 2.3) {};
        \coordinate (#2_center) at (0, 0) {};
    \end{scope}
}

\begin{document}
\title{Auditable and Deployable Multi-Agent Motion Planning via Posetal Games from Legal Guidelines}
\author{%
    Florian Lercher$^{1,2}$\textsuperscript{*},
    Karan Mahesh$^{3,4}$\textsuperscript{*},
    Yujun Huang$^{1}$,
    Meshal Alharbi$^{1}$,
    Runyu Zhang$^{1}$,
    Hanna Krasowski$^{5}$,
    \\
    Michael R. Benjamin$^{3}$,
    Matthias Althoff$^{2}$,
    Gioele Zardini$^{1}$
    \thanks{%
        \textsuperscript{*}Florian Lercher and Karan Mahesh contributed equally to this work and are listed in alphabetical order.
    }%
    \thanks{%
        This work was partially supported by the MIT Maritime Consortium, the German Research Foundation (DFG) under grant AL 1185/20-1, and the National Science Foundation Graduate Research Fellowship Program under Grant No. 2141064. Any opinions, findings, and conclusions or recommendations expressed in this material are those of the author(s) and do not necessarily reflect the views of the National Science Foundation.
    }%
    \thanks{$^1$Laboratory for Information and Decision Systems, Massachusetts Institute of Technology,
        Cambridge, MA 02139, USA. \{\href{mailto:yujun233@mit.edu}{\texttt{\small yujun233}}, \href{mailto:meshal@mit.edu}{\texttt{\small meshal}},
        \href{mailto:runyuzha@mit.edu}{\texttt{\small runyuzha}},
        \href{mailto:gzardini@mit.edu}{\texttt{\small gzardini}}\}\texttt{\small @mit.edu}}
    \thanks{%
        $^2$School of Computation, Information and Technology,
        Technical University of Munich,
        85748 Garching,
        Germany.
        \{\href{mailto:florian.lercher@tum.de}{\texttt{\small florian.lercher}}, \href{mailto:florian.lercher@tum.de}{\texttt{\small althoff}}\}\texttt{\small @tum.de}%
    }%
    \thanks{%
        $^3$Department of Mechanical Engineering,
        Massachusetts Institute of Technology,
        Cambridge, MA 02139, USA.
        \{\href{mailto:karanmah@mit.edu}{\texttt{\small karanmah}}, \href{mailto:mikerb@mit.edu}{\texttt{\small mikerb}}\}\texttt{\small @mit.edu}%
    }%
    \thanks{%
        $^4$Woods Hole Oceanographic Institution, Woods Hole,
        MA 02543, USA.
    }%
    \thanks{%
        $^5$Department of Electrical Engineering and Computer Sciences,
        University of California, Berkeley,
        Berkeley, CA, USA.
        \href{mailto:krasowski@mit.edu}{\texttt{\small krasowski@berkeley.edu}}%
    }%
    \thanks{%
        This research was conducted during Florian Lercher's research stay at Massachusetts Institute of Technology in Prof. Zardini's group.
    }
}

\maketitle

\begin{abstract}
    When robots are deployed in the real-world, they must interact with other agents and accomplish their own goals while adhering to legal constraints.
These potentially conflicting objectives are often only partially prioritized, making scalarized objectives or fully specified hierarchies unsuitable.
We present a framework for multi-agent motion planning under legal requirements based on games with partially ordered preferences (posetal games).
Scenario-independent legal guidelines order abstract legal concerns, from which we automatically derive agent preferences and a legal game for any concrete scenario.
Agents may refine their legal preferences with private objectives, and regulators may refine the guideline as laws evolve.
We prove that such customizations can only remove equilibria and thus preserve legality.
To coordinate agents without explicitly communicating private preferences, we introduce belief play, which provably recovers the true equilibria from conservative local games, and a distributed iterative best-response protocol requiring less information.
We apply the framework to maritime navigation, systematically lifting pairwise traffic rules to multi-vessel encounters.
On seven encounters with up to four vessels, every rule violation of the committed equilibria is certified as forced by the legal hierarchy or unsatisfiable, and defectors are identified from public information in all tested cases.
These results indicate that guideline-based posetal games offer an auditable route from partially specified laws to multi-agent motion planning.

\end{abstract}

\begin{IEEEkeywords}
    Motion and path planning, multi-robot systems, game theory, formal methods, marine robotics, legal compliance
\end{IEEEkeywords}

\tikzexternaldisable
\section{Introduction}
\label{sec:introduction}

Robots that share roads, waterways, or airspace with other agents must plan their motion subject to many potentially conflicting objectives.
These objectives often include legal requirements imposed by regulators, which take precedence over individual objectives.
Since not all objectives can be satisfied simultaneously in general, \emph{minimum-violation} planning prioritizes them and systematically violates lower-priority objectives when necessary to satisfy higher-priority ones~\cite{tumovaLeastviolatingControlStrategy2013,9483174}.
Legally compliant multi-agent motion planning is nevertheless difficult for at least three reasons.
First, some objectives are incomparable, in the sense that one cannot assert that either objective is more important than the other.
This is particularly relevant for legal regulations, where authorities typically specify only a partial priority order~\cite{censiLiabilityEthicsCultureAware2019}.
Second, many traffic laws are stated as rules between pairs of agents, so the obligations of an agent depend on which other agents are present, and obligations toward different agents might conflict (\cref{fig:conflict_encounter}).
Third, in multi-agent settings, agents respond strategically to the actions of other agents~\cite{zanardiPosetalGamesEfficiency2022}, whose private objectives are typically unknown.
Taken together, these obstacles call for a framework that derives the preferences of the agents from the law for every scenario, preserves the partial priorities of the law, and coordinates the agents without requiring them to disclose their private objectives.

\begin{figure}[t]
      \centering
\begin{tikzpicture}[scale=0.45]
    \vessel{(0,2)}{0}{TUMBlue}{$V_0$}{ego}
    \vessel{(-8,2)}{-10}{TUMOrange}{$V_1$}{over}
    \vessel{(7,-3)}{90}{TUMGreen}{$V_2$}{cross}

    \draw [-latex', thick, TUMGreen!60] (7,-0.5) -- (7,1.5);
    \draw (-5.5,1.6) edge[out=-10,in=180,-latex', thick, TUMOrange!60] (-0.5,-1.65);
    
    \draw [-latex', thick, dashed, TUMBlue!60] (2.5,2) -- (4.5,2);
    \draw (2.5,2) edge[out=0,in=180,-latex', thick, dashed, TUMBlue!60] (4.5,0);
    \node at (4.5, 1) {\textbf{?}};
\end{tikzpicture}
      \caption{%
            A vessel encounter with conflicting \colregs rules for $V_0$.
            The vessel is required to keep its current course and speed for the overtaking vessel $V_1$ and, simultaneously, give way to the crossing vessel $V_2$ by evading to starboard~\cite[Rules~13 and 15]{internationalmaritimeorganizationimoCOLREGsConventionInternational1972}.
      }
      \label{fig:conflict_encounter}
\end{figure}

A particularly important instance of this problem is motion planning for autonomous vessels in the maritime domain.
The International Code of Safety for Maritime Autonomous Ships~\cite{internationalmaritimeorganizationimoInternationalCodeSafety2026}, adopted in 2026, requires autonomous vessels to comply with the maritime traffic rules laid out in the Convention on the International Regulations for Preventing Collisions at Sea (\colregs)~\cite{internationalmaritimeorganizationimoCOLREGsConventionInternational1972}.
However, the \colregs are only defined for pairs of vessels which may lead to conflicting requirements in encounters involving more than two vessels, as shown in \cref{fig:conflict_encounter}.
While the \colregs prioritize collision avoidance over strict compliance with the encounter rules~\cite[Rule~2]{internationalmaritimeorganizationimoCOLREGsConventionInternational1972}, they do not specify which rules should be violated when conflicts arise.

Considering these challenges, we identify \emph{auditability}, \emph{customizability}, and \emph{deployability} as key requirements for a multi-agent motion planning algorithm under legal constraints.
\emph{Auditability} requires agents to exhibit predictable behavior and provide explanations for their actions, particularly when not all conflicting objectives can be satisfied.
\emph{Customizability} enables regulators to revise legal priorities and manufacturers to refine priorities that laws leave unspecified.
Moreover, any \emph{deployable} algorithm has to allow agents to coordinate despite their limited knowledge of the private objectives of other agents.
To meet these requirements, we present, to the best of our knowledge, the first systematic method for automatically constructing legally compliant motion-planning games from scenario-independent legal guidelines.
Our framework represents objective hierarchies as partially ordered preferences, resulting in \emph{posetal games}~\cite{zanardiPosetalGamesEfficiency2022}, for which we provide solution protocols that operate under limited information.
The key idea is to separate what the law prescribes from what a scenario instantiates.
Regulators specify a \emph{guideline}, i.e., a partial order over abstract legal concerns such as collision avoidance, and every concrete scenario inherits the preferences of its agents by matching their legal metrics to these concerns.
Since every customization of the resulting preferences is a refinement, and refining preferences can only remove equilibria, legality is preserved by construction.
We demonstrate the practical applicability of the framework through a case study of \colregs-compliant motion planning for multi-vessel encounters.

\subsection{Related Work}
\label{subsec:related_work}

Our work lies at the intersection of formal specifications for robotics, game-theoretic motion planning with ordered objectives, and COLREGS-compliant maritime navigation.
Since formal specifications provide an interpretable representation of complex objectives and serve as the building blocks of posetal games, we first review their use in robotics, followed by specification-based approaches to multi-agent motion planning and related game-theoretic formulations with ordered preferences.
Finally, we discuss existing approaches for \colregs-compliant multi-vessel motion planning to provide context for our case study.

\subsubsection{Specifications for Robotic Interactions}

Formal specifications, initially used primarily for offline monitoring and runtime verification, have become core components of learning, synthesis, and control methods~\cite{vasileReactiveSamplingbasedPath2020,Meng2023,SrinivasanCoogan2021-TL2CBF-TRO,pek2023spatial,deshmukh2017robust,Li2017,Jiang_Bharadwaj_Wu_Shah_Topcu_Stone_2021,Hasanbeig2019}.
They provide interpretable representations of objectives and constraints and, in multi-agent settings, requirements governing interactions among agents.
For example, autonomous-driving systems must satisfy specifications ranging from lane-keeping requirements to appropriate responses to emergency vehicles~\cite{althoffNoMoreTraffic2026,mehdipourFormalMethodsComply2023}.
In unstructured environments, interaction requirements are commonly expressed through pairwise rules or fixed priority structures, which favor concise representations at the expense of completeness.
For instance, the airborne collision-avoidance system ACAS-X determines priorities for pairwise aircraft encounters and relies on heuristics to combine the resulting evaluations~\cite{kochenderfer2012next-ACASX,Jeannin2015-ACASX-verify}.
Even in this pairwise setting, the precomputed lookup tables are sufficiently large to motivate dedicated compression methods~\cite{julian2019deepACASX}.
This scalability challenge is further amplified in domains such as warehouse navigation and maritime robotics, where agents and interactions are more diverse~\cite{Hang2019-MAPF,Li2020-warehouse,krasowskiProvableTrafficRule2024,heistand2024approach}.
Rulebooks~\cite{censiLiabilityEthicsCultureAware2019,wongpiromsarnFormalSpecificationControl2026} reduce the need for exhaustive specification by prioritizing objectives and permitting lower-priority objectives to be violated when necessary to satisfy higher-priority ones.
Our framework builds on this principle and extends it to multiple strategic agents whose obligations depend on the scenario.

\subsubsection{Motion Planning with Formal Objectives}

Motion-planning methods based on formal objectives generally adopt one of two perspectives: single-agent planning, in which other agents are treated as uncooperative, or multi-agent planning, in which varying degrees of cooperation are assumed.
In single-agent settings, the behavior of surrounding agents is typically modeled using worst-case assumptions, reflecting minimal or no cooperation~\cite{koschiSetBasedPredictionTraffic2021,krasowskiProvableTrafficRule2024}.
Moreover, formal safety specifications are assigned higher priority than performance objectives to guarantee specification compliance by the ego agent~\cite{manzingerUsingReachableSets2021,Halder2026-lexicographicplanning,krasowskiProvableTrafficRule2024}.

In multi-agent settings, game-theoretic motion planning naturally models strategic interactions between agents.
Most existing approaches require the objectives of each agent to be expressed as a scalar cost function, \eg, \cite{drevesGeneralizedNashEquilibrium2018,fridovich-keilIterativeQuadraticMethod2020,wangGameTheoreticPlanningSelfDriving2021,bhattStrategicDecisionMakingMultiagent2025}.
However, scalarization can obscure the relative priorities among objectives so that the resulting behavior becomes difficult to interpret and predict.

\subsubsection{Games with Ordered Objectives}
\label{subsubsec:games_with_ordered_objectives}

To increase interpretability when rich preference structures are present, game-theoretic formulations that allow for enforcing orders of preferences have been proposed. The simplest structure is a two-level hierarchy, common when constraints, such as safety requirements, have strict priority over the performance objective.
For example, in urban driving games~\cite{zanardiUrbanDrivingGames2021}, a shared collision-avoidance objective takes strict priority over individual costs.
These personal costs may only depend on the personal action of the agent, a limitation that is lifted in \cite{millerMultiagentMotionPlanning2022} by incorporating zero-sum components.
Games of ordered preferences~\cite{leeYouCantAlways2025} further generalize the objective hierarchy by allowing an arbitrary number of levels, corresponding to a total order over objectives.
An iterative best-response solver for these games is presented in~\cite{delasherasmolinsApproximateSolutionsGames2025}.
However, requiring a total order over legal requirements places a significant burden on regulators.
Posetal games~\cite{zanardiPosetalGamesEfficiency2022} relax this assumption by enabling objectives to remain incomparable, \ie they admit a partial order over the objectives.
Thus, they support the most general preference structure among the methods considered here.

Although the methods in~\cite{zanardiUrbanDrivingGames2021,millerMultiagentMotionPlanning2022,zanardiPosetalGamesEfficiency2022,leeYouCantAlways2025,delasherasmolinsApproximateSolutionsGames2025} compute Nash equilibria, their practical deployment remains challenging.
In particular, they assume complete knowledge of the preferences of all involved agents.
Moreover, they do not provide a mechanism for translating general legal requirements into the preferences required to model a specific scenario.

\subsubsection{\colregs-Compliant Multi-Vessel Motion Planning}
\label{subsubsec:colregs_compliant_multi_vessel_motion_planning}

The \colregs have been translated into formal specifications to provide a precise, machine-interpretable representation~\cite{krasowskiTemporalLogicFormalization2021,torbenAutomaticSimulationbasedTesting2023,pedersenFormalizingTestingCollision2025}.
These formalizations support the monitoring, testing, and synthesis of behavior in pairwise vessel encounters.
However, because the \colregs primarily specify pairwise obligations, they do not explicitly prescribe how to prioritize conflicting obligations arising from simultaneous encounters with multiple vessels, as illustrated in \cref{fig:conflict_encounter}.

Specification-based approaches to maritime navigation have predominantly adopted an ego-centric perspective in which the behavior of surrounding vessels is monitored or modeled rather than strategically coordinated.
Examples include real-world monitoring of \colregs encounter situations for collision avoidance~\cite{dietrich2026stayingspecrealtimemonitoring} and reactive vessel simulations in which the first detected conflict determines the executed maneuver~\cite{krasowskiIntelligentSailingModel2026}.
Such methods can enforce or evaluate the obligations of an individual vessel, but do not explicitly model the potentially heterogeneous preferences of multiple interacting vessels.

Existing multi-agent approaches commonly employ centralized game-theoretic methods or homogeneous reinforcement-learning controllers with scalarized objectives~\cite{Lin2025-MARL-Maritime,Cui2026-MARL-And-Nonreactive,choEfficientCOLREGCompliantCollision2022,weiCOLREGsCompliantMultiShipCollision2022,grgicevicCentralisedDecisionSupport2025}.
However, scalarization obscures the priority structure among objectives, and centralized or homogeneous formulations generally cannot accommodate private, agent-specific preferences.
To the best of our knowledge, our framework is the first to systematically lift the pairwise \colregs rules to encounters involving an arbitrary number of vessels while preserving an interpretable hierarchy of legal objectives.
It further supports agents with private preferences and solution protocols that operate without requiring those preferences to be disclosed.

\subsection{Contributions}
\label{subsec:contributions}

We present a framework for multi-agent motion planning under legal requirements with partially ordered objectives, which automates the translation of practical scenarios into the theory of posetal games~\cite{zanardiPosetalGamesEfficiency2022}.
Our contributions are fourfold.
\begin{itemize}
      \item \textbf{Systematic construction of legal posetal games.}
            We introduce \emph{guidelines} to represent priorities among legal concerns independently of the concrete scenario.
            
            From the legal guideline, we automatically derive the partially ordered preferences of the agents and the corresponding multi-agent game for any concrete scenario.
            This removes the need to specify preferences by hand for every scenario, which is a prerequisite for using posetal games in practice.
      \item \textbf{Customization with legality guarantees.}
            We enable agents to incorporate private objectives and regulators to refine legal guidelines, and we prove that such refinements can only restrict the set of legal behaviors rather than introduce new ones.
            This sharply contrasts with scalarization methods (\eg, \cite{drevesGeneralizedNashEquilibrium2018,fridovich-keilIterativeQuadraticMethod2020,wangGameTheoreticPlanningSelfDriving2021,bhattStrategicDecisionMakingMultiagent2025}), in which changing even a single weight can alter the resulting behavior in ways that are to predict and audit.
      \item \textbf{Deployable solution protocols under private preferences.}
            We develop two protocols that avoid the complete-information assumption of prior work~\cite{zanardiUrbanDrivingGames2021,millerMultiagentMotionPlanning2022,zanardiPosetalGamesEfficiency2022,leeYouCantAlways2025,delasherasmolinsApproximateSolutionsGames2025}. We prove that \emph{belief play}, which relies on conservative local beliefs, recovers exactly the equilibria of the true game and remains legal when interrupted. The second protocol, distributed iterative best-response, requires even less information about other agents.
      \item \textbf{Auditable \colregs-compliant multi-vessel motion planning.}
            We instantiate our framework for autonomous vessels and lift the pairwise COLREGS rules to encounters with any number of vessels.
            We evaluate our framework empirically on seven encounters involving two to four vessels. %
            Contrary to previous approaches~\cite{Lin2025-MARL-Maritime,Cui2026-MARL-And-Nonreactive,choEfficientCOLREGCompliantCollision2022,weiCOLREGsCompliantMultiShipCollision2022,grgicevicCentralisedDecisionSupport2025}, we can audit every rule violation and certify it as either forced by conflicting legal requirements of higher or incomparable priority or unsatisfiable within the action space.
            Moreover, defectors can be identified from public information alone.
\end{itemize}

This paper is organized as follows:
In \cref{sec:background}, we provide preliminaries and our problem statement.
We present our framework for deriving posetal games from legal requirements in \cref{sec:guideline_games} before explaining our solution protocols for the resulting games in \cref{sec:solution_protocols}.
In \cref{sec:case_study}, we instantiate our framework for \colregs-compliant multi-vessel motion planning and evaluate it in \cref{sec:evaluation}.
Finally, \cref{sec:conclusion} draws a conclusion and discusses future work, and \cref{app:preference_proofs} contains the proofs of the preference refinement operations.

\section{Preliminaries and Problem Statement}
\label{sec:background}

In this section, we recall the necessary background on games with partially ordered preferences and then state the problem addressed in this paper.

\subsection{Games with Partially Ordered Preferences}
\label{subsec:posetal_games}

In classical games, every agent possesses a scalar utility function to evaluate the outcomes of the game.
Games with partially ordered preferences, or \emph{posetal games}, are more general in that they allow each agent to evaluate an outcome \wrt multiple metrics that are prioritized according to a partial order~\cite{zanardiPosetalGamesEfficiency2022}.
We first introduce partially ordered sets and preferences to formalize this evaluation and then define posetal games.

\subsubsection{Partially Ordered Sets}
\label{subsubsec:posets}

Let $\set{P}$ be a set.
A \emph{pre-order} $\preorderPreceq$ over $\set{P}$ is a reflexive and transitive relation over $\set{P}$.
If a pre-order is also antisymmetric, we call it a \emph{partial order} and write $\posetPreceq$.
For convenience, we use $p \posetPreceq p'$ interchangeably with $\langle p, p' \rangle \in {\posetPreceq}$, and analogously for pre-orders.
In both pre-orders and partial orders, we denote the \emph{strict} relation with $\posetPrec$, where $p \posetPrec p'$ if and only if $p \preorderPreceq p'$ ($p \posetPreceq p'$) but $p' \not\preorderPreceq p$ ($p' \not\posetPreceq p$).
Observe that the strict relation is asymmetric and transitive.
A \emph{partially ordered set} or \emph{poset} is a pair $\poset{P} := \langle \set{P}, \posetPreceq_{\poset{P}} \rangle$ where $\set{P}$ is the ground set and $\posetPreceq_{\poset{P}}$ is a partial order over $\set{P}$.
We can visualize a poset using its \emph{Hasse diagram} as shown in \cref{fig:hasse_diagram}.

\begin{figure}
    \centering
    \begin{tikzcd}[row sep=tiny,column sep=tiny]
    & p_1 \ar[dl,dash] \ar[dr,dash] & \\
    p_2 \ar[dr,dash] & & p_3 \ar[dl,dash] \\
    & p_4 &
\end{tikzcd}
    \caption{%
        The Hasse diagram of the poset $\poset{P} = \langle \{ p_1, p_2, p_3, p_4 \}, \posetPreceq_{\poset{P}} \rangle$ with $p_4 \posetPrec_{\poset{P}} p_2 \posetPrec_{\poset{P}} p_1$ and $p_4 \posetPrec_{\poset{P}} p_3 \posetPrec_{\poset{P}} p_1$;
        $p_2$ and $p_3$ are incomparable.
    }
    \label{fig:hasse_diagram}
\end{figure}

\subsubsection{Preferences}
\label{subsubsec:preferences}

Let $\actionSpace$ be a set representing an \emph{action space}.
A \emph{metric} $m : \actionSpace \rightarrow \mathbb{R}$ maps every action $\actionProfile \in \actionSpace$ to a utility $m(\actionProfile) \in \mathbb{R}$.\footnote{While the utility value could come from any pre-ordered set, we restrict ourselves to $\mathbb{R}$ to simplify the presentation.}
We allow metrics to be named so that the metrics $m^n$ and $m^{n'}$ may be distinct even if they represent the same underlying function; that is, $m^n \neq m^{n'}$ even if $m^n(\actionProfile) = m^{n'}(\actionProfile)$ for every $\actionProfile \in \actionSpace$.
Then, a \emph{preference} $\preference := \langle \metrics, \posetPreceq_{\preference} \rangle$ is a finite poset of metrics, where $m \posetPreceq_{\preference} m'$ indicates that $m'$ is at least as important as $m$.
As explained in \cite{censiLiabilityEthicsCultureAware2019,zanardiPosetalGamesEfficiency2022}, a preference induces a pre-order $\actionPreceq[\preference]$ over the action space $\actionSpace$:
We have $\actionProfile \actionPreceq[\preference] \actionProfile'$ if and only if~\cite[Def.~5]{censiLiabilityEthicsCultureAware2019}
\begin{equation}
    \label{eq:action_order}
    \begin{aligned}
         & \forall m \in \metrics : m(\actionProfile) > m(\actionProfile') \implies                                             \\
         & \quad \exists m' \in \metrics : m' \posetSucc_{\preference} m \text{ and } m'(\actionProfile) < m'(\actionProfile').
    \end{aligned}
\end{equation}
In other words, $\actionProfile'$ is at least as preferable as $\actionProfile$ if and only if each metric in which $\actionProfile'$ performs worse than $\actionProfile$ is dominated by a more important metric in which $\actionProfile'$ performs better.
\Cref{fig:example_robots_in_corridor} illustrates the induced order on actions using a simple example.

\begin{figure}[tb]
    \centering
    \begin{tikzpicture}
    \pgfdeclarelayer{op_area}
    \pgfdeclarelayer{trajectories}
    \pgfsetlayers{op_area,trajectories,main}

    \tikzset{trajectory/.style={thick,-latex'}}

    \begin{scope}[shift={(0, 0)}, rotate=-90]
        \robot[TUMBlue]{r1}
        \begin{pgfonlayer}{trajectories}
            \draw[trajectory, TUMBlue!60] (r1_front) -- +(0, 1.25);
            \draw[trajectory, TUMBlue!60] (r1_front) to[out=90, in=-90] +(1.4, 1.25);
            \draw[trajectory, TUMBlue!60] (r1_front) to[out=90, in=-90] +(0.7, 1.25);
            \draw[trajectory, TUMBlue!60] (r1_front) to[out=90, in=-90] +(-0.7, 1.25);
        \end{pgfonlayer}
        \node[font=\footnotesize] at ($(r1_front) + (-0.7, 1.45)$) {$a$};
        \node[font=\footnotesize] at ($(r1_front) + (0, 1.45)$) {$b$};
        \node[font=\footnotesize] at ($(r1_front) + (0.7, 1.45)$) {$c$};
        \node[font=\footnotesize] at ($(r1_front) + (1.4, 1.45)$) {$d$};
    \end{scope}
    \begin{scope}[shift={(4, 0)}, rotate=90]
        \robot[TUMOrange]{r2}
        \begin{pgfonlayer}{trajectories}
            \draw[trajectory, TUMOrange!60] (r2_front) -- +(0, 1.25);
        \end{pgfonlayer}
    \end{scope}

    \begin{pgfonlayer}{op_area}
        \draw[ultra thick, TUMMediumGray] (-0.5, 1.1) -- (4.5, 1.1);
        \draw[ultra thick, TUMMediumGray] (-0.5, -1.1) -- (4.5, -1.1);
    \end{pgfonlayer}

    \coordinate (preference) at (-2.5, -0.5) {};

    \node[
        cloud callout,
        thick,
        draw=TUMBlue,
        fill=TUMBlue!20,
        callout absolute pointer={($(r1_center) + (-0.6, 0.25)$)},
        aspect=2,
        inner sep=14pt,
    ] at ($(preference) + (0.15, 0)$) {};

    \begin{scope}[font=\small, shift={(preference)}]
        \node[font=\small] (coll) at (0, 0.3) {collision};
        \node[font=\small] (area) at (-0.6, -0.3) {area\vphantom{l}};
        \node[font=\small] (clear) at (0.6, -0.3) {clearance};

        \draw (area) -- (coll);
        \draw (clear) -- (coll);
    \end{scope}
\end{tikzpicture}%
    \caption{%
        Two robots meet head-on in a narrow corridor.
        By regulation, robots have to avoid collisions, keep a sufficient clearance from other robots, and remain inside the operational area marked by the bold lines.
        Collision avoidance is the highest-priority objective, and the other two are incomparable.
        We take the perspective of the left robot and assume the trajectory of the right robot is fixed.
        Trajectories $a$, $c$, and $d$ are \emph{preferred} over $b$, since $b$ causes a collision;
        $a$ and $c$ are \emph{indifferent}, as they have the same insufficient clearance but stay in the operational area;
        $d$ is \emph{incomparable} to $a$ and $c$, as it achieves sufficient clearance by leaving the operational area (adapted from \cite[Fig.~2]{zanardiPosetalGamesEfficiency2022}).
    }
    \label{fig:example_robots_in_corridor}
\end{figure}

Some preferences are more specific than others in the sense that they induce a finer ordering on the action space.
Formally, we say that a preference $\preference'$ is a \emph{refinement} of $\preference$ if and only if
\begin{equation*}
    \actionProfile \actionPrec[\preference] \actionProfile' \implies \actionProfile \actionPrec[\preference'] \actionProfile'
\end{equation*}
for any pair of actions $\actionProfile, \actionProfile' \in \actionSpace$~\cite[Def.~6]{zanardiPosetalGamesEfficiency2022}.
In words, every strict preference between two actions under~$\preference$ persists under~$\preference'$.
For a fixed action space, refinement is a pre-order on preferences, which we denote $\refinedby$:
We write $\preference \refinedby \preference'$ to mean that $\preference'$ refines $\preference$, ordering preferences from coarsest to finest.

We introduce three operations that we can use to refine preferences~\cite[Sec.~II-C]{zanardiPosetalGamesEfficiency2022}.
In \cref{app:preference_proofs}, we prove that these operations indeed produce valid refinements.
First, we can add a priority relation, or equivalently \enquote{add an edge}, to the partial order over metrics.
\begin{proposition}[Preference priority refinement~{\cite[Def.~7]{zanardiPosetalGamesEfficiency2022}}]
    \label{prop:preference_priority_refinement}
    Let $\preference = \langle \metrics, \posetPreceq_{\preference} \rangle$ be a preference.
    Any preference $\preference' = \langle \metrics, \posetPreceq_{\preference'} \rangle$ with ${\posetPreceq_{\preference}} \subseteq {\posetPreceq_{\preference'}}$ is a refinement of $\preference$.
\end{proposition}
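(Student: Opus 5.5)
The plan is to unfold the definition of refinement. Fix actions $\actionProfile, \actionProfile' \in \actionSpace$ with $\actionProfile \actionPrec[\preference] \actionProfile'$, that is, $\actionProfile \actionPreceq[\preference] \actionProfile'$ and $\actionProfile' \not\actionPreceq[\preference] \actionProfile$. We must show two things: (i) $\actionProfile \actionPreceq[\preference'] \actionProfile'$, and (ii) $\actionProfile' \not\actionPreceq[\preference'] \actionProfile$.

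For (i), the first step is to show that strict dominance between metrics transfers from $\preference$ to $\preference'$. Suppose $m' \posetSucc_{\preference} m$. Then $m \posetPreceq_{\preference} m'$ and $m \neq m'$, the latter by antisymmetry of $\posetPreceq_{\preference}$. The inclusion ${\posetPreceq_{\preference}} \subseteq {\posetPreceq_{\preference'}}$ gives $m \posetPreceq_{\preference'} m'$. Antisymmetry of $\posetPreceq_{\preference'}$ together with $m \neq m'$ then rules out $m' \posetPreceq_{\preference'} m$, so $m' \posetSucc_{\preference'} m$. Consequently every witness required by \cref{eq:action_order} for $\actionProfile \actionPreceq[\preference] \actionProfile'$ is also a witness under $\preference'$, which yields (i).

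Step (ii) is the main obstacle, because adding edges can create new dominance relations. These could in principle make $\actionProfile' \actionPreceq[\preference'] \actionProfile$ true. I would argue by contradiction using finiteness of $\metrics$. Let $D^+ = \{ m \in \metrics : m(\actionProfile') > m(\actionProfile) \}$ and $D^- = \{ m \in \metrics : m(\actionProfile') < m(\actionProfile) \}$. Since $\actionProfile' \not\actionPreceq[\preference] \actionProfile$, \cref{eq:action_order} must fail for some metric, so $D^+ \neq \emptyset$. Now assume $\actionProfile' \actionPreceq[\preference'] \actionProfile$. Then every $m \in D^+$ is strictly $\preference'$-dominated by some metric in $D^-$. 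By $\actionProfile \actionPreceq[\preference] \actionProfile'$, every $m \in D^-$ is strictly $\preference$-dominated by some metric in $D^+$, and by the transfer shown in (i) this dominance is also strict in $\preference'$.

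Starting from any $m_0 \in D^+$, alternating these two facts produces a sequence $m_0 \posetPrec_{\preference'} m_1 \posetPrec_{\preference'} m_2 \posetPrec_{\preference'} \cdots$ whose terms alternate between $D^+$ and $D^-$. Since the strict relation is transitive and asymmetric, all terms are pairwise distinct. This is impossible in the finite set $\metrics$, so $\actionProfile' \not\actionPreceq[\preference'] \actionProfile$. Combining (i) and (ii) gives $\actionProfile \actionPrec[\preference'] \actionProfile'$, so $\preference'$ refines $\preference$.
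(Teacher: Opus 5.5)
Your proof is correct and follows essentially the same route as the paper: weak preference transfers because strict dominance among metrics survives the added edges, and strictness survives because it is witnessed by a metric on which the two actions differ, a condition that does not depend on the order. Your step (ii) is an inline re-proof of the paper's indifference lemma (\cref{lem:action_indifference}) applied to $\preference'$, using an infinite alternating chain where the paper picks a maximal differing metric. Your explicit antisymmetry check in (i) spells out a step the paper only asserts.
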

Next, we can aggregate two incomparable metrics.
\begin{proposition}[Preference aggregation~{\cite[Def.~8]{zanardiPosetalGamesEfficiency2022}}]
    \label{prop:preference_aggregation}
    Let $\preference = \langle \metrics, \posetPreceq_{\preference} \rangle$ be a preference and $m_1, m_2 \in \metrics$ be two incomparable metrics.
    Further, suppose $\hat{m}$ is a metric with $\hat{m}(\actionProfile) = f(m_1(\actionProfile), m_2(\actionProfile))$, where $f : \mathbb{R} \times \mathbb{R} \rightarrow \mathbb{R}$ is strictly increasing in both arguments.
    Let $\mathcal{N} := \metrics \setminus \{ m_1, m_2 \}$ and $\metrics' := \mathcal{N} \cup \{ \hat{m} \}$.
    Define a binary relation $R \subseteq \metrics' \times \metrics'$ by
    \begin{subequations}
        \label{eq:aggregation_order}
        \begin{alignat}{2}
            \label{subeq:aggregation_order:none}
            R :={} &          &  &
            \{ \langle m, m' \rangle \in {\posetPreceq_{\preference}} \mid m, m' \in \mathcal{N} \}                                                      \\
            \label{subeq:aggregation_order:prec}
                   & {}\cup{} &  & \{ \langle \hat{m}, m \rangle \mid m \in \mathcal{N}, \exists k \in \{ 1, 2 \} : m_k \posetPreceq_{\preference} m \}  \\
            \label{subeq:aggregation_order:succ}
                   & {}\cup{} &  & \{ \langle m, \hat{m} \rangle \mid m \in \mathcal{N}, \exists k \in \{ 1, 2 \} : m \posetPreceq_{\preference} m_k \}.
        \end{alignat}
    \end{subequations}
    Then, $\preference' = \langle \metrics', \posetPreceq_{\preference'} \rangle$ with $\posetPreceq_{\preference'} := R^*$ is a refinement of $\preference$, where $R^*$ is the reflexive-transitive hull of $R$.
    Intuitively, this means we replace $m_1$ and $m_2$ by $\hat{m}$ and relate $\hat{m}$ to all metrics that $m_1$ or $m_2$ were related to.
\end{proposition}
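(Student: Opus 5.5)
The proof has two parts. First, $R^*$ is a partial order on $\metrics'$, so that $\preference'$ is a preference. Second, $\preference'$ refines $\preference$. Throughout I use the incomparability of $m_1$ and $m_2$ and the finiteness of $\metrics$.

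\emph{Partial order.} Reflexivity and transitivity of $R^*$ hold by construction, so only antisymmetry needs proof. I would show that $R^*$ has no nontrivial cycle. Every $R$-edge lifts to a relation in $\posetPreceq_{\preference}$:
\begin{itemize}
\item an edge \eqref{subeq:aggregation_order:none} is itself a relation of $\preference$;
\item an edge $\hat{m} \to m$ from \eqref{subeq:aggregation_order:prec} lifts to $m_k \posetPreceq_{\preference} m$;
\item an edge $m \to \hat{m}$ from \eqref{subeq:aggregation_order:succ} lifts to $m \posetPreceq_{\preference} m_l$.
\end{itemize}
Take a cycle in $R^*$. If it avoids $\hat{m}$, antisymmetry of $\preference$ collapses it. If it passes through $\hat{m}$, split it at each visit to $\hat{m}$. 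Each segment yields $m_k \posetPreceq_{\preference} \dots \posetPreceq_{\preference} m_l$ for some $k, l \in \{1,2\}$. If $k \neq l$, this contradicts the incomparability of $m_1$ and $m_2$. If $k = l$, the intermediate elements $m \in \mathcal{N}$ satisfy $m_k \posetPreceq_{\preference} m \posetPreceq_{\preference} m_k$, hence $m = m_k \notin \mathcal{N}$, which is again a contradiction.

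From this lifting I also record two facts, used below. First, every strict relation of $\preference$ among metrics of $\mathcal{N}$ stays strict in $\preference'$. Second, $m_k \posetPrec_{\preference} m$ with $m \in \mathcal{N}$ gives $\hat{m} \posetPrec_{\preference'} m$, and symmetrically.

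\emph{Refinement.} Fix $\actionProfile \actionPrec[\preference] \actionProfile'$. Write $W$ for the metrics of $\metrics$ in which $\actionProfile'$ is worse than $\actionProfile$, and $B$ for those in which it is better; define $W'$ and $B'$ analogously for $\metrics'$. The key tool is a maximal-witness lemma, which holds because $\metrics$ is finite. If $\actionProfile \actionPreceq[\preference] \actionProfile'$ and $w \in W$, then there is $b \in B$ with $b \posetSucc_{\preference} w$ and with no element of $W$ strictly above $b$. To see this, take $b$ maximal in $(B \cup W) \cap \{ m \mid m \posetSucc_{\preference} w \}$. If $b$ were in $W$, then \eqref{eq:action_order} would supply an element of $B$ above it, contradicting maximality.

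To show $\actionProfile \actionPreceq[\preference'] \actionProfile'$, take $m \in W'$ and split into cases.
\begin{itemize}
\item \emph{$m \in \mathcal{N}$ and its maximal witness $b$ lies in $\mathcal{N}$.} The strict relation $b \posetSucc_{\preference} m$ transfers directly to $\preference'$.
\item \emph{$m \in \mathcal{N}$ and $b = m_1$ (symmetrically $m_2$).} If $m_2 \notin W$, then $\hat{m} \in B'$ by strict monotonicity of $f$, and $\hat{m} \posetSucc_{\preference'} m$, which suffices. If $m_2 \in W$, then $m_2$ has a maximal witness $b'$. This $b'$ is neither $m_1$ (incomparable to $m_2$) nor $m_2$, so $b' \in \mathcal{N}$. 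Then $m \to \hat{m} \to b'$ in $R$, and this chain is strict by the partial-order part.
\item \emph{$m = \hat{m}$.} Since $f$ is increasing, at least one of $m_1, m_2$ lies in $W$. Its maximal witness is in $\mathcal{N}$ and lies strictly above $\hat{m}$ in $\preference'$.
\end{itemize}

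For strictness, I must exclude $\actionProfile' \actionPreceq[\preference'] \actionProfile$. From $\actionProfile' \not\actionPreceq[\preference] \actionProfile$ there is some $b \in B$ with no element of $W$ strictly above it. I would pick such a $b$ maximal in $B \cup W$ and show that it, or $\hat{m}$ when $b \in \{m_1, m_2\}$, is an element of $B'$ with no element of $W'$ strictly above it in $\preference'$.

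\emph{Main obstacle.} This last step is the hardest. $R^*$ can create new relations in $\mathcal{N}$ through $\hat{m}$: from $m \posetPreceq_{\preference} m_1$ and $m_2 \posetPreceq_{\preference} m''$ we get $m \posetPreceq_{\preference'} m''$ without $m \posetPreceq_{\preference} m''$. So I must check that these new relations do not place a $W'$ element above the chosen $b$. I would handle this by first taking $b$ maximal among all of $B \cup W$. I would then argue that any $W'$ element above $b$ in $\preference'$ lifts to a chain through $m_1$ or $m_2$. That chain, combined with the case analysis on whether $m_1$ and $m_2$ lie in $B$ or $W$, contradicts either maximality or incomparability.
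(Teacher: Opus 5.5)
\textbf{Gap: the strictness step.} Your antisymmetry argument for $R^*$ is correct, and it covers a step the paper leaves implicit, although the strictness of the $\hat{m}$-edges depends on it. Your case analysis for $\gamma \precsim_{\mathsf{P}'} \gamma'$ is also correct and is essentially the paper's argument. The gap is the strictness step. You only sketch it, and the sketch fails.

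A witness $b$ chosen maximal in $B \cup W$ with respect to $\preceq_{\mathsf{P}}$ need not stay undominated in $\mathsf{P}'$. The new relations through $\hat{m}$ can pass through $m_1$ and $m_2$ even when both are tied, so neither maximality nor incomparability is contradicted. Concretely, take metrics $b, m_1, m_2, w, c$ ordered by $b \prec m_1$ and $m_2 \prec w \prec c$. Let $\gamma'$ be better than $\gamma$ on $b$ and $c$, worse on $w$, and tied on $m_1$ and $m_2$. Then $\gamma \prec_{\mathsf{P}} \gamma'$, and $b$ is maximal in $B \cup W = \{b, w, c\}$. Yet $\preceq_{\mathsf{P}'}$ is the chain $b \prec \hat{m} \prec w \prec c$, so $w \in W'$ lies strictly above $b$, and only $c$ is a valid witness.

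Your substitution of $\hat{m}$ for $b \in \{m_1, m_2\}$ fails as well. Suppose $m_1$ improves slightly and $m_2$ worsens a lot, with $m_2$ dominated only by an improving metric $c \succ m_2$ unrelated to $m_1$. Then $m_1$ is maximal in $B \cup W$, but for $f$ the sum we get $\hat{m} \in W'$.

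\textbf{Fix.} Choose the witness in $\mathsf{P}'$ instead of transporting one from $\mathsf{P}$. Once $\gamma \precsim_{\mathsf{P}'} \gamma'$ is established, take any $\preceq_{\mathsf{P}'}$-maximal element of $B' \cup W'$. It must lie in $B'$, since an element of $W'$ would have an element of $B'$ strictly above it. By maximality, no element of $W'$ lies strictly above it. So $\gamma' \not\precsim_{\mathsf{P}'} \gamma$ reduces to showing $B' \cup W' \neq \emptyset$. This is the paper's route through its indifference lemma, which characterizes mutual $\precsim$ as agreement on every metric.

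Nonemptiness is easy. Suppose $\gamma$ and $\gamma'$ agreed on all of $\mathcal{N}$. Then neither $m_1$ nor $m_2$ can be worse at $\gamma'$, because its dominating improving metric would lie in $\mathcal{N}$ by incomparability. Since $\gamma \prec_{\mathsf{P}} \gamma'$, one of $m_1, m_2$ must still differ, so $\hat{m}$ strictly improves. With this replacement your proof is complete, and your antisymmetry argument makes it more careful than the paper's.
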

Finally, we can compose two disjoint preferences by appending one below the other.
This generalizes the operation from \cite[Def.~10]{zanardiPosetalGamesEfficiency2022}, which only allows adding a single metric.
\begin{proposition}[Preference augmentation]
    \label{prop:preference_augmentation}
    For $k \in \{ 1, 2 \}$, let $\preference_k = \langle \metrics_k, \posetPreceq_{\preference_k} \rangle$ be preferences so that $\metrics_1 \cap \metrics_2 = \emptyset$.
    The preference $\preference = \langle \metrics, \posetPreceq_{\preference} \rangle$ with $\metrics := \metrics_1 \cup \metrics_2$ and
    \begin{equation*}
        {\posetPreceq_{\preference}} := {\posetPreceq_{\preference_1}} \cup {\posetPreceq_{\preference_2}} \cup \{ \langle m_2, m_1 \rangle \mid m_1 \in \metrics_1, m_2 \in \metrics_2 \}
    \end{equation*}
    is a refinement of $\preference_1$.
\end{proposition}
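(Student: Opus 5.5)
The plan is to verify two things in turn. First, $\posetPreceq_{\preference}$ is a partial order, so that $\preference$ is a valid preference. Second, every strict relation $\actionProfile \actionPrec[\preference_1] \actionProfile'$ persists as $\actionProfile \actionPrec[\preference] \actionProfile'$. Both rest on two structural facts about $\posetPreceq_{\preference}$, which I establish first.

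\textbf{Fact (a).} The restriction of $\posetPreceq_{\preference}$ to $\metrics_1$ is exactly $\posetPreceq_{\preference_1}$. The same holds for $\metrics_2$ and $\posetPreceq_{\preference_2}$.

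\textbf{Fact (b).} No pair of the form $\langle m_1, m_2 \rangle$ with $m_1 \in \metrics_1$ and $m_2 \in \metrics_2$ lies in $\posetPreceq_{\preference}$.

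Both facts follow directly from the definition, because $\metrics_1 \cap \metrics_2 = \emptyset$ and the only cross pairs added point from $\metrics_2$ up to $\metrics_1$.

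\textbf{Partial order.} Reflexivity is inherited from $\posetPreceq_{\preference_1}$ and $\posetPreceq_{\preference_2}$. For antisymmetry, a pair of opposite relations either lies inside one $\metrics_k$, where the corresponding partial order applies, or is a cross pair, which Fact (b) excludes. For transitivity, I split on where the three metrics $m \posetPreceq_{\preference} m' \posetPreceq_{\preference} m''$ live. If all three lie in the same $\metrics_k$, the claim is transitivity of $\posetPreceq_{\preference_k}$. If $m \in \metrics_2$ and $m'' \in \metrics_1$, the cross pair gives $m \posetPreceq_{\preference} m''$ directly. By Fact (b), the remaining configurations cannot occur.

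Consequently, $m' \posetSucc_{\preference} m$ with $m, m' \in \metrics_1$ holds iff $m' \posetSucc_{\preference_1} m$. Moreover, every $m_1 \in \metrics_1$ strictly dominates every $m_2 \in \metrics_2$. Finally, nothing in $\metrics_2$ strictly dominates anything in $\metrics_1$.

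\textbf{Refinement.} Assume $\actionProfile \actionPrec[\preference_1] \actionProfile'$. I first show $\actionProfile \actionPreceq[\preference] \actionProfile'$ via \cref{eq:action_order}, taking a metric $m$ with $m(\actionProfile) > m(\actionProfile')$.
\begin{itemize}
  \item If $m \in \metrics_1$, then $\actionProfile \actionPreceq[\preference_1] \actionProfile'$ supplies $m' \posetSucc_{\preference_1} m$ with $m'(\actionProfile) < m'(\actionProfile')$. By Fact (a), also $m' \posetSucc_{\preference} m$.
  \item If $m \in \metrics_2$, I use $\actionProfile' \not\actionPreceq[\preference_1] \actionProfile$. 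By \cref{eq:action_order}, this yields some $\hat m \in \metrics_1$ with $\hat m(\actionProfile') > \hat m(\actionProfile)$, that is, $\hat m(\actionProfile) < \hat m(\actionProfile')$. Since $\hat m \posetSucc_{\preference} m$, it serves as the required witness.
\end{itemize}

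Next I show $\actionProfile' \not\actionPreceq[\preference] \actionProfile$. Negating \cref{eq:action_order} for $\preference_1$ gives some $m \in \metrics_1$ with $m(\actionProfile') > m(\actionProfile)$, such that no $m' \posetSucc_{\preference_1} m$ satisfies $m'(\actionProfile') < m'(\actionProfile)$. In $\preference$, the metrics strictly above $m$ are exactly those strictly above it in $\preference_1$. This uses Fact (a) and the fact that no element of $\metrics_2$ lies above $m$. Hence the same $m$ witnesses the failure of \cref{eq:action_order} for $\preference$.

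\textbf{Main obstacle.} The main care point is the case $m \in \metrics_2$ in the first half. There the dominating witness must be extracted from the strictness $\actionProfile' \not\actionPreceq[\preference_1] \actionProfile$ rather than from $\actionProfile \actionPreceq[\preference_1] \actionProfile'$. Beyond that, it suffices to track precisely that strictness in $\preference$ restricted to $\metrics_1$ coincides with strictness in $\preference_1$.
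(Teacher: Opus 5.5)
Your proof is correct, and its core matches the paper's. A worsened metric in $\metrics_1$ keeps its $\preference_1$-witness. A worsened metric in $\metrics_2$ is dominated by a metric of $\metrics_1$ on which $\actionProfile'$ strictly beats $\actionProfile$, and such a metric exists by strictness.

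The difference lies in how you conclude strictness in $\preference$. The paper invokes \cref{lem:action_indifference}, so that \eqref{eq:action_order} together with a separating metric already yields $\actionProfile \actionPrec[\preference] \actionProfile'$. That lemma relies on the finiteness of $\metrics$, since it picks a maximal differing metric. Its advantage is that it is reused for all three refinement propositions, which keeps each individual proof to a few lines.

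You instead show $\actionProfile' \not\actionPreceq[\preference] \actionProfile$ directly. You transport the witness of $\actionProfile' \not\actionPreceq[\preference_1] \actionProfile$, using that the strict up-set of any metric of $\metrics_1$ is the same in $\preference$ as in $\preference_1$. This makes your argument self-contained and free of any finiteness assumption.

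You also check that $\posetPreceq_{\preference}$ is a partial order, which the paper takes for granted in the statement. Your version is therefore slightly more complete, at the cost of some length.
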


In words, the metrics of~$\preference_2$ only break ties between actions that are equal in every metric of~$\preference_1$, so that no gain in $\preference_2$ can compensate for a loss in~$\preference_1$.

\subsubsection{Posetal Games}
\label{subsubsec:posetal_games}

A \emph{posetal game}
\begin{equation*}
    \game := \langle \players, \{ \actionSpace[\egoIndex] \}_{\egoIndex \in \players}, \{ \preference[\egoIndex] \}_{\egoIndex \in \players} \rangle
\end{equation*}
consists of a set of \emph{agents} $\players$ and, for each agent $\egoIndex \in \players$, an \emph{action space} $\actionSpace[\egoIndex]$ and a \emph{preference} $\preference[\egoIndex]$~\cite[Def.~12]{zanardiPosetalGamesEfficiency2022}.
The joint action space of the game is the product $\actionSpace := \prod_{\egoIndex \in \players} \actionSpace[\egoIndex]$ of the individual action spaces, and the preferences $\preference[\egoIndex]$ are defined over this joint action space as in \cref{subsubsec:preferences}.
We refer to the elements of $\actionSpace$ as \emph{action profiles} to emphasize that they consist of an action for each agent.
Given an action profile $\actionProfile \in \actionSpace$, we write $\actionProfile[\egoIndex] \in \actionSpace[\egoIndex]$ for the action of agent $\egoIndex \in \players$ and $\actionProfile[-\egoIndex]$ for the joint action of the other agents $\otherPlayers := \players \setminus \{ \egoIndex \}$.
For any $\egoIndex \in \players$, we use the pair $\langle \actionProfile[\egoIndex], \actionProfile[-\egoIndex] \rangle$ and the action profile $\actionProfile$ interchangeably.

Using the pre-orders $\actionPreceq[\preference[\egoIndex]]$ induced by the preferences to compare action profiles, we can transfer classical game theory concepts.
Rational agents always strive to play their best response to the actions of the others.
Given a posetal game $\game$, a \emph{best response} of agent $\egoIndex \in \players$ to the joint action $\actionProfile[-\egoIndex] \in \actionSpace[-\egoIndex]$ of the other agents is any action from the set~\cite[Def.~13]{zanardiPosetalGamesEfficiency2022}
\begin{equation*}
    \bestResponse{\preference[\egoIndex]}(\actionProfile[-\egoIndex]) := \{ \actionProfile[\egoIndex] \in \actionSpace[\egoIndex] \mid \forall \actionProfile[\egoIndex]' \in \actionSpace[\egoIndex] : \langle \actionProfile[\egoIndex], \actionProfile[-\egoIndex] \rangle \not\actionPrec[\preference[\egoIndex]] \langle \actionProfile[\egoIndex]', \actionProfile[-\egoIndex] \rangle \}.
\end{equation*}
An action profile $\actionProfile \in \actionSpace$ is a \emph{Nash equilibrium (\nasheq)} of $\game$ if and only if every agent plays a best response, \ie~\cite[Def.~14]{zanardiPosetalGamesEfficiency2022}
\begin{equation*}
    \forall \egoIndex \in \players : \actionProfile[\egoIndex] \in \bestResponse{\preference[\egoIndex]}(\actionProfile[-\egoIndex]).
\end{equation*}
Thus, at \nasheq, no agent can unilaterally deviate to strictly improve its outcome.
We denote the set of \nasheqs of $\game$ as $\nes{\game}$.

We can show that refining the preferences of the agents results in filtering the \nasheqs of the game.
\begin{proposition}[Equilibrium refinement~{\cite[Theorem~28]{zanardiPosetalGamesEfficiency2022}}]
    \label{prop:equilibrium_refinement}
    Suppose $\game := \langle \players, \{ \actionSpace[\egoIndex] \}_{\egoIndex \in \players}, \{ \preference[\egoIndex] \}_{\egoIndex \in \players} \rangle$ is a posetal game.
    Consider a second game $\game'$ that is equal to $\game$ except that every agent uses a refined preference, \ie $\game' = \langle \players, \{ \actionSpace[\egoIndex] \}_{\egoIndex \in \players}, \{ \preference[\egoIndex]' \}_{\egoIndex \in \players} \rangle$, where $\preference[\egoIndex]' \refines \preference[\egoIndex]$ for all $\egoIndex \in \players$.
    It holds $\nes{\game} \supseteq \nes{\game'}$.
\end{proposition}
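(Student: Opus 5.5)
The argument is a direct unfolding of the definitions of best response, Nash equilibrium, and refinement. I would prove the contrapositive at the level of best responses and then lift it to equilibria. Concretely, the key intermediate claim is the following: for every agent $\egoIndex \in \players$ and every joint action $\actionProfile[-\egoIndex] \in \actionSpace[-\egoIndex]$ of the others, we have $\bestResponse{\preference[\egoIndex]'}(\actionProfile[-\egoIndex]) \subseteq \bestResponse{\preference[\egoIndex]}(\actionProfile[-\egoIndex])$. Once this holds, the statement follows at once. If $\actionProfile \in \nes{\game'}$, then every agent plays $\actionProfile[\egoIndex] \in \bestResponse{\preference[\egoIndex]'}(\actionProfile[-\egoIndex])$. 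Hence $\actionProfile[\egoIndex] \in \bestResponse{\preference[\egoIndex]}(\actionProfile[-\egoIndex])$ for all $\egoIndex$, so $\actionProfile \in \nes{\game}$. This step uses only the fact that both games share the same agents and action spaces.

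To prove the best-response inclusion, I would fix $\egoIndex$ and $\actionProfile[-\egoIndex]$ and take $\actionProfile[\egoIndex] \notin \bestResponse{\preference[\egoIndex]}(\actionProfile[-\egoIndex])$.
\begin{enumerate}
\item By the definition of best response, there exists $\actionProfile[\egoIndex]' \in \actionSpace[\egoIndex]$ with $\langle \actionProfile[\egoIndex], \actionProfile[-\egoIndex] \rangle \actionPrec[\preference[\egoIndex]] \langle \actionProfile[\egoIndex]', \actionProfile[-\egoIndex] \rangle$.
\item Since $\preference[\egoIndex]' \refines \preference[\egoIndex]$, the definition of refinement says that every strict relation under $\preference[\egoIndex]$ persists under $\preference[\egoIndex]'$. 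Therefore $\langle \actionProfile[\egoIndex], \actionProfile[-\egoIndex] \rangle \actionPrec[\preference[\egoIndex]'] \langle \actionProfile[\egoIndex]', \actionProfile[-\egoIndex] \rangle$.
\item The same witness $\actionProfile[\egoIndex]'$ then shows $\actionProfile[\egoIndex] \notin \bestResponse{\preference[\egoIndex]'}(\actionProfile[-\egoIndex])$.
\end{enumerate}
Taking the contrapositive gives the inclusion.

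There is essentially no hard step here. The only point needing care is that refinement is defined on the joint action space $\actionSpace$. The comparison in the best-response definition is between two full action profiles that differ only in agent $\egoIndex$'s component, so the refinement implication applies to them verbatim.

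It is also worth noting, without needing it, that the argument uses only the strict parts of the induced pre-orders. This is exactly the part that refinement controls, so no assumption on the structure of the metric posets or on \cref{eq:action_order} is needed beyond its inducing a pre-order.
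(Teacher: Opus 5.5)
Your proof is correct. You show $\bestResponse{\preference[\egoIndex]'}(\actionProfile[-\egoIndex]) \subseteq \bestResponse{\preference[\egoIndex]}(\actionProfile[-\egoIndex])$ by carrying a strictly improving unilateral deviation through the definition of refinement, then lift this agent-wise to equilibria; this is the direct argument behind the result, which the paper does not reprove but cites from \cite[Theorem~28]{zanardiPosetalGamesEfficiency2022}.
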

\Cref{prop:equilibrium_refinement} states that the set of \nasheqs depends \emph{antitonically} on the preferences: refining the preferences of the agents can only remove joint behaviors, but never introduce new ones.

\subsection{Problem Statement}
\label{subsec:problem_statement}

Let $\scenario$ be a \emph{scenario} describing the involved agents $\players$ with their action spaces $\{ \actionSpace[\egoIndex] \}_{\egoIndex \in \players}$ and the environment.
To coordinate the agents in case multiple equilibria exist, we assume that the law provides a \emph{social metric}~$\socialMetric$, which maps action profiles to a totally ordered set and which the selected equilibrium should optimize.
We are now ready to state the problem.

\begin{problem}[Legal motion planning under private preferences]
\label{prob:first}
Given a scenario~$\scenario$ and the applicable legal requirements, (i) automatically derive preferences~$\{ \preference[\egoIndex] \}_{\egoIndex \in \players}$ that encode the legal requirements and strictly prioritize them over the individual objectives of the agents, which defines a posetal game~$\game$, and (ii) determine the socially optimal Nash equilibrium
\begin{equation}
    \label{eq:socially_optimal_true_ne}
    \actionProfile^* = \argmax_{\actionProfile \in \nes{\game}}(\socialMetric(\actionProfile)),
\end{equation}
using a distributed protocol that all agents follow and that does not require the agents to disclose their preferences.
\end{problem}
Ties in \eqref{eq:socially_optimal_true_ne} are broken by a fixed public rule, so that~$\actionProfile^*$ is unique.
Typically, the individual goals of each agent are private or hard to communicate so that the true preference of an agent is unknown to the others.
This is why part (ii) requires a distributed protocol rather than a centralized solver. 
We address part (i) in \cref{sec:guideline_games}, where~$\game$ becomes the true game~$\trueGame$, and part (ii) in \cref{sec:solution_protocols}.

\section{From Legal Requirements to Posetal Games}
\label{sec:guideline_games}

\begin{figure}[tb]
    \centering
    \begin{tikzpicture}
    \tikzset{trajectory/.style={-latex'}}
    \tikzset{op_area/.style={thick, TUMMediumGray}}
    \tikzset{node/.style={draw,rounded corners=2pt}}
    \tikzset{link/.style={-latex'}}

    \newcommand{\robotWithTrajs}[4][TUMBlack]{%
        \begin{scope}[shift={#3}, rotate=#4, trajectory/.append style={#1!60}]
            \robot[#1]{#2}
            \begin{scope}[on background layer]
                \draw[trajectory] (#2_front) -- +(0, 1);
                \draw[trajectory] (#2_front) to[out=90, in=-90] +(0.4, 1);
                \draw[trajectory] (#2_front) to[out=90, in=-90] +(-0.4, 1);
            \end{scope}
        \end{scope}
    }

    \begin{scope}[local bounding box=scenarios]
        \begin{scope}[shift={(0, -0.25)}, scale=0.4]
            \robotWithTrajs[TUMBlue]{r1}{(0, 0)}{-90}
            \robotWithTrajs[TUMOrange]{r2}{(3.5, 0)}{90}
            
            \draw[op_area] ($(r1_center) + (-0.5, 1.1)$) -- ($(r2_center) + (0.5, 1.1)$);
            \draw[op_area] ($(r1_center) + (-0.5, -1.1)$) -- ($(r2_center) + (0.5, -1.1)$);
        \end{scope}
        \begin{scope}[shift={(0, -1.75)}, scale=0.4]
            \robotWithTrajs[TUMBlue]{r1}{(0, 0)}{-90}
            \robotWithTrajs[TUMOrange]{r2}{(3.5, 0)}{90}
            \robotWithTrajs[TUMGreen]{r3}{(1.75, -2)}{0}
            
            \draw[op_area] ($(r1_center) + (-0.5, 1.1)$) -- ($(r2_center) + (0.5, 1.1)$);
            \draw[op_area] ($(r1_center) + (-0.5, -1.1)$) -| ($(r3_center) + (-1.1, -0.5)$);
            \draw[op_area] ($(r2_center) + (0.5, -1.1)$) -| ($(r3_center) + (1.1, -0.5)$);
        \end{scope}
    \end{scope}

    \NewDocumentCommand{\prefOne}{sm}{%
        \node[
            cloud,
            draw=TUMBlue,
            fill=TUMBlue!20,
            aspect=2,
            inner sep=6pt,
        ] (s1_1#2) at (0, -0.1) {};
        \begin{scope}[font=\scriptsize, shift={(s1_1#2)}]
            \node (#2_p1m1) at (0, 0.15) {$m_1$};
            \node (#2_p1m2) at (-0.25, -0.15) {$m_2$};
            \node (#2_p1m3) at (0.25, -0.15) {$m_3$};

            \IfBooleanT{#1}{
                \draw (#2_p1m1) -- (#2_p1m2);
                \draw (#2_p1m1) -- (#2_p1m3);
            }
        \end{scope}
        \node[
            cloud,
            draw=TUMOrange,
            fill=TUMOrange!20,
            aspect=2,
            inner sep=0pt,
        ] (s1_2#2) at ($(s1_1#2) + (0.75, -0.5)$) {\dots};
    }

    \NewDocumentCommand{\prefTwo}{sm}{%
        \node[
            cloud,
            draw=TUMBlue,
            fill=TUMBlue!20,
            aspect=1.8,
            inner sep=10pt,
        ] (s2_1#2) at (0, -0.1) {};
        \begin{scope}[font=\scriptsize, shift={(s2_1#2)}]
            \node (#2_p2m1) at (-0.25, 0.3) {$m'_1$};
            \node (#2_p2m2) at (0.25, 0.3) {$m'_2$};
            \node (#2_p2m3) at (-0.5, -0.3) {$m'_3$};
            \node (#2_p2m4) at (0, -0.3) {$m'_4$};
            \node (#2_p2m5) at (0.5, -0.3) {$m'_5$};

            \IfBooleanT{#1}{
                \draw (#2_p2m1) -- (#2_p2m3);
                \draw (#2_p2m1) -- (#2_p2m4);
                \draw (#2_p2m1) -- (#2_p2m5);
                \draw (#2_p2m2) -- (#2_p2m3);
                \draw (#2_p2m2) -- (#2_p2m4);
                \draw (#2_p2m2) -- (#2_p2m5);
            }
        \end{scope}
        \node[
            cloud,
            draw=TUMOrange,
            fill=TUMOrange!20,
            aspect=2,
            inner sep=0pt,
        ] (s2_2#2) at ($(s2_1#2) + (0.75, -0.8)$) {\dots};
        \node[
            cloud,
            draw=TUMGreen,
            fill=TUMGreen!20,
            aspect=2,
            inner sep=0pt,
        ] (s2_3#2) at ($(s2_1#2) + (-0.75, -0.8)$) {\dots};
    }

    \begin{scope}[local bounding box=metrics]
        \begin{scope}[shift={(3.7, 0)}, local bounding box=metrics1]
            \prefOne{metrics}
        \end{scope}
        \begin{scope}[shift={(3.7, -1.75)}, local bounding box=metrics2]
            \prefTwo{metrics}
        \end{scope}
    \end{scope}

    \begin{scope}[local bounding box=preferences]
        \begin{scope}[shift={(7, 0)}, local bounding box=preferences1]
            \prefOne*{pref}
        \end{scope}
        \begin{scope}[shift={(7, -1.75)}, local bounding box=preferences2]
            \prefTwo*{pref}
        \end{scope}
    \end{scope}

    \coordinate (label_y) at ($(metrics.north) + (0, 0.4)$) {};
    \node[font=\footnotesize] at (scenarios.north |- label_y) {Scenario};
    \node[font=\footnotesize] at (metrics.north |- label_y) {Legal metrics};
    \node[font=\footnotesize] at (preferences.north |- label_y) {Legal preferences};

    \coordinate (dots_y) at ($(metrics.south) + (0, -0.25)$) {};

    \coordinate (ruleset_x) at ($(scenarios.east)!0.5!(metrics.west)$) {};
    \node[node, font=\footnotesize] (ruleset) at (ruleset_x |- label_y) {$\Phi$};
    \draw[densely dotted] (ruleset) -- (ruleset |- dots_y);

    \coordinate (guideline_x) at ($(metrics.east)!0.5!(preferences.west)$) {};
    \node[node, font=\footnotesize] (guideline) at (guideline_x |- label_y) {$\guideline$};
    \draw[densely dotted] (guideline) -- (guideline |- dots_y);

    \coordinate (scenarios_east) at ($(scenarios.east) + (0.15, 0)$) {};
    \coordinate (metrics_west) at ($(metrics.west) + (-0.15, 0)$) {};
    \coordinate (metrics_east) at ($(metrics.east) + (0.15, 0)$) {};
    \coordinate (preferences_west) at ($(preferences.west) + (-0.15, 0)$) {};

    \draw[link] (scenarios_east |- metrics1.west) -- (metrics_west |- metrics1.west);
    \draw[fill] (ruleset |- metrics1.west) circle[radius=1pt];
    \draw[link] (scenarios_east |- metrics2.west) -- (metrics_west |- metrics2.west);
    \draw[fill] (ruleset |- metrics2.west) circle[radius=1pt];

    \draw[link] (metrics_east |- metrics1.east) -- (preferences_west |- metrics1.east);
    \draw[fill] (guideline |- metrics1.east) circle[radius=1pt];
    \draw[link] (metrics_east |- metrics2.east) -- (preferences_west |- metrics2.east);
    \draw[fill] (guideline |- metrics2.east) circle[radius=1pt];
\end{tikzpicture}
    \caption{%
        Deriving legal preferences from a generic rule set $\Phi$ and a scenario-independent guideline $\guideline$.
        For any concrete scenario,~$\Phi$ yields the set of metrics that each agent is legally required to consider, denoted by~$m_k$ and~$m'_k$.
        The guideline~$\guideline$ induces the legal preference for an agent over its legal metrics.
        The color of the clouds indicates the agent to which a set of metrics or a preference belongs.
    }
    \label{fig:scenario_to_legal_preference}
\end{figure}

In this section, we address part (i) of \cref{prob:first} and present a process that transforms laws and a given scenario into a posetal game whose solutions meet legal requirements.
In \cref{subsec:legally_required_metrics}, we introduce legal metrics to quantify the legal concerns and explain how to derive these metrics from generic rules.
As shown in \cref{fig:scenario_to_legal_preference}, the relevant legal metrics depend on the agents and interactions present in a particular scenario so that manually specifying a preference for every possible scenario is infeasible.
\Cref{subsec:guidelines} therefore introduces guidelines to allow regulators to specify a scenario-independent partial order over legal concerns, such as collision avoidance.
From a guideline, we can automatically derive a preference over the legal metrics instantiated in any concrete scenario, as shown in \cref{fig:scenario_to_legal_preference}.
We leverage these preferences in \cref{subsec:deriving_motion_planning_game} to construct a posetal game modeling the motion planning problem that incorporates individual goals of the agents and different implementations of the legal requirements.
Moreover, we prove that the solutions of this game result in legal behavior.
Finally, \cref{subsec:handling_regulatory_change} discusses how the framework handles regulatory change.

\subsection{Legally Required Metrics}
\label{subsec:legally_required_metrics}

Legal requirements typically include pairwise rules that the ego agent must follow \wrt every other agent.
This is particularly common in traffic applications, including road~\cite{unitednationsunConventionRoadTraffic1968}, sea~\cite{internationalmaritimeorganizationimoCOLREGsConventionInternational1972}, and air~\cite{internationalcivilaviationorganizationicaoAnnex2Rules2005} traffic.
Many existing formalizations express each pairwise rule as a generic temporal logic specification $\varphi$ parameterized by the two involved agents, \eg, \cite{esterleFormalizingTrafficRules2020,maierhoferFormalizationInterstateTraffic2020,maierhoferFormalizationIntersectionTraffic2022,sunLawBreakerApproachSpecifying2023} for road traffic and \cite{krasowskiTemporalLogicFormalization2021,krasowskiProvableTrafficRule2024,torbenAutomaticSimulationbasedTesting2023,pedersenFormalizingTestingCollision2025} for sea traffic.
Temporal logics are well suited, as they can constrain the temporal evolution of a trajectory, \eg, the temporal operator $\globally$ demands that a property \emph{always} holds.
While the standard semantics of temporal logic are Boolean, indicating only whether a trajectory satisfies the specification, \emph{robustness measures}~\cite{fainekosRobustnessTemporalLogic2009} provide quantitative semantics.
The sign of the real-valued robustness value indicates satisfaction or violation and its magnitude reflects the corresponding margin.
The literature provides many robustness measures (see, \eg, \cite[Sec.~2.2]{bartocciSpecificationBasedMonitoringCyberPhysical2018} for a survey);
the most suitable choice depends on the particular rule being formalized~\cite{finkeldeiEnhancedTrafficRule2025}.

Robustness measures enable us to turn a temporal logic rule set into the numerical metrics required by our framework.
Let $\Phi$ be a set of generic temporal logic specifications that apply pairwise and $\scenario$ be a scenario.
For $\varphi \in \Phi$, let $\varphi_{i, j}$ denote the instantiation of $\varphi$ with the agents $i, j \in \players$.
The \emph{legal} metrics for $\egoIndex \in \players$ in $\scenario$ are
\begin{align}
    \nonumber
    \legalMetrics{\egoIndex}                                & := \{ m_{\varphi_{\egoIndex, \obstacleIndex}} \mid \varphi \in \Phi, \obstacleIndex \in \otherPlayers \},\text{ where} \\
    \label{eq:metric_from_robustness}
    m_{\varphi_{\egoIndex, \obstacleIndex}}(\actionProfile) & := \min(\robustness{\varphi_{\egoIndex, \obstacleIndex}}{\actionProfile}, 0)
\end{align}
for $\actionProfile \in \actionSpace$.
Here, $\robustness{\varphi_{\egoIndex, \obstacleIndex}}{\actionProfile}$ denotes the robustness of $\varphi_{\egoIndex, \obstacleIndex}$ for $\actionProfile$.
We cap the robustness at $0$ so that we penalize stronger violations more without distinguishing among satisfying behaviors (see \cite[Sec.~IV-A]{wongpiromsarnFormalSpecificationControl2026}).
This allows agents to optimize their individual objectives as long as they meet the legal requirements, instead of demanding maximally robust satisfaction of the rules.
Rules that apply only to the ego agent (e.g., speed limits) are handled analogously, with one metric per rule instead of one metric per contact agent.

We emphasize that leveraging temporal logic robustness is only one way to derive metrics from legal requirements.
At its core, our framework only requires a function that maps a scenario $\scenario$ and an agent $\egoIndex \in \players$ to the set of legal metrics $\legalMetrics{\egoIndex}$.
It therefore remains compatible with alternative methods, including hand-crafted numerical compliance scores such as \cite{woernerQuantifyingProtocolEvaluation2019}.

\subsection{Guidelines}
\label{subsec:guidelines}

To represent abstract legal concerns, we introduce a set of \emph{labels} $\metricTags$, each capturing a single aspect of the law.
For our example in \cref{fig:example_robots_in_corridor}, the labels could include $\collTag$ for collision avoidance and $\highVulnerabilityTag$ and $\lowVulnerabilityTag$ for the vulnerability of a robot.
A \emph{concern} $\concern \in \powerset{\metricTags}$ combines several labels to model a more specific aspect of the law.
To associate concrete metrics with concerns, we label each metric $m$ with a concern $\concernOfMetric{m}$, and say that it \emph{addresses} every less specific concern $\concern \subseteq \concernOfMetric{m}$.
In our example, a collision metric may carry the label $\collTag$ together with a label describing the vulnerability of the other robot, \eg, $\highVulnerabilityTag$.
This metric addresses both the general collision concern $\{ \collTag \}$ and the more specific concern $\{ \collTag, \highVulnerabilityTag \}$ that only captures collisions with highly vulnerable robots.

To model the partial order over abstract legal concerns, we introduce guidelines.
\begin{definition}[Guideline]
\label{def:guideline}
A \emph{guideline} is a poset~$\guideline := \langle \concerns_{\guideline},\posetPreceq_{\guideline} \rangle$,
where $\concerns_{\guideline}$ is a set of concerns and $\posetPreceq_{\guideline}$ specifies their relative priorities.
\end{definition}
Our representation of concerns as sets of labels provides a modular mechanism for expressing priorities at different levels of specificity.
In our example, we could start with a coarse guideline that prioritizes general collision avoidance through the concern $\{ \collTag \}$.
If vulnerability later becomes relevant to the legal priority, we can specialize the concern into
$\{ \collTag, \highVulnerabilityTag \}$ and
$\{ \collTag, \lowVulnerabilityTag \}$, which we can then order to prioritize avoiding collisions with highly vulnerable robots.
A guideline induces a preference over any given set of metrics.

\begin{definition}[Induced preference]
Given a set of metrics~$\metrics$, a guideline $\guideline = \langle \concerns_{\guideline}, \posetPreceq_{\guideline} \rangle$ induces the preference $\inducedPreference{\guideline} = \langle \metrics_{\guideline}, \posetPreceq_{\inducedPreference{\guideline}} \rangle$ with
\begin{equation}
    \label{eq:guideline_metrics}
    \metrics_{\guideline} := \{ m^\concern \mid \concern \in \concerns_{\guideline}, m \in \metrics \text{ s.t. } \concernOfMetric{m} \supseteq \concern \},
\end{equation}
and $m_1^{\concern_1} \posetPreceq_{\inducedPreference{\guideline}} m_2^{\concern_2}$ if and only if $m_1^{\concern_1} = m_2^{\concern_2}$ or $\concern_1 \posetPrec_{\guideline} \concern_2$.
In \eqref{eq:guideline_metrics}, $m^\concern$ is a renaming of the metric $m$ as explained in \cref{subsubsec:preferences}.
\end{definition}

Intuitively, the induced preference picks, for each concern in $\concerns_{\guideline}$, all metrics from $\metrics$ addressing that concern;
if a metric addresses multiple concerns, several copies of it appear in the preference.
We group the metrics according to the concern they were selected for.
The groups are prioritized according to $\posetPreceq_{\guideline}$, while metrics in the same group are incomparable.
\Cref{fig:induced_preference} shows an example.
\begin{figure}[tb]
    \centering
    \begin{tikzpicture}
    \tikzset{every node/.style={font=\footnotesize}}
    \tikzset{metric/.style={inner sep=1.5pt}}
    \tikzset{labels/.style={inner sep=1.5pt, anchor=west}}
    \tikzset{group/.style={draw=none, fill=#1!40, rounded corners=2pt, inner sep=1.5pt}}
    \tikzset{addresses/.style={-latex', thick, #1}}
    \tikzset{caption/.style={font=\scriptsize, anchor=north}}

    \def\rowdist{0.55cm}

    \node[labels] (m1) at (0, 0) {$\concernOfMetric{m_1} = \{ \metricTag{coll}, \metricTag{lv} \}$};
    \node[labels] (m2) at ($(m1.west) + (0, -\rowdist)$) {$\concernOfMetric{m_2} = \{ \metricTag{coll}, \metricTag{hv} \}$};
    \node[labels] (m3) at ($(m2.west) + (0, -\rowdist)$) {$\concernOfMetric{m_3} = \{ \metricTag{clear}, \metricTag{hv} \}$};
    \node[labels, TUMMediumGray] (m4) at ($(m3.west) + (0, -\rowdist)$) {$\concernOfMetric{m_4} = \{ \metricTag{effort} \}$};

    \coordinate (metrics_east) at (m3.east |- m1);
    \node[TUMBlue] (c1) at ($(metrics_east) + (1.9, -0.5 * \rowdist)$) {$\concern_1 = \{ \metricTag{coll} \}$};
    \node[TUMOrange] (c2) at ($(c1) + (0, -2 * \rowdist)$) {$\concern_2 = \{ \metricTag{hv} \}$};
    \draw (c1) -- (c2);

    \draw[addresses=TUMBlue] (m1.east) -- ($(c1.west) + (0, 0.05)$);
    \draw[addresses=TUMBlue] ($(m2.east) + (0, 0.05)$) -- ($(c1.west) + (0, -0.05)$);
    \draw[addresses=TUMOrange] ($(m2.east) + (0, -0.05)$) -- ($(c2.west) + (0, 0.05)$);
    \draw[addresses=TUMOrange] (m3.east) -- ($(c2.west) + (0, -0.05)$);
    \node[cross=3pt, draw=TUMMediumGray, thick] at ($(m4.east) + (0.3, 0)$) {};

    \node[metric] (m1c1) at ($(c1.east) + (1.25, 0)$) {$m_1^{\concern_1}$};
    \node[metric, right=0.2cm of m1c1] (m2c1) {$m_2^{\concern_1}$};
    \node[metric] (m2c2) at (m1c1 |- c2) {$m_2^{\concern_2}$};
    \node[metric] (m3c2) at (m2c1 |- c2) {$m_3^{\concern_2}$};

    \begin{scope}[on background layer]
        \node[group=TUMBlue,fit=(m1c1) (m2c1)] (g1) {};
        \node[group=TUMOrange,fit=(m2c2) (m3c2)] (g2) {};
    \end{scope}

    \draw (m1c1) -- (m2c2);
    \draw (m1c1) -- (m3c2);
    \draw (m2c1) -- (m2c2);
    \draw (m2c1) -- (m3c2);

    \coordinate (metrics_mid) at ($(m2)!0.5!(m3)$);
    \node[font=\Large] at ($(c1.east |- metrics_mid)!0.5!(g1.west |- metrics_mid)$) {$\mapsto$};

    \coordinate (baseline) at ($(m4.south) + (0, -0.05)$);
    \coordinate (metrics_mid) at ($(m1.west)!0.5!(metrics_east)$);
    \node[caption] at (metrics_mid |- baseline) {$\metrics$};
    \node[caption] at (c1 |- baseline) {$\guideline$};
    \node[caption] at (g1 |- baseline) {$\inducedPreference{\guideline}$};
\end{tikzpicture}
    \caption{%
        Inducing a preference from a guideline.
        Arrows indicate which concerns a metric addresses;
        colors indicate the concern a metric was selected for.
        Since $m_1$ addresses both concerns from $\guideline$, two copies of it appear in $\inducedPreference{\guideline}$; $m_4$ addresses no concern and is dropped.
        Metrics within a group are incomparable, and the groups are prioritized according to $\guideline$.
        We abbreviate the labels $\collTag$, $\highVulnerabilityTag$, and $\lowVulnerabilityTag$ as $\metricTag{coll}$, $\metricTag{hv}$, and $\metricTag{lv}$, respectively.
    }
    \label{fig:induced_preference}
\end{figure}
To prove that $\inducedPreference{\guideline}$ is well-defined, we need to show that $\posetPreceq_{\inducedPreference{\guideline}}$ is indeed a partial order.
\begin{lemma}[Well-definedness of induced preferences]
    \label{lem:induced_preference_well_defined}
    Given a guideline $\guideline = \langle \concerns_{\guideline}, \posetPreceq_{\guideline} \rangle$ and a set of metrics $\metrics$, the induced preference $\inducedPreference{\guideline} = \langle \metrics_{\guideline}, \posetPreceq_{\inducedPreference{\guideline}} \rangle$ is well-defined, \ie $\posetPreceq_{\inducedPreference{\guideline}}$ is a partial order.
\end{lemma}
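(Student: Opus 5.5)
We check reflexivity, antisymmetry, and transitivity of $\posetPreceq_{\inducedPreference{\guideline}}$ directly from its definition, using only that $\posetPrec_{\guideline}$ is the strict part of the partial order $\posetPreceq_{\guideline}$ and hence irreflexive, asymmetric, and transitive (as observed in \cref{subsubsec:posets}). One subtlety comes first. Each element of $\metrics_{\guideline}$ is a renamed metric $m^\concern$ that carries its concern $\concern$ as part of its name. So every element of $\metrics_{\guideline}$ determines a unique concern, and the condition $\concern_1 \posetPrec_{\guideline} \concern_2$ is a well-defined condition on pairs of elements. In particular, two copies $m^{\concern_1}$ and $m^{\concern_2}$ with $\concern_1 \neq \concern_2$ are distinct elements, even though they represent the same function.

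Reflexivity is immediate from the first disjunct, since $m^\concern = m^\concern$. For antisymmetry, suppose $m_1^{\concern_1} \posetPreceq m_2^{\concern_2}$ and $m_2^{\concern_2} \posetPreceq m_1^{\concern_1}$, and assume the two elements are distinct. Then both relations must come from the second disjunct, giving $\concern_1 \posetPrec_{\guideline} \concern_2$ and $\concern_2 \posetPrec_{\guideline} \concern_1$. This contradicts asymmetry of the strict relation, so the two elements are equal.

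For transitivity, take $m_1^{\concern_1} \posetPreceq m_2^{\concern_2} \posetPreceq m_3^{\concern_3}$ and split into cases according to which disjunct justifies each step.
\begin{itemize}
    \item If either step is an equality, we substitute and obtain the claim directly.
    \item Otherwise, $\concern_1 \posetPrec_{\guideline} \concern_2 \posetPrec_{\guideline} \concern_3$. Transitivity of the strict relation then gives $\concern_1 \posetPrec_{\guideline} \concern_3$, which is the second disjunct for $m_1^{\concern_1} \posetPreceq m_3^{\concern_3}$.
\end{itemize}

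The argument is routine, and the only real care needed is the well-definedness point raised at the start. Because the defining condition uses the strict order on concerns, copies of metrics selected for the same concern are incomparable, which is consistent with the intended grouping. The only question is whether it might be ambiguous which concern an element belongs to. The renaming convention from \cref{subsubsec:preferences} settles this: it ensures that the map $m^\concern \mapsto \concern$ is a function, so the relation is well-defined and no spurious antisymmetry violation arises between copies of the same underlying function.
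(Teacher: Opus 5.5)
Your proof is correct and matches the paper's argument: reflexivity by construction, antisymmetry from the asymmetry of $\posetPrec_{\guideline}$, and transitivity from the transitivity of $\posetPrec_{\guideline}$. Your explicit remark that the renaming convention makes $m^{\concern} \mapsto \concern$ a well-defined function is a point the paper leaves implicit.
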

\begin{proof}
    Observe that $\posetPreceq_{\inducedPreference{\guideline}}$ is reflexive by construction, and transitivity follows from the transitivity of~$\posetPrec_{\guideline}$.
    For antisymmetry, let $m_1^{\concern_1} \posetPreceq_{\inducedPreference{\guideline}} m_2^{\concern_2}$ with $m_1^{\concern_1} \neq m_2^{\concern_2}$.
    Thus, we must have $\concern_1 \posetPrec_{\guideline} \concern_2$.
    For $m_1^{\concern_1} \posetSucceq_{\inducedPreference{\guideline}} m_2^{\concern_2}$, we would need $\concern_1 \posetSucc_{\guideline} \concern_2$, contradicting the asymmetry of $\posetPrec_{\guideline}$. Hence,~$\posetPreceq_{\inducedPreference{\guideline}}$ is a partial order.
\end{proof}

The induced preference is inherently \emph{non-compensatory}.
Consider two metrics $m_1^{\concern_1}$ and $m_2^{\concern_2}$ for which the guideline $\guideline$ does not specify a clear priority, \ie neither $\concern_1 \posetPrec_{\guideline} \concern_2$ nor $\concern_1 \posetSucc_{\guideline} \concern_2$.
In the induced preference, $m_1^{\concern_1}$ and $m_2^{\concern_2}$ are incomparable so that a gain in one metric cannot compensate for a loss in the other;
instead, we accept every optimal trade-off between them.
In particular, action profiles $\actionProfile, \actionProfile' \in \actionSpace$ which differ only in that $m_1^{\concern_1}(\actionProfile) > m_1^{\concern_1}(\actionProfile')$ and $m_2^{\concern_2}(\actionProfile) < m_2^{\concern_2}(\actionProfile')$ remain incomparable according to \eqref{eq:action_order}.
Therefore, guidelines are well-suited to model the non-compensatory priorities of the law, while compensatory rules, \eg, aggregation in the sense of \cref{prop:preference_aggregation}, are reserved for subsequent private modeling choices of individual agents.

\begin{remark}[Good guideline design]
    \label{rem:good_guideline_design}
    Designing \emph{unambiguous} guidelines, \ie ensuring that each metric addresses at most one concern from $\concerns_{\guideline}$, improves the interpretability of guidelines and the induced preference.
    However, we do not impose unambiguity as a hard constraint since it would limit the expressiveness of the framework.
    Moreover, it is good practice to ensure that every legal metric addresses at least one concern so that all of them are included in the induced preference.
\end{remark}

\subsection{Deriving the Motion Planning Game}
\label{subsec:deriving_motion_planning_game}

Let us fix a scenario $\scenario$ and suppose the guideline $\legalGuideline$ captures the legal requirements.
Recall from \cref{subsec:legally_required_metrics} that $\legalMetrics{\egoIndex}$ denotes the legal metrics for the agent $\egoIndex \in \players$ in $\scenario$.
For each agent, we can automatically derive its \emph{legal} preference as $\inducedPreference[\legalMetrics{\egoIndex}]{\legalGuideline}$, \ie the preference that $\legalGuideline$ induces over its legal metrics.
Evaluating actions purely on legal grounds is unrealistic, since each agent typically also cares about performance metrics that measure how well it achieves its individual goals.
Moreover, in cases where the law leaves the priorities between metrics unspecified, different agents may implement the legal requirements differently.
Thus, we assume that each agent $\egoIndex \in \players$ possesses a \emph{true} preference $\preference[\egoIndex]$ that refines $\inducedPreference[\legalMetrics{\egoIndex}]{\legalGuideline}$.
This assumption allows agents to add objectives and priorities wherever the law is vague, but never to override a priority that the law prescribes.
It is the formal counterpart of the requirement that legal requirements take precedence over individual objectives.
Agents that violate this assumption are \emph{defectors}.
As we show in \cref{subsubsec:identifying_defectors}, a defector whose action is not a legal best response can be identified from public information alone.
In our example from \cref{fig:example_robots_in_corridor}, the robot may use \cref{prop:preference_augmentation} to include a performance metric measuring control effort below the legal preference.
Using the true preferences, we obtain the \emph{true} game
\begin{equation*}
    \trueGame := \langle \players, \{ \actionSpace[\egoIndex] \}_{\egoIndex \in \players}, \{ \preference[\egoIndex] \}_{\egoIndex \in \players} \rangle,
\end{equation*}
whose \nasheqs are incentive-stable solutions to the motion planning problem.

To characterize legality, we call an action profile $\actionProfile \in \actionSpace$ \emph{legal} if every agent chooses an action that minimally violates the law given the joint action of the others.
Note that the minimum-violation condition results in full compliance if the rules do not conflict.
Formally, this corresponds to $\actionProfile[\egoIndex] \in \bestResponse{\inducedPreference[\legalMetrics{\egoIndex}]{\legalGuideline}}(\actionProfile[-\egoIndex])$ for every agent $\egoIndex \in \players$.
Therefore, the set of all legal action profiles precisely corresponds to the \nasheqs of the \emph{legal} game
\begin{equation*}
    \legalGame := \langle \players, \{ \actionSpace[\egoIndex] \}_{\egoIndex \in \players}, \{ \inducedPreference[\legalMetrics{\egoIndex}]{\legalGuideline} \}_{\egoIndex \in \players} \rangle.
\end{equation*}
Consequently, we can prove that all solutions of $\trueGame$ result in legal behavior by applying \cref{prop:equilibrium_refinement}.
\begin{theorem}[Legality]
    \label{thm:true_game_is_legal}
    We have $\nes{\trueGame} \subseteq \nes{\legalGame}$.
    Specifically, the \nasheqs of $\trueGame$ are exactly those legal action profiles at which every agent plays a best response under its true preference, \ie
    \begin{equation*}
        \nes{\trueGame} = \{ \actionProfile \in \nes{\legalGame} \mid \forall \egoIndex \in \players : \actionProfile[\egoIndex] \in \bestResponse{\preference[\egoIndex]}(\actionProfile[- \egoIndex]) \}.
    \end{equation*}
\end{theorem}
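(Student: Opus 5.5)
The plan is to obtain the inclusion directly from \cref{prop:equilibrium_refinement} and then derive the set equality from the definition of \nasheqs together with that inclusion.

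First, I note that $\trueGame$ and $\legalGame$ share the agent set $\players$ and the action spaces $\{\actionSpace[\egoIndex]\}_{\egoIndex \in \players}$. The only difference is that agent $\egoIndex$ uses $\preference[\egoIndex]$ in $\trueGame$ and $\inducedPreference[\legalMetrics{\egoIndex}]{\legalGuideline}$ in $\legalGame$. By the standing assumption of \cref{subsec:deriving_motion_planning_game}, $\preference[\egoIndex] \refines \inducedPreference[\legalMetrics{\egoIndex}]{\legalGuideline}$ for every $\egoIndex \in \players$. Taking $\game := \legalGame$ and $\game' := \trueGame$ in \cref{prop:equilibrium_refinement} then yields $\nes{\trueGame} \subseteq \nes{\legalGame}$.

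For the characterization, I prove both inclusions. For ``$\subseteq$'': let $\actionProfile \in \nes{\trueGame}$. By the first part, $\actionProfile \in \nes{\legalGame}$. By the definition of a \nasheq of $\trueGame$, $\actionProfile[\egoIndex] \in \bestResponse{\preference[\egoIndex]}(\actionProfile[-\egoIndex])$ for all $\egoIndex$. Hence $\actionProfile$ lies in the right-hand set. For ``$\supseteq$'': any $\actionProfile$ in the right-hand set has every agent playing a best response under its true preference, which is exactly the definition of $\actionProfile \in \nes{\trueGame}$. The legality condition is not even needed for this direction.

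No step is a genuine obstacle, since the substantive work is contained in \cref{prop:equilibrium_refinement}. The only point needing care is checking that the hypotheses of that proposition match: the two games must have identical action spaces, and the preferences in $\trueGame$ must be defined over the same joint action space $\actionSpace$. Only then is refinement in the sense of $\refinedby$ meaningful. I would also remark that the equality shows the legality constraint is implied by, and hence redundant with, true best responses. It is stated to emphasize that the \nasheqs of $\trueGame$ are a filter of the legal profiles.
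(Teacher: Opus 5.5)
Your proposal is correct and follows essentially the same route as the paper: the inclusion comes from applying \cref{prop:equilibrium_refinement} with $\legalGame$ as the coarser game and $\trueGame$ as the refined one, and the set equality then follows from the definition of a \nasheq. Your explicit two-inclusion argument spells out what the paper leaves as a one-line remark, including your observation that legality is redundant for the reverse inclusion.
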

\begin{proof}
    The preference of each agent in $\trueGame$ is a refinement of its preference in $\legalGame$.
    Thus, \cref{prop:equilibrium_refinement} directly applies and yields $\nes{\trueGame} \subseteq \nes{\legalGame}$.
    The second statement then follows from the definition of a \nasheq.
\end{proof}

In words, private objectives can only select among legal behaviors and never turn an illegal action profile into an equilibrium.
Since \cref{prop:equilibrium_refinement} holds for any refinement, this guarantee does not depend on the particular private objectives of the agents, which can therefore remain undisclosed.

\subsection{Handling Regulatory Change}
\label{subsec:handling_regulatory_change}

Laws evolve over time, so our framework needs to accommodate regulatory change.
We focus on revisions that refine the specification of legal behavior so that previously illegal behavior remains forbidden.
Such revisions are practically relevant.
For example, a regulatory scoping exercise by the International Maritime Organization found that the current \colregs should keep as much of their content as possible when adapted for autonomous vessels~\cite[pp.~86--87]{internationalmaritimeorganizationimoOutcomeRegulatoryScoping2021}.

Formally, this means that the legal preference under the revised law should be a refinement of that under the original law.
Consequently, we can freely modify the set of legal metrics and their labels as long as the set of metrics \eqref{eq:guideline_metrics} picked by the guideline $\legalGuideline$ remains the same.
To enable changes of the guideline itself, we introduce the notion of guideline refinement.
\begin{definition}[Guideline refinement]
The guideline $\guideline'$ \emph{refines} $\guideline$, denoted $\guideline' \refines \guideline$, if its induced preference $\inducedPreference{\guideline'}$ is a refinement of $\inducedPreference{\guideline}$ for all sets of metrics $\metrics$ and label assignments.
\end{definition}

By construction, the induction map associating a guideline~$\guideline$ with its induced preference $\inducedPreference{\guideline}$ is therefore monotone in $\guideline$, \ie
\begin{equation*}
    \guideline \refinedby \guideline' \implies \inducedPreference{\guideline} \refinedby \inducedPreference{\guideline'},
\end{equation*}
for any set of metrics $\metrics$.
Composed with the antitone equilibrium map of \cref{prop:equilibrium_refinement}, refining a guideline can only shrink the set of legal equilibria.
The entire pipeline thus reduces to a single monotonicity statement, summarized in \cref{fig:induction_homomorphism}, in which every map preserves refinement except the equilibrium map, which reverses it.

Subsequently, we introduce three guideline refinement operations to support commonly required changes.
To prove their correctness, we reduce guideline refinements to the preference-level operations of \cref{subsec:posetal_games} so that we need not reason about action orders from scratch.
Beyond handling regulatory changes, the refinement operations also enable us construct the final legal guideline by composing simple and understandable guidelines covering different aspects of the law.

\begin{figure}[t]
    \centering
    \begin{tikzpicture}[node distance=9mm and 13mm]
    \tikzset{every node/.style={font=\footnotesize}}
    \tikzset{stage/.style={draw, rounded corners=2pt, minimum height=7mm, inner sep=3pt, align=center}}
    \tikzset{mono/.style={-latex', thick}}
    \tikzset{anti/.style={-latex', thick, densely dashed}}
    \tikzset{reflab/.style={font=\scriptsize, midway}}

    \node[stage] (G) {$\guideline$};
    \node[stage, right=of G] (Pi) {$\inducedPreference[\metrics]{\guideline}$};
    \node[stage, right=of Pi] (Ga) {$\legalGame$};
    \node[stage, right=of Ga] (NE) {$\nes{\legalGame}$};
    \node[stage, below=of G] (Gp) {$\guideline'$};
    \node[stage, below=of Pi] (Pip) {$\inducedPreference[\metrics]{\guideline'}$};
    \node[stage, below=of Ga] (Gap) {$\legalGame'$};
    \node[stage, below=of NE] (NEp) {$\nes{\legalGame'}$};
    
    \draw[mono] (G) -- node[reflab, above]{induce} (Pi);
    \draw[mono] (Pi) -- node[reflab, above]{build} (Ga);
    \draw[mono] (Ga) -- node[reflab, above]{$\mathsf{NE}$} (NE);
    \draw[mono] (Gp) -- node[reflab, below]{induce} (Pip);
    \draw[mono] (Pip) -- node[reflab, below]{build} (Gap);
    \draw[mono] (Gap) -- node[reflab, below]{$\mathsf{NE}$} (NEp);
    \draw[mono] (G) -- node[reflab, left]{$\refinedby$} (Gp);
    \draw[mono] (Pi) -- node[reflab, left]{$\refinedby$} (Pip);
    \draw[mono] (Ga) -- node[reflab, left]{$\refinedby$} (Gap);
    \draw[anti] (NEp) -- node[reflab, right]{$\supseteq$} (NE);
\end{tikzpicture}
    \caption{%
    Propagation of a guideline refinement~$\guideline \refinedby \guideline'$ through the pipeline. The horizontal maps take a guideline to its induced preference, the assembled legal game, and its equilibria.
    Induction and game construction preserve~$\refinedby$ (solid vertical arrows, monotone homomorphisms), whereas the equilibrium map is \emph{antitone} (dashed arrow). Hence, a finer guideline can only shrink the set of joint behaviors, i.e.,~$\nes{\legalGame'} \subseteq \nes{\legalGame}$. 
    }
    \label{fig:induction_homomorphism}
\end{figure}

\subsubsection{Priority Refinement}
\label{subsubsec:guideline_priority_refinement}

Regulators may need to clarify priorities between concerns that they previously left unspecified.
This corresponds to \enquote{adding an edge} to $\posetPreceq_{\guideline}$.
Analogous to \cref{prop:preference_priority_refinement} for preferences, this \emph{priority refinement} constitutes a refinement operation on guidelines.
\begin{proposition}[Guideline priority refinement]
    \label{prop:guideline_priority_refinement}
    Let $\guideline = \langle \concerns_{\guideline}, \posetPreceq_{\guideline} \rangle$ be a guideline.
    Every guideline $\guideline' = \langle \concerns_{\guideline}, \posetPreceq_{\guideline'} \rangle$ with ${\posetPreceq_{\guideline}} \subseteq {\posetPreceq_{\guideline'}}$ refines $\guideline$.
\end{proposition}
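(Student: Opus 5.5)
The plan is to reduce the statement to \cref{prop:preference_priority_refinement}. Fix an arbitrary set of metrics $\metrics$ and an arbitrary label assignment $m \mapsto \concernOfMetric{m}$. By definition of guideline refinement, it suffices to show that $\inducedPreference{\guideline'}$ is a refinement of $\inducedPreference{\guideline}$.

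The first step is to observe that both induced preferences have the same ground set. The set $\metrics_{\guideline}$ in \eqref{eq:guideline_metrics} depends only on the concern set $\concerns_{\guideline}$ and the labels, not on the order. Since $\guideline$ and $\guideline'$ share $\concerns_{\guideline}$, we get $\metrics_{\guideline} = \metrics_{\guideline'}$, with identical renamed copies $m^\concern$. Hence both preferences live on the same metric set, which is exactly the setting of \cref{prop:preference_priority_refinement}.

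The second step is to show ${\posetPreceq_{\inducedPreference{\guideline}}} \subseteq {\posetPreceq_{\inducedPreference{\guideline'}}}$. Take $m_1^{\concern_1} \posetPreceq_{\inducedPreference{\guideline}} m_2^{\concern_2}$. If the two metrics are equal, the relation holds in the primed preference by reflexivity. Otherwise $\concern_1 \posetPrec_{\guideline} \concern_2$, so we need $\concern_1 \posetPrec_{\guideline'} \concern_2$.

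This is the only point requiring care, because inclusion of the non-strict orders does not in general give inclusion of the strict ones. It does here, by antisymmetry of $\posetPreceq_{\guideline'}$:
\begin{itemize}
\item From $\concern_1 \posetPreceq_{\guideline} \concern_2$ and the inclusion, we get $\concern_1 \posetPreceq_{\guideline'} \concern_2$.
\item Strictness in $\guideline$ gives $\concern_1 \neq \concern_2$, since a strict relation is irreflexive.
\item If also $\concern_2 \posetPreceq_{\guideline'} \concern_1$, antisymmetry would force $\concern_1 = \concern_2$, a contradiction.
\end{itemize}
Therefore $\concern_1 \posetPrec_{\guideline'} \concern_2$, and so $m_1^{\concern_1} \posetPreceq_{\inducedPreference{\guideline'}} m_2^{\concern_2}$.

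Both induced orders are partial orders by \cref{lem:induced_preference_well_defined}. \Cref{prop:preference_priority_refinement} therefore yields $\inducedPreference{\guideline} \refinedby \inducedPreference{\guideline'}$. Since $\metrics$ and the labels were arbitrary, $\guideline' \refines \guideline$.
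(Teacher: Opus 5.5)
Your proof is correct and follows the paper's own proof. Both argue that the ground sets coincide because \eqref{eq:guideline_metrics} depends only on $\concerns_{\guideline}$, show that the induced orders are nested by transferring $\concern_1 \posetPrec_{\guideline} \concern_2$ to $\concern_1 \posetPrec_{\guideline'} \concern_2$, and then apply \cref{prop:preference_priority_refinement}. Your antisymmetry argument for why the strict relation transfers fills in a step that the paper asserts without justification.
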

\begin{proof}
    The induced metric set \eqref{eq:guideline_metrics} depends only on $\concerns_{\guideline}$, which is unchanged, so $\metrics_{\guideline'} = \metrics_{\guideline}$.
    Given $m_1^{\concern_1} \posetPrec_{\inducedPreference{\guideline}} m_2^{\concern_2}$, we find $\concern_1 \posetPrec_{\guideline} \concern_2$.
    As ${\posetPreceq_{\guideline}} \subseteq {\posetPreceq_{\guideline'}}$, we also have $\concern_1 \posetPrec_{\guideline'} \concern_2$, so $m_1^{\concern_1} \posetPrec_{\inducedPreference{\guideline'}} m_2^{\concern_2}$.
    Thus, \cref{prop:preference_priority_refinement} yields that $\inducedPreference{\guideline'}$ refines $\inducedPreference{\guideline}$ since ${\posetPreceq_{\inducedPreference{\guideline}}} \subseteq {\posetPreceq_{\inducedPreference{\guideline'}}}$.
\end{proof}

A regulator typically wishes to amend an existing guideline with a set of additional priorities and expects a canonical, least-restrictive way to enforce them.
Formally, this corresponds to finding a $\refinedby$-minimal priority refinement of the guideline.
\begin{corollary}[Minimal priority refinement]
    \label{cor:minimal_priority_refinement}
    Let $\guideline = \langle \concerns_{\guideline}, \posetPreceq_{\guideline} \rangle$ be a guideline and $D \subseteq \concerns_{\guideline} \times \concerns_{\guideline}$ a set of required priorities.
    Further, let $({\posetPreceq_{\guideline}} \cup D)^*$ be the reflexive-transitive hull of ${\posetPreceq_{\guideline}} \cup D$.
    If the hull is antisymmetric, $\guideline^\star := \langle \concerns_{\guideline}, ({\posetPreceq_{\guideline}} \cup D)^* \rangle$ is the unique $\refinedby$-minimal priority refinement of $\guideline$ enforcing $D$.
    Otherwise, ${\posetPreceq_{\guideline}} \cup D$ contains a cycle showing that $D$ cannot be enforced consistently with existing regulations.
\end{corollary}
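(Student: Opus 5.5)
We read ``priority refinement of $\guideline$ enforcing $D$'' as a guideline $\guideline' = \langle \concerns_{\guideline}, \posetPreceq_{\guideline'} \rangle$ whose order is a partial order with ${\posetPreceq_{\guideline}} \cup D \subseteq {\posetPreceq_{\guideline'}}$. Minimality is then taken among all such candidates with respect to $\refinedby$. The plan has three steps: check that $\guideline^\star$ is a candidate, show it is refined by every other candidate, and treat the cyclic case separately.

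\textbf{Step 1: $\guideline^\star$ is a candidate.} Write $H := ({\posetPreceq_{\guideline}} \cup D)^*$. By construction $H$ is reflexive and transitive, and by assumption it is antisymmetric, so it is a partial order. It contains ${\posetPreceq_{\guideline}}$ and $D$. Hence $\guideline^\star$ is a well-defined guideline enforcing $D$, and \cref{prop:guideline_priority_refinement} gives $\guideline^\star \refines \guideline$.

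\textbf{Step 2: every other candidate refines $\guideline^\star$.} Let $\guideline'$ be any candidate. Its order $\posetPreceq_{\guideline'}$ is reflexive, transitive and contains ${\posetPreceq_{\guideline}} \cup D$. Since $H$ is the smallest relation with these properties, $H \subseteq {\posetPreceq_{\guideline'}}$. Applying \cref{prop:guideline_priority_refinement} with $\guideline^\star$ in place of $\guideline$ gives $\guideline^\star \refinedby \guideline'$. So $\guideline^\star$ lies below every candidate, which makes it a least element and in particular $\refinedby$-minimal.

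\textbf{Uniqueness, the main obstacle.} Because $\refinedby$ is only a pre-order, two distinct guidelines could refine each other. Uniqueness should therefore be read in one of two ways. The first is up to $\refinedby$-equivalence: if $\guideline'$ is also minimal, then $\guideline' \refinedby \guideline^\star \refinedby \guideline'$. The second, which I would state as the intended sense, is inclusion-minimality of the order relation among priority refinements. In that sense, Step 2 shows every candidate order contains $H$, so any minimal one equals $H$. I would state this interpretation explicitly. If literal uniqueness in $\refinedby$ is wanted, I would try to show that a strict inclusion of orders over the same concerns is witnessed by a strict refinement for some metric set. For this I would choose one metric per concern, with metrics built so that they separate exactly one added edge.

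\textbf{Step 3: the non-antisymmetric case.} Suppose $H$ is not antisymmetric. Then there are distinct $\concern_1, \concern_2$ with $\concern_1 H \concern_2$ and $\concern_2 H \concern_1$. Unfolding the transitive hull gives finite chains of edges from ${\posetPreceq_{\guideline}} \cup D$ between them, which together form a cycle. Any candidate order must contain $H$, as shown in Step 2. It would then relate $\concern_1$ and $\concern_2$ in both directions, contradicting antisymmetry. So no priority refinement of $\guideline$ can enforce $D$, and the cycle is the certificate of inconsistency with the existing regulations.
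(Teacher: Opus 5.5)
Your proposal is correct and follows the paper's proof step for step. The hull is the least partial order containing ${\posetPreceq_{\guideline}} \cup D$, so \cref{prop:guideline_priority_refinement} gives both $\guideline^\star \refines \guideline$ and $\guideline^\star \refinedby \guideline'$ for every candidate $\guideline'$, the paper's uniqueness claim is exactly your inclusion-minimality reading, and any extending order inherits a cycle. Your caution about literal uniqueness is well founded, but the fallback you sketch for it cannot work in general. For a counterexample, take $\concerns_{\guideline} = \{ \concern_1, \concern_2 \}$ with $\concern_1 \subseteq \concern_2$, $\concern_1 \neq \concern_2$, both concerns incomparable, and $D = \emptyset$: every metric addressing $\concern_2$ also addresses $\concern_1$, so adding the edge $\concern_2 \posetPreceq \concern_1$ induces the same action pre-order for all metric sets and labelings, which yields a second, distinct $\refinedby$-minimal priority refinement that is $\refinedby$-equivalent to $\guideline^\star$.
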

\begin{proof}
    Let $R := ({\posetPreceq_{\guideline}} \cup D)^*$.
    If $R$ is antisymmetric, it is the smallest partial order with ${\posetPreceq_{\guideline}} \subseteq R$ that enforces $D$.
    Since ${\posetPreceq_{\guideline}} \subseteq R$, we have $\guideline^\star \refines \guideline$ by \cref{prop:guideline_priority_refinement}.
    The order of any priority refinement $\guideline'$ of $\guideline$ that enforces $D$ must contain $R$, so $\guideline^\star \refinedby \guideline'$ by \cref{prop:guideline_priority_refinement}.
    Uniqueness follows as $R$ is the smallest partial order with the required properties.
    A non-antisymmetric $R$ contains a cycle that any extending partial order would inherit.
\end{proof}
In words,~$\guideline^\star$ adds the priorities in~$D$ together with those implied by transitivity, and every other priority refinement that enforces~$D$ is at least as restrictive.
If~$D$ is inconsistent with the existing guideline, the cycle identifies the conflicting priorities, which the regulator must resolve.
\subsubsection{Injection}
\label{subsubsec:guideline_injection}

\begin{figure}
    \centering
    \begin{tikzpicture}
    \tikzset{hasse/.style={inner sep=0pt}}

    \node[hasse] (g) at (0, 0) {
        \begin{tikzcd}[row sep=small,column sep=tiny,cells={color=TUMBlue},arrows={color=TUMBlue}]
            & \concern_1^1 \ar[ld,dash] \ar[rd,dash] & \\
            \concern_1^2 & & \concern_1^3
        \end{tikzcd}
    };

    \node[hasse,right=1cm of g] (g_prime) {
        \begin{tikzcd}[row sep=small,column sep=tiny,cells={color=TUMOrange},arrows={color=TUMOrange}]
            \concern_2^1 \ar[d,dash] \\
            \concern_2^2
        \end{tikzcd}
    };

    \node[hasse,right=1cm of g_prime] (res) {
        \begin{tikzcd}[row sep=small,column sep=tiny,cells={color=TUMBlue},arrows={color=TUMBlue}]
            & \color{black} {\color{TUMBlue} \concern_1^1} \cup {\color{TUMOrange} \concern_2^1} \ar[d,dash,color=TUMOrange] & \\
            & \color{black} {\color{TUMBlue} \concern_1^1} \cup {\color{TUMOrange} \concern_2^2} \ar[ld,dash] \ar[rd,dash] & \\
            \concern_1^2 & & \concern_1^3
        \end{tikzcd}
    };

    \node[font=\Large] at ($(g.east)!0.5!(g_prime.west)$) {$\inject{\concern_1^1}$};
    \node[font=\Large] at ($(g_prime.east)!0.5!(res.west)$) {$=$};
\end{tikzpicture}
    \caption{%
        Visualization of guideline injection using Hasse diagrams.
        The colors indicate which parts of the refined guideline originate from which inputs.
        Continuing the example from \cref{fig:example_robots_in_corridor}, $\concern_1^1$ could be the $\{ \collTag \}$ concern, which is split according to a guideline prioritizing $\{ \highVulnerabilityTag \}$ over $\{ \lowVulnerabilityTag \}$.
    }
    \label{fig:guideline_injection}
\end{figure}

While \cref{prop:guideline_priority_refinement} allows refining priorities between concerns, it does not enable regulators to specify priorities within an existing concern.
Therefore, we introduce a mechanism for \enquote{splitting} a concern into several more specific ones that can be prioritized individually.
To this end, we define the \emph{injection} operator $\inject{\hat{\concern}}$.
Given guidelines $\guideline_k = \langle \concerns_{\guideline_k}, \posetPreceq_{\guideline_k} \rangle$ for $k \in \{ 1, 2 \}$ and a concern $\hat{\concern} \in \concerns_{\guideline_1}$, $\guideline_1 \inject{\hat{\concern}} \guideline_2$ injects $\guideline_2$ into the concern $\hat{\concern}$ in $\guideline_1$;
\cref{fig:guideline_injection} shows an abstract example.
Intuitively, the resulting guideline first orders metrics according to $\guideline_1$ and then refines the priority of metrics addressing $\hat{\concern}$ according to $\guideline_2$.
Formally, $\guideline_1 \inject{\hat{\concern}} \guideline_2$ is the guideline $\guideline = \langle \concerns_{\guideline}, \posetPreceq_{\guideline} \rangle$, where
\begin{subequations}
    \label{eq:injection}
    \begin{align}
        \label{subeq:injection:set}
        \concerns_{\guideline} :={}      & (\concerns_{\guideline_1} \setminus \{ \hat{\concern} \}) \cup \{ \hat{\concern} \cup \concern_2 \mid \concern_2 \in \concerns_{\guideline_2} \} \text{ and}                          \\
        \label{subeq:injection:none}
        {\posetPreceq_{\guideline}} :={} & \hphantom{{}\cup{}} \{ \langle \concern_1, \concern'_1 \rangle \in {\posetPreceq_{\guideline_1}} \mid \concern_1, \concern'_1 \in \concerns_{\guideline_1} \setminus \{ \hat{\concern} \} \} \\
        \label{subeq:injection:left}
                                         & {}\cup \{ \langle \hat{\concern} \cup \concern_2, \concern_1 \rangle \mid \langle \hat{\concern}, \concern_1 \rangle \in {\posetPrec_{\guideline_1}}, \concern_2 \in \concerns_{\guideline_2} \}    \\
        \label{subeq:injection:right}
                                         & {}\cup \{ \langle \concern_1, \hat{\concern} \cup \concern_2 \rangle \mid \langle \concern_1, \hat{\concern} \rangle \in {\posetPrec_{\guideline_1}}, \concern_2 \in \concerns_{\guideline_2} \}    \\
        \label{subeq:injection:both}
                                         & {}\cup \{ \langle \hat{\concern} \cup \concern_2, \hat{\concern} \cup \concern'_2 \rangle \mid \langle \concern_2, \concern'_2 \rangle \in {\posetPreceq_{\guideline_2}} \}.
    \end{align}
\end{subequations}
For \eqref{eq:injection} to define a guideline, we require that the injected concerns are new, i.e.,~$\hat{\concern} \cup \concern_2\notin \concerns_{\guideline_1} \setminus \{ \hat{\concern}\}$ for all~$\concern_2\in \concerns_{\guideline_2}$.
Otherwise, an injected concern could inherit conflicting priorities from~$\hat{\concern}$ and from an existing concern of~$\guideline_1$, and~$\posetPreceq_{\guideline}$ would not be antisymmetric.
Under this condition, \eqref{subeq:injection:left} to \eqref{subeq:injection:both} copy the priorities of~$\hat{\concern}$ to every injected concern and order the injected concerns among themselves according to~$\guideline_2$.
Injection also preserves unambiguity, i.e., if~$\guideline_1$ and~$\guideline_2$ are unambiguous, so is~$\guideline_1 \inject{\hat{\concern}} \guideline_2$.
Indeed, a metric addressing an injected concern~$\hat{\concern} \cup \concern_2$ addresses~$\hat{\concern}$ and thus no other concern of~$\guideline_1$, and it cannot address two injected concerns since it would then address two concerns of~$\guideline_2$.
Furthermore, injection results in a refinement if every metric addressing~$\hat{\concern}$ also addresses a concern from $\concerns_{\guideline_2}$.
\begin{proposition}[Guideline injection]
    \label{prop:guideline_injection}
    Let $\guideline_k = \langle \concerns_{\guideline_k}, \posetPreceq_{\guideline_k} \rangle$ be guidelines for $k \in \{ 1, 2 \}$ and let $\hat{\concern} \in \concerns_{\guideline_1}$.
    For all metrics $m$ so that $\concernOfMetric{m} \supseteq \hat{\concern}$, suppose that there is $\concern_2 \in \concerns_{\guideline_2}$ with $\concernOfMetric{m} \supseteq \concern_2$.
    Then, the guideline $\guideline_1 \inject{\hat{\concern}} \guideline_2$ is a refinement of $\guideline_1$.
\end{proposition}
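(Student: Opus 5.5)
We argue directly on the induced action orders and show that every strict preference $\actionProfile \actionPrec[\preference] \actionProfile'$ under $\preference := \inducedPreference{\guideline_1}$ persists under $\preference' := \inducedPreference{\guideline}$, where $\guideline := \guideline_1 \inject{\hat{\concern}} \guideline_2$. \Cref{prop:preference_priority_refinement,prop:preference_aggregation,prop:preference_augmentation} do not apply directly, because the ground set changes: each copy $m^{\hat{\concern}}$ is replaced by the copies $m^{\hat{\concern} \cup \concern_2}$. Instead, we relate the two preferences through a projection $\pi : \metrics_{\guideline} \to \metrics_{\guideline_1}$. It sends $m^{\hat{\concern} \cup \concern_2} \mapsto m^{\hat{\concern}}$ and is the identity on copies $m^{\concern_1}$ with $\concern_1 \in \concerns_{\guideline_1} \setminus \{ \hat{\concern} \}$.

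\emph{Step 1: properties of $\pi$.} First, $\pi$ is well defined and value preserving, since $\concernOfMetric{m} \supseteq \hat{\concern} \cup \concern_2$ implies $\concernOfMetric{m} \supseteq \hat{\concern}$, and renamed copies share the same function. Second, $\pi$ is surjective. For $m^{\hat{\concern}}$, the hypothesis gives some $\concern_2 \in \concerns_{\guideline_2}$ with $\concernOfMetric{m} \supseteq \concern_2$, hence $\concernOfMetric{m} \supseteq \hat{\concern} \cup \concern_2$ and $m^{\hat{\concern} \cup \concern_2} \in \metrics_{\guideline}$. The other copies are preserved verbatim. Third, $\pi$ reflects strict priority: if $\pi(a) \posetPrec_{\preference} \pi(b)$, then $a \posetPrec_{\preference'} b$. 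The strict relation between old copies comes from $\concern_1 \posetPrec_{\guideline_1} \concern'_1$, and the two copies cannot both be $\hat{\concern}$-copies, since those are incomparable. Each case is then covered by \eqref{subeq:injection:none}, \eqref{subeq:injection:left}, or \eqref{subeq:injection:right}. We also check strictness, i.e., that the reverse relation does not hold in $\guideline$, by antisymmetry of $\posetPreceq_{\guideline}$, which holds under the freshness condition.

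\emph{Step 2: non-strict order is preserved.} Assume $\actionProfile \actionPreceq[\preference] \actionProfile'$ and take $a \in \metrics_{\guideline}$ with $a(\actionProfile) > a(\actionProfile')$. Then $\pi(a)$ has the same values, so by \eqref{eq:action_order} there is $b_0 \posetSucc_{\preference} \pi(a)$ with $b_0(\actionProfile) < b_0(\actionProfile')$. By surjectivity pick $b$ with $\pi(b) = b_0$. Reflection gives $b \posetSucc_{\preference'} a$, and value preservation gives $b(\actionProfile) < b(\actionProfile')$. Hence $\actionProfile \actionPreceq[\preference'] \actionProfile'$.

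\emph{Step 3: strictness, the main obstacle.} It remains to exclude $\actionProfile' \actionPreceq[\preference'] \actionProfile$. The naive transfer back to $\preference$ fails, because $\pi$ does not preserve strict order: two injected copies may be ordered by \eqref{subeq:injection:both} while their images are incomparable $\hat{\concern}$-copies. We therefore use a finiteness argument instead.

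Suppose $\actionProfile \actionPreceq[\preference'] \actionProfile'$ and $\actionProfile' \actionPreceq[\preference'] \actionProfile$, and suppose some metric of $\metrics_{\guideline}$ distinguishes $\actionProfile$ and $\actionProfile'$. Choose a $\posetPreceq_{\preference'}$-maximal such metric $c$, which exists because $\metrics_{\guideline}$ is finite. Whichever direction $c$ disagrees in, one of the two relations yields a strictly more important metric that also disagrees, contradicting maximality. Thus every metric in $\metrics_{\guideline}$ agrees on $\actionProfile$ and $\actionProfile'$. By surjectivity and value preservation of $\pi$, every metric in $\metrics_{\guideline_1}$ agrees as well, so $\actionProfile' \actionPreceq[\preference] \actionProfile$, contradicting $\actionProfile \actionPrec[\preference] \actionProfile'$.

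Hence $\actionProfile \actionPrec[\preference'] \actionProfile'$. Since the metric set and label assignment were arbitrary, $\guideline_1 \inject{\hat{\concern}} \guideline_2$ refines $\guideline_1$.
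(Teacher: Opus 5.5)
Your proof is correct, and it takes a different route from the paper's. The paper argues only at the level of the metric posets. By the hypothesis, the metric set of $\guideline_1 \inject{\hat{\concern}} \guideline_2$ equals $\metrics_{\guideline_1}$ ``modulo renaming and duplication''. Relations \eqref{subeq:injection:none}--\eqref{subeq:injection:right} reproduce every old relation, and \eqref{subeq:injection:both} only orders copies of metrics addressing $\hat{\concern}$. The paper then invokes \cref{prop:preference_priority_refinement}.

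Your Step~1 carries the same content, made precise through the projection $\pi$:
\begin{itemize}
\item surjectivity is where the hypothesis enters;
\item reflection of strict priority replaces ``reproduces every relation''.
\end{itemize}

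The real difference is that you do not cite \cref{prop:preference_priority_refinement}. You rightly note that it is stated for a fixed ground set, whereas injection can replace a single copy $m^{\hat{\concern}}$ by several copies $m^{\hat{\concern} \cup \concern_2}$ that become ordered among themselves. You therefore check the action-order refinement directly:
\begin{itemize}
\item Step~2 transfers \eqref{eq:action_order} along $\pi$;
\item Step~3 recovers strictness with the maximal-disagreeing-metric argument. That argument is exactly \cref{lem:action_indifference}, so you could cite it: by surjectivity, a separating metric of $\inducedPreference{\guideline_1}$ lifts to one of the new preference.
\end{itemize}

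The paper's version is shorter and reuses the existing preference operations, but it leaves the renaming-and-duplication step informal. Your version closes that step. It also effectively proves a reusable lemma: any surjective, value-preserving map between metric sets that reflects strict priority induces a preference refinement.

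Two small points of presentation:
\begin{itemize}
\item The case split defining $\pi$ is unambiguous only because of the freshness condition, which separates injected concerns from those in $\concerns_{\guideline_1} \setminus \{ \hat{\concern} \}$. State this explicitly.
\item In your last sentence, the metric set and labeling are arbitrary \emph{subject to the hypothesis} of the proposition.
\end{itemize}
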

\begin{proof}
    We have $\metrics_{\guideline_1 \inject{\hat{\concern}} \guideline_2} = \metrics_{\guideline_1}$ modulo metric renaming and duplication, since every metric addressing $\hat{\concern}$ also addresses some $\concern_2 \in \concerns_{\guideline_2}$ by assumption.
    With \eqref{subeq:injection:none} to \eqref{subeq:injection:right} we reproduce every relation of $\posetPreceq_{\inducedPreference{\guideline_1}}$ and \eqref{subeq:injection:both} only adds relations among metrics addressing $\hat{\concern}$, which are incomparable under $\posetPreceq_{\inducedPreference{\guideline_1}}$.
    Thus, ${\posetPreceq_{\inducedPreference{\guideline_1}}} \subseteq {\posetPreceq_{\inducedPreference{\guideline_1 \inject{\hat{\concern}} \guideline_2}}}$ modulo metric renaming and duplication, so that \cref{prop:preference_priority_refinement} concludes.
\end{proof}

\subsubsection{Augmentation}
\label{subsubsec:guideline_augmentation}

Finally, regulators need a way to combine two guidelines while prioritizing one over the other, \eg, to place generally applicable rules below jurisdiction-specific rules.
\emph{Augmenting} a guideline by appending a disjoint guideline at the bottom of the poset is a refinement operation, analogous to \cref{prop:preference_augmentation} for preferences.
\begin{proposition}[Guideline augmentation]
    \label{prop:guideline_augmentation}
    For $k \in \{ 1, 2 \}$, let $\guideline_k = \langle \concerns_{\guideline_k}, \posetPreceq_{\guideline_k} \rangle$ be guidelines so that $\concerns_{\guideline_1} \cap \concerns_{\guideline_2} = \emptyset$.
    The guideline $\guideline = \langle \concerns_{\guideline}, \posetPreceq_{\guideline} \rangle$ with $\concerns_{\guideline} := \concerns_{\guideline_1} \cup \concerns_{\guideline_2}$ and
    \begin{equation}
        \label{eq:guideline_augmentation_order}
        {\posetPreceq_{\guideline}} := {\posetPreceq_{\guideline_1}} \cup {\posetPreceq_{\guideline_2}} \cup \{ \langle \concern_2, \concern_1 \rangle \mid \concern_1 \in \concerns_{\guideline_1}, \concern_2 \in \concerns_{\guideline_2} \}
    \end{equation}
    is a refinement of $\guideline_1$.
\end{proposition}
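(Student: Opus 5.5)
The plan is to show that, for every set of metrics $\metrics$ and every label assignment, the preference induced by the augmented guideline $\guideline$ \emph{is} the augmentation (in the sense of \cref{prop:preference_augmentation}) of $\inducedPreference{\guideline_1}$ by $\inducedPreference{\guideline_2}$. Then \cref{prop:preference_augmentation} immediately gives $\inducedPreference{\guideline_1} \refinedby \inducedPreference{\guideline}$, which is the definition of $\guideline \refines \guideline_1$. This mirrors the proof of \cref{prop:guideline_priority_refinement}: we reduce the guideline statement to the preference-level operation rather than reasoning about action orders directly.

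\textbf{Step 1: $\guideline$ is a guideline.} We check that \eqref{eq:guideline_augmentation_order} is a partial order on $\concerns_{\guideline}$. Reflexivity is inherited from $\posetPreceq_{\guideline_1}$ and $\posetPreceq_{\guideline_2}$. Transitivity follows by a short case split on which of $\concerns_{\guideline_1}$ and $\concerns_{\guideline_2}$ each of three concerns lies in. For instance, $\concern_2 \posetPreceq_{\guideline_2} \concern'_2$ together with $\concern'_2 \posetPreceq_{\guideline} \concern_1$ yields $\concern_2 \posetPreceq_{\guideline} \concern_1$, since every $\concerns_{\guideline_2}$-to-$\concerns_{\guideline_1}$ pair is already included. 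Antisymmetry uses $\concerns_{\guideline_1} \cap \concerns_{\guideline_2} = \emptyset$: no pair goes from $\concerns_{\guideline_1}$ to $\concerns_{\guideline_2}$, so a two-way relation can only occur inside one $\guideline_k$. From the same case analysis we read off the strict order: $\concern \posetPrec_{\guideline} \concern'$ if and only if $\concern \posetPrec_{\guideline_k} \concern'$ for some $k$, or $\concern \in \concerns_{\guideline_2}$ and $\concern' \in \concerns_{\guideline_1}$.

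\textbf{Step 2: decompose the induced metric set.} By \eqref{eq:guideline_metrics}, $\metrics_{\guideline} = \metrics_{\guideline_1} \cup \metrics_{\guideline_2}$, because the defining condition is checked concern by concern. The union is disjoint: a renamed metric $m^\concern$ carries its concern $\concern$ in its name, and the concern sets are disjoint. This satisfies the disjointness hypothesis of \cref{prop:preference_augmentation} with $\preference_k := \inducedPreference{\guideline_k}$.

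\textbf{Step 3: compare the orders.} Recall that $m_1^{\concern_1} \posetPreceq_{\inducedPreference{\guideline}} m_2^{\concern_2}$ holds if and only if the two metrics are equal or $\concern_1 \posetPrec_{\guideline} \concern_2$. Substituting the description of $\posetPrec_{\guideline}$ from Step 1 gives
\[
{\posetPreceq_{\inducedPreference{\guideline}}} = {\posetPreceq_{\inducedPreference{\guideline_1}}} \cup {\posetPreceq_{\inducedPreference{\guideline_2}}} \cup \{ \langle m_2, m_1 \rangle \mid m_1 \in \metrics_{\guideline_1}, m_2 \in \metrics_{\guideline_2} \}.
\]
This is exactly the order constructed in \cref{prop:preference_augmentation}, so that proposition applies and concludes the proof.

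The main obstacle is Step 1's bookkeeping: the strict relation of the union must be exactly the union of the strict relations plus the cross pairs, with no spurious equalities. Disjointness of the concern sets is what rules these out. The degenerate case $\metrics_{\guideline_2} = \emptyset$ needs no separate treatment, since the augmentation then leaves $\inducedPreference{\guideline_1}$ unchanged.
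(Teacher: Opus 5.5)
Your proposal is correct and follows essentially the same route as the paper. You split $\metrics_{\guideline}$ into the disjoint union of $\metrics_{\guideline_1}$ and $\metrics_{\guideline_2}$, note that the cross pairs put every metric of $\metrics_{\guideline_2}$ strictly below every metric of $\metrics_{\guideline_1}$ in $\inducedPreference{\guideline}$, and conclude with \cref{prop:preference_augmentation}. Your Step~1 check that the augmented order is a partial order, and your explicit matching of the within-group orders in Step~3, are bookkeeping the paper leaves implicit rather than a different argument.
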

\begin{proof}
    By \eqref{eq:guideline_metrics}, we have $\metrics_{\guideline} = \metrics_{\guideline_1} \cup \metrics_{\guideline_2}$.
    Further, since $\concerns_{\guideline_1} \cap \concerns_{\guideline_2} = \emptyset$, it is also $\metrics_{\guideline_1} \cap \metrics_{\guideline_2} = \emptyset$.
    The added relations $\langle \concern_2, \concern_1 \rangle$ from \eqref{eq:guideline_augmentation_order} rank every concern of $\guideline_2$ below every concern of $\guideline_1$, so every metric of $\metrics_{\guideline_2}$ sits strictly below every metric of $\metrics_{\guideline_1}$ in $\inducedPreference{\guideline}$.
    With this, \cref{prop:preference_augmentation} yields that $\guideline$ refines $\guideline_1$.
\end{proof}

\section{Solution Protocols}
\label{sec:solution_protocols}

\begin{table}
    \centering
    \caption{Solution protocols, the information that each agent requires about the other agents, and whether the protocol is guaranteed to commit to the socially optimal equilibrium}
    \label{tab:solution_protocols}
    \renewcommand{\arraystretch}{1.2}
    \small
    \begin{tabular}{@{}lp{0.35\columnwidth}>{\centering\arraybackslash}p{5em}@{}}
        \toprule
        \textbf{Protocol}    & \textbf{Known about others}                 & \textbf{Optimal?} \\
        \midrule
        Complete information & action space, \newline true preference      & \checkmark        \\
        Belief play          & action space, \newline belief on preference & \checkmark        \\
        Distributed \ibr     & none                                        & after full search \\
        \bottomrule
    \end{tabular}
\end{table}

In this section, we address part (ii) of \cref{prob:first}, which requires computing the socially optimal \nasheq $\actionProfile^*$~\eqref{eq:socially_optimal_true_ne} of the true game $\trueGame$ derived in \cref{sec:guideline_games}.
To this end, we propose the solution protocols summarized in \cref{tab:solution_protocols}.
The design of a solution protocol depends strongly on which information each agent has about the others.
If the agents have \emph{complete information} about the others, every agent can independently compute $\actionProfile^*$ without communication as they can directly construct $\trueGame$.
However, assuming complete information is rarely realistic:
in many application domains, the agents do not know the true preferences of the others.
Thus, the complete information protocol is not deployable and can only serve as a baseline for evaluation.

Incomplete information about the preferences of the agents could be modeled by adapting Bayesian games~\cite{harsanyiGamesIncompleteInformation1967} or robust games~\cite{aghassiRobustGameTheory2006}, which provide equilibrium concepts that account for such uncertainty, to the posetal setting.
However, since each agent knows its own true preference, we aim to recover the true equilibria through communication between the agents, rather than reasoning about equilibria under uncertainty.
We assume that communication is limited to action profiles, as allowing the agents to share their true preferences would reduce the problem to the complete information case.
This makes our setting similar to uncoupled dynamics~\cite{hartUncoupledDynamicsNot2003}, except that we use explicit communication instead of repeated play.
Subsequently, we propose two protocols:
\emph{belief play} recovers the true \nasheqs from conservative local approximations, whereas \emph{distributed iterative best-response (\ibr)} requires no information about the action spaces or preferences of other agents.

\subsection{Belief Play}
\label{subsec:belief_play}

For our first solution protocol, we assume that every agent $\egoIndex \in \players$ knows the action space of all other agents.
While the true preference of each agent remains private, we suppose that the agent $\egoIndex$ has a belief $\preference[\obstacleIndex]^{\egoIndex}$ about the true preference of agent $\obstacleIndex \in \players$, where $\preference[\egoIndex]^{\egoIndex} = \preference[\egoIndex]$, since every agent knows its own true preference.
We assume that the beliefs are conservative, \ie the true preference $\preference[\obstacleIndex]$ is a refinement of the belief $\preference[\obstacleIndex]^{\egoIndex}$ for all $\egoIndex, \obstacleIndex \in \players$.
In words, a belief may omit objectives and priorities of another agent, but it never attributes to that agent a priority it does not have.
Based on its beliefs, the agent $\egoIndex$ constructs its \emph{local} game
\begin{equation*}
    \localGame{\egoIndex} := \langle \players, \{ \actionSpace[\obstacleIndex] \}_{\obstacleIndex \in \players}, \{ \preference[\obstacleIndex]^{\egoIndex} \}_{\obstacleIndex \in \players} \rangle.
\end{equation*}

We emphasize that the agents do not strictly require side information to form conservative beliefs.
The legal guideline $\legalGuideline$ and the set of legal metrics $\legalMetrics{\obstacleIndex}$ are publicly known, so agent $\egoIndex \in \players$ can construct the legal preference $\inducedPreference[\legalMetrics{\obstacleIndex}]{\legalGuideline}$ of agent $\obstacleIndex \in \otherPlayers$.
As all agents follow the legal requirements, the legal preference is the most conservative belief about $\preference[\obstacleIndex]$, so that $\inducedPreference[\legalMetrics{\obstacleIndex}]{\legalGuideline} \refinedby \preference[\obstacleIndex]^{\egoIndex} \refinedby \preference[\obstacleIndex]$.

Leveraging the conservative beliefs, we can solve $\trueGame$ by intersecting the \nasheqs of the local games.
This fundamental property forms the basis of our \emph{belief play} protocol.
\begin{theorem}[Belief play]
    \label{thm:global_nes_eq_local_nes}
    Suppose that $\preference[\obstacleIndex]$ is a refinement of $\preference[\obstacleIndex]^{\egoIndex}$ for all distinct $\egoIndex, \obstacleIndex \in \players$.
    The \nasheqs of $\trueGame$ are exactly those action profiles that are \nasheqs in all local games $\localGame{\egoIndex}$, \ie
    \begin{equation*}
        \nes{\trueGame} = \bigcap_{\egoIndex \in \players} \nes{\localGame{\egoIndex}}.
    \end{equation*}
\end{theorem}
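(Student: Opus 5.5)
The plan is to prove the two inclusions separately. The key observation is that a \nasheq is a conjunction of per-agent best-response conditions. Moreover, in the local game $\localGame{\egoIndex}$ agent $\egoIndex$ uses its true preference, since $\preference[\egoIndex]^{\egoIndex} = \preference[\egoIndex]$.

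For $\nes{\trueGame} \subseteq \bigcap_{\egoIndex} \nes{\localGame{\egoIndex}}$, fix any $\egoIndex \in \players$. The game $\trueGame$ coincides with $\localGame{\egoIndex}$ except that each agent $\obstacleIndex$ uses $\preference[\obstacleIndex]$ instead of $\preference[\obstacleIndex]^{\egoIndex}$. By hypothesis, $\preference[\obstacleIndex]^{\egoIndex} \refinedby \preference[\obstacleIndex]$ for $\obstacleIndex \neq \egoIndex$. For $\obstacleIndex = \egoIndex$ the two preferences are identical, and refinement is reflexive (it is a pre-order). Hence every preference of $\trueGame$ refines the corresponding preference of $\localGame{\egoIndex}$. \Cref{prop:equilibrium_refinement} then gives $\nes{\trueGame} \subseteq \nes{\localGame{\egoIndex}}$. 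Since $\egoIndex$ was arbitrary, the inclusion into the intersection follows.

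For the reverse inclusion, let $\actionProfile \in \bigcap_{\egoIndex} \nes{\localGame{\egoIndex}}$ and fix an agent $\egoIndex$. Since $\actionProfile$ is a \nasheq of $\localGame{\egoIndex}$, in particular agent $\egoIndex$ plays a best response in that game:
\begin{equation*}
    \actionProfile[\egoIndex] \in \bestResponse{\preference[\egoIndex]^{\egoIndex}}(\actionProfile[-\egoIndex]) = \bestResponse{\preference[\egoIndex]}(\actionProfile[-\egoIndex]).
\end{equation*}
This holds because the best-response set depends only on the agent's own preference and its action space, both of which agree with those in $\trueGame$. Doing this for every $\egoIndex$ shows that every agent best-responds under its true preference, so $\actionProfile \in \nes{\trueGame}$.

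There is no serious obstacle. The only points needing care are two. First, \cref{prop:equilibrium_refinement} requires \emph{every} agent's preference to be refined, which is why reflexivity of $\refinedby$ is needed for agent $\egoIndex$ itself. Second, the reverse direction uses only one best-response condition from each local game, namely that of its owner, rather than any further structure.
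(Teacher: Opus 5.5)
Your proof is correct and follows the paper's argument: the forward inclusion comes from applying \cref{prop:equilibrium_refinement} to each local game, and the reverse inclusion extracts from each $\localGame{\egoIndex}$ only its owner's best-response condition, using $\preference[\egoIndex]^{\egoIndex} = \preference[\egoIndex]$. Your remark that reflexivity of $\refinedby$ covers the case $\obstacleIndex = \egoIndex$, which the hypothesis on distinct agents does not, makes explicit a step the paper leaves implicit.
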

\begin{proof}
    We first show $\nes{\trueGame} \subseteq \bigcap_{\egoIndex \in \players} \nes{\localGame{\egoIndex}}$.
    Due to our assumption, \cref{prop:equilibrium_refinement} applies to every local game.
    Thus, we have $\nes{\trueGame} \subseteq \nes{\localGame{\egoIndex}}$ for all $\egoIndex \in \players$, and therefore $\nes{\trueGame} \subseteq \bigcap_{\egoIndex \in \players} \nes{\localGame{\egoIndex}}$.
    To prove the reverse inclusion, let $\actionProfile \in \bigcap_{\egoIndex \in \players} \nes{\localGame{\egoIndex}}$ be arbitrary.
    Since $\actionProfile$ is a \nasheq of $\localGame{\egoIndex}$ for all $\egoIndex \in \players$, we have $\actionProfile[\obstacleIndex] \in \bestResponse{\preference[\obstacleIndex]^{\egoIndex}}(\actionProfile[-\obstacleIndex])$ for all $\egoIndex, \obstacleIndex \in \players$.
    By instantiating this statement with $\obstacleIndex = \egoIndex$, we obtain $\actionProfile[\egoIndex] \in \bestResponse{\preference[\egoIndex]^{\egoIndex}}(\actionProfile[-\egoIndex])$ for all $\egoIndex \in \players$.
    Hence, we have $\actionProfile[\egoIndex] \in \bestResponse{\preference[\egoIndex]}(\actionProfile[-\egoIndex])$ for all $\egoIndex \in \players$ as $\preference[\egoIndex]^{\egoIndex} = \preference[\egoIndex]$.
    This confirms that $\actionProfile \in \nes{\trueGame}$.
\end{proof}

In words, an action profile is a true equilibrium if and only if every agent certifies it as an equilibrium of its own local game.
Each agent can certify its own best response exactly because it knows its true preferences, whereas the conservative beliefs guarantee that no true equilibrium is discared by any agent.

\subsubsection{Algorithm}
\label{subsubsec:belief_play_algorithm}

\begin{algorithm}[tb]
    \caption{Belief Play for Agent $\egoIndex \in \players$}
    \label{alg:belief_play}
    \begin{algorithmic}[1]
        \Require Local game $\localGame{\egoIndex}$, social metric $\socialMetric$
        \Ensure Socially optimal \nasheq of the true game $\trueGame$

        \Statex

        \LineComment{Compute and propose \nasheqs of the local game}
        \State $\mathit{solver} \gets \Call{NESolver}{\localGame{\egoIndex}}$
        \State $\set{N}_{\egoIndex} \gets \emptyset$
        \While{$\neg \mathit{solver}.\Call{Done}{{}}$}
        \label{alg:belief_play:start_while}
        \State $\actionProfile \gets \mathit{solver}.\Call{NextNE}{{}}$
        \State $\set{N}_\egoIndex \gets \set{N}_\egoIndex \cup \{ \actionProfile \}$
        \State \Call{Broadcast}{$\actionProfile$}
        \EndWhile
        \label{alg:belief_play:end_while}

        \Statex

        \LineComment{Receive candidates proposed by others in the meantime}
        \For{$\obstacleIndex \in \otherPlayers$}
        \State $\set{N}_{\obstacleIndex} \gets \Call{RecvFrom}{\obstacleIndex}$
        \EndFor

        \Statex

        \State $\set{N} \gets \bigcap_{\obstacleIndex \in \players} \set{N}_{\obstacleIndex}$ \Comment{Equilibria proposed by everyone}
        \label{alg:belief_play:candidate_intersection}
        \State \Return $\argmax_{\actionProfile \in \set{N}}(\socialMetric(\actionProfile))$
    \end{algorithmic}
\end{algorithm}

\Cref{alg:belief_play} outlines the belief play protocol from the perspective of the agent $\egoIndex \in \players$.
The agent first computes the \nasheqs of its local game, collects them in $\set{N}_{\egoIndex}$, and broadcasts them to all other agents.
Here, we assume that \textsc{NESolver} incrementally produces the \nasheqs of the provided game.
Next, the agent receives the candidate equilibria proposed by the others.
According to \cref{thm:global_nes_eq_local_nes}, each action profile that is proposed by every agent is a \nasheq of the true game.
Thus, we compute the intersection of the candidates in line~\ref{alg:belief_play:candidate_intersection} and select the one that maximizes the social metric $\socialMetric$.
Since all agents follow the protocol and receive the same proposals, they compute the same intersection and, with the fixed tie-breaking rule of \cref{prob:first}, they commit to the same \nasheq.

\subsubsection{Anytime Legality}
\label{subsubsec:belief_play_anytime_legality}

We can safely interrupt the main loop of \cref{alg:belief_play} in lines~\ref{alg:belief_play:start_while}--\ref{alg:belief_play:end_while} before all \nasheqs have been found.
Let $\mathcal{N}_{\egoIndex} \subseteq \nes{\localGame{\egoIndex}}$ denote the set of \nasheqs proposed by agent $\egoIndex \in \players$ before interruption, and let $\mathcal{N}$ be their intersection as computed in line~\ref{alg:belief_play:candidate_intersection}.
\Cref{thm:global_nes_eq_local_nes} guarantees that $\mathcal{N} \subseteq \nes{\trueGame}$.
Thus, if $\mathcal{N}$ is not empty, the agents commit to the socially best \nasheq among the currently known true equilibria.
Since~$\mathcal{N}$ can only grow as the enumeration proceeds, the social value of the committed equilibrium is nondecreasing and attains the optimum once all local equilibria have been enumerated.
Moreover, if every belief refines the corresponding legal preference, as is the case for the default belief described above, we have $\bigcup_{\egoIndex \in \players} \mathcal{N}_{\egoIndex} \subseteq \nes{\legalGame}$ due to \cref{prop:equilibrium_refinement}.
Thus, even if no true \nasheq has been found yet, \ie $\mathcal{N} = \emptyset$, the agents can fall back to any proposed equilibrium to at least ensure legal behavior.
However, such a fallback is not incentive-stable \wrt the true preferences since it is not a true \nasheq.

\subsubsection{Complexity and Communication}
\label{subsubsec:belief_play_complexity_and_communication}

Belief play is agnostic of the solver used for the local game.
A straightforward option is brute-force enumeration of the \nasheqs of the local game $\localGame{\egoIndex}$, which requires $\bigO(\abs{\actionSpace} n)$ best-response queries for $n := \abs{\players}$ agents, where the size~$\abs{\actionSpace}=\Pi_{\egoIndex\in \players}\abs{\actionSpace[\egoIndex]}$ of the joint action space grows exponentially in~$n$.
The computation of~$\nes{\localGame{\egoIndex}}$ can become tractable when the game has special structure. 
For example, a posetal game can be shown to be a potential game under certain conditions on the metrics and preferences~\cite[Theorem~26]{zanardiPosetalGamesEfficiency2022}.
Every \nasheq found for the local games needs to be communicated.
Thus, if the beliefs refine the legal preferences, the number of required messages is bounded by $n \abs{\nes{\legalGame}}$ as $\nes{\localGame{\egoIndex}} \subseteq \nes{\legalGame}$ by \cref{prop:equilibrium_refinement}.
Observe that finer preference beliefs lead to fewer equilibrium candidates according to \cref{prop:equilibrium_refinement} and thus can only reduce the communication effort.

\subsection{Distributed Iterative Best-Response}
\label{subsec:distributed_ibr}

For our second protocol, we drop the assumption that every agent knows the action space of the others.
Thus, the agent $\egoIndex \in \players$ only knows its own action space $\actionSpace[\egoIndex]$ and true preference $\preference[\egoIndex]$.
The protocol performs distributed best-response iteration, which yields a \nasheq of the true game once it reaches a fixed point.
The agents first assemble an initial action profile, to which each agent contributes its own action, and then forward the current profile from agent to agent.
Upon receipt, an agent replaces its own action with a best response under its true preference and forwards the updated profile.
Once no agent changes its action, every agent plays a best response, and the profile is a \nasheq of~$\trueGame$.
Existing algorithms for distributed \ibr, \eg, \cite{durandDistributedBestResponse2019}, which are defined for classical games, can be applied to posetal games as the notion of a best response transfers (see \cref{subsubsec:posetal_games}).
If time permits, we can restart the iteration from different action profiles to obtain multiple \nasheqs and select the one that maximizes the social metric $\socialMetric$.
Note that the protocol can only guarantee optimality of the solution after exhaustive exploration of the action space.

\begin{remark}[Convergence]
    \label{rem:distributed_ibr_convergence}
    If $\trueGame$ is a potential game, best-response iteration, and thus our distributed protocol, converges according to~\cite[Theorem~26]{zanardiPosetalGamesEfficiency2022}.
    Otherwise, convergence is not guaranteed.
    If the protocol converges, it is typically computationally cheaper than belief play, as the agents only need to answer best-response queries rather than compute the \nasheqs of a local game.
    The improved efficiency comes at a cost:
    Unlike belief play, distributed \ibr is not guaranteed to find the optimal equilibrium (see \cref{tab:solution_protocols}) and does not provide a legal equilibrium as a fallback in case of non-convergence or interruption.
    Nevertheless, our empirical results in \cref{subsec:eval_protocols} suggest that, for the \colregs case study, the protocol converges within about three best-response rounds, and that restarts recover the social value of the optimal solution in all tested scenarios, although up to 20 initializations are needed.
\end{remark}

\section{Case Study: \colregs-Compliant Planning}
\label{sec:case_study}

\begin{figure}
    \centering
    \subfloat{%
        \begin{tikzpicture}[scale=0.2]
    \tikzset{legend/.style={font=\scriptsize,align=left,execute at begin node=\setlength{\baselineskip}{2pt}}}

    \vessel{(0, 0)}{0}{TUMOrange}{\scriptsize G}{give_way}
    \vessel{(15, 0)}{0}{TUMBlue}{\scriptsize S}{stand_on}

    \node[right=0.15cm of front_give_way,legend] {Give-wa\smash{y} \\ vessel};
    \node[right=0.15cm of front_stand_on,legend] {Stand-on \\ vessel};
\end{tikzpicture}%
    }
    \\[0.5em]
    \setcounter{subfigure}{0}
    \subfloat[]{%
        \begin{tikzpicture}[scale=0.28]
    \tikzset{trajectory/.style={thick,-latex'}}

    \begin{scope}[local bounding box=overtaking]
        \vessel{(0, 0)}{90}{TUMOrange}{\footnotesize G}{overtaking}
        \vessel{(0, 5.5)}{110}{TUMBlue}{\footnotesize S}{overtaken}
        \draw[trajectory, TUMBlue!60] (front_overtaken) +({0.1*cos(110)}, {0.1*sin(110)}) -- +({2*cos(110)}, {2*sin(110)});
        \draw (front_overtaking) +(0, 0.1)
            edge[trajectory, TUMOrange!60, in=-90, out=90] +(2, 2.5)
            edge[trajectory, TUMOrange!60, in=-90, out=90] +(-2, 2.5);
    \end{scope}
\end{tikzpicture}%
    }
    \hfill
    \subfloat[]{%
        \begin{tikzpicture}[scale=0.28]
    \tikzset{trajectory/.style={thick,-latex'}}

    \begin{scope}[local bounding box=head_on, shift={(23, 0)}]
        \vessel{(0, 0)}{90}{TUMOrange}{\footnotesize G}{bottom}
        \vessel{(0, 8)}{-90}{TUMOrange}{\footnotesize G}{top}
        \draw (front_bottom) +(0, 0.1) edge[trajectory, TUMOrange!60, in=-90, out=90] +(2, 2.5);
        \draw (front_top) +(0, -0.1) edge[trajectory, TUMOrange!60, in=90, out=-90] +(-2, -2.5);
    \end{scope}
\end{tikzpicture}%
    }
    \hfill
    \subfloat[]{%
        \begin{tikzpicture}[scale=0.28]
    \tikzset{trajectory/.style={thick,-latex'}}

    \begin{scope}[local bounding box=crossing, shift={(33, 0)}]
        \vessel{(0, 0)}{90}{TUMOrange}{\footnotesize G}{left}
        \vessel{(4, 8)}{180}{TUMBlue}{\footnotesize S}{right}
        \draw (front_left) +(0, 0.1) edge[trajectory, TUMOrange!60, in=-90, out=90] +(2, 2.5);
        \draw[trajectory, TUMBlue!60] (front_right) +(-0.1, 0) -- +(-2, 0);
    \end{scope}
\end{tikzpicture}%
    }
    \caption{%
        \colregs encounter situations (adapted from \cite[Fig.~2]{krasowskiProvableTrafficRule2024}).
        (a) In an overtaking situation, the overtaking vessel evades to either direction, while the overtaken vessel stands on.
        (b) In a head-on situation, both vessels evade to starboard.
        (c) In a crossing situation, the vessel coming from the left evades to starboard, while the vessel coming from the right stands on.
    }
    \label{fig:colregs_encounters}
\end{figure}

In this section, we apply the modeling framework developed in \cref{sec:guideline_games} to \colregs-compliant multi-vessel motion planning to demonstrate its practical value.
The \colregs~\cite{internationalmaritimeorganizationimoCOLREGsConventionInternational1972} govern the collision avoidance of vessels in the open sea.
Our case study focuses on the steering and sailing rules that apply under good visibility conditions.
These rules assign one of two roles to each vessel in an encounter involving risk of collision:
\emph{Give-way} vessels are expected to observably alter their course to eliminate collision risk, whereas \emph{stand-on} vessels have to keep their original course and speed until the situation is resolved.
\Cref{fig:colregs_encounters} shows the assignment of roles for the situations defined in Rules~13 to 15.
The ultimate concern of the \colregs is avoiding collisions, which takes precedence over strict compliance with the encounter rules according to Rule~2(b):
\begin{quotation}
    \itshape
    In construing and complying with these Rules due regard shall be had to all dangers \textelp{} and to any special circumstances \textelp{}, which may make a departure from these Rules necessary to avoid immediate danger.
    \hfill
    \textnormal{\cite[Rule~2(b)]{internationalmaritimeorganizationimoCOLREGsConventionInternational1972}}
\end{quotation}
This is reinforced in Rule~17(b), which requires the stand-on vessel to maneuver if the give-way vessel does not take sufficient action to avoid a collision.
We emphasize that the \colregs only provide pairwise rules and do not explicitly address encounters involving more than two vessels.
Our framework is the first that systematically lifts the pairwise \colregs rules to encounters involving an arbitrary number of vessels.

In our case study, a scenario $\scenario$ comprises a set of vessels $\players$ with the set of feasible trajectories as their action spaces $\{ \actionSpace[\egoIndex] \}_{\egoIndex \in \players}$.
We begin by defining the legal metrics in \cref{subsec:colregs_metrics} and the \colregs guideline in \cref{subsec:colregs_guideline}.
In \cref{subsec:colregs_games}, we discuss the legal and true games derived from an example scenario.
Finally, \cref{subsec:colregs_regulatory_change} illustrates how our framework can accommodate potential future changes to the \colregs.
\subsection{Metrics Required by \colregs}
\label{subsec:colregs_metrics}
To define the set of legal metrics, we leverage the temporal logic formalization of the encounter rules from \cite{krasowskiTemporalLogicFormalization2021,krasowskiProvableTrafficRule2024}, focusing on the rules $R_3$ to $R_6$.
For ease of presentation, we assume that the give-way rules $R_3$ to $R_5$ are combined into a single specification.
Thus, we have two generic specifications: $\varphi^\mathtt{gw}$ for give-way obligations and $\varphi^\mathtt{so}$ for stand-on obligations.
Moreover, we can specify collision avoidance as
\begin{equation*}
    \varphi^\mathtt{coll} := \globally \neg \mathrm{collides}(\egoIndex, \obstacleIndex),
\end{equation*}
where $\egoIndex, \obstacleIndex$ are placeholders for the ego and contact vessel.

We apply the procedure from \cref{subsec:legally_required_metrics} for the vessel $\egoIndex \in \players$ in the scenario $\scenario$.
For $\varphi^\mathtt{gw}$ and $\varphi^\mathtt{so}$, using the robustness definitions from \cite[Sec.~5.2 and~5.3]{mullerFalsificationdrivenReinforcementLearning2026} yields the metrics $\giveWayMetric{\egoIndex}{\obstacleIndex}$ and $\standOnMetric{\egoIndex}{\obstacleIndex}$ for each contact vessel $\obstacleIndex \in \otherPlayers$.
For $\varphi^\mathtt{coll}$, we define the robustness of the $\mathrm{collides}$ predicate at time $t$ for $\actionProfile \in \actionSpace$ as
\begin{equation*}
    \robustness{\mathrm{collides}(\egoIndex, \obstacleIndex)}{\actionProfile}[t] :=
    \begin{cases}
        r_{\egoIndex, \obstacleIndex} - d^t(\actionProfile)                          & \text{if } d^t(\actionProfile) > r_{\egoIndex, \obstacleIndex} \\
        \abs{v_{\egoIndex}^t(\actionProfile) - v_{\obstacleIndex}^t(\actionProfile)} & \text{otherwise}
    \end{cases},
\end{equation*}
with $d^t$ the inter-vessel distance and $v^t$ the vessel velocities at time $t$, and the collision radius $r_{\egoIndex, \obstacleIndex}$.
Applying spatial robustness~\cite[Sec.~2.2]{bartocciSpecificationBasedMonitoringCyberPhysical2018} and \eqref{eq:metric_from_robustness} yields the metric $\collMetric{\egoIndex}{\obstacleIndex}$ for each $\obstacleIndex \in \otherPlayers$, which quantifies collision severity as the maximum relative collision speed.
For completeness, the closed form of the metric is
\begin{equation*}
    \collMetric{\egoIndex}{\obstacleIndex}(\actionProfile) = -\max\limits_{t \,\text{s.t.}\, d^t(\actionProfile) \le r_{\egoIndex, \obstacleIndex}}(\abs{v_{\egoIndex}^t(\actionProfile) - v_{\obstacleIndex}^t(\actionProfile)})
\end{equation*}
with value $0$ if $d^t(\actionProfile) > r_{\egoIndex, \obstacleIndex}$ for all $t$.
Overall, the set of metrics required by the \colregs for the vessel $\egoIndex \in \players$ is
\begin{equation*}
    \colregsMetrics{\egoIndex} := \bigcup_{\obstacleIndex \in \otherPlayers} \{ \collMetric{\egoIndex}{\obstacleIndex}, \giveWayMetric{\egoIndex}{\obstacleIndex}, \standOnMetric{\egoIndex}{\obstacleIndex} \}.
\end{equation*}

\subsection{\colregs Guideline}
\label{subsec:colregs_guideline}

Before defining the \colregs guideline, we need to organize the metrics using labels.
For every $\egoIndex, \obstacleIndex \in \players$, $\collMetric{\egoIndex}{\obstacleIndex}$ is labeled with $\collTag$, $\giveWayMetric{\egoIndex}{\obstacleIndex}$ with $\giveWayTag$, and $\standOnMetric{\egoIndex}{\obstacleIndex}$ with $\standOnTag$.
Moreover, we include labels describing properties of the contact vessel $\obstacleIndex$, \eg, its classification according to \cite[Rule~3]{internationalmaritimeorganizationimoCOLREGsConventionInternational1972}.
The possible classes include $\nucTag$ for vessels not under command, $\hamperedTag$ for vessels restricted in their ability to maneuver, $\fishingTag$ for vessels engaged in fishing, $\sailingTag$ for vessels under sail, and $\powerDrivenTag$ for power-driven vessels not falling into any of the other classes.
Thus, the concern $\{ \giveWayTag \}$ relates to give-way obligations \wrt any other vessel, while the more specific concern $\{ \giveWayTag, \nucTag \}$ only cares about give-way obligations \wrt vessels not under command.
The metric $\giveWayMetric{\egoIndex}{\obstacleIndex}$ addresses both concerns provided that $\obstacleIndex$ is not under command.

\begin{figure}
    \centering
    \begin{tikzcd}[row sep=tiny,column sep=-25pt]
    & \{ \collTag \} \ar[dl,dash] \ar[dr,dash] & \\
    \{ \giveWayTag \} & & \{ \standOnTag \}
\end{tikzcd}%
    \caption{%
        The guideline $\colregsGuideline$ representing the priorities over concerns specified by the \colregs.
    }
    \label{fig:colregs_guideline_base}
\end{figure}

To define the \colregs guideline $\colregsGuideline$, recall from the beginning of the section that the \colregs prioritize collision avoidance over strict compliance with the encounter rules, but leave the priority between different rules unspecified.
Thus, $\colregsGuideline$, which is shown in \cref{fig:colregs_guideline_base}, puts the concern $\{ \collTag \}$ over $\{ \giveWayTag \}$ and $\{ \standOnTag \}$, but leaves the last two incomparable.
Note that the guideline is unambiguous \wrt the legal metrics $\colregsMetrics{\egoIndex}$ for a vessel $\egoIndex \in \players$.

Let us now consider the preference induced by $\colregsGuideline$ over the set of legal metrics $\colregsMetrics{\egoIndex}$ for a vessel $\egoIndex \in \players$ in an arbitrary scenario $\scenario$.
The preference does not specify a priority between collision metrics $\collMetric{\egoIndex}{\obstacleIndex}$ \wrt different contact vessels $\obstacleIndex \in \otherPlayers$.
Thus, collisions with different contact vessels are incomparable, as the law does not favor a particular vessel.
This applies analogously to give-way and stand-on metrics.

\subsection{Legal and True Game}
\label{subsec:colregs_games}

\begin{figure}
    \centering
    \begin{tikzpicture}[scale=0.35]
    \pgfdeclarelayer{callouts}
    \pgfsetlayers{callouts,main}

    \def\vdist{0.2cm}
    \def\hdist{0.0cm}

    \tikzset{trajectory/.style={very thick,-latex'}}
    \tikzset{metric/.style={font=\footnotesize, inner sep=2pt}}
    \tikzset{preference/.style={draw,cloud callout,thick}}

    \vessel{(0, 0)}{90}{TUMBlue}{$V_0$}{1}
    \vessel{(7, 12)}{210}{TUMOrange}{$V_1$}{2}
    \vessel{(-7, 12)}{-30}{TUMGreen}{$V_2$}{3}

    \begin{scope}[shift={(front_1)}, rotate=90]
        \tikzset{every edge/.append style={trajectory,TUMBlue!60,out=0,in=180}}
        \draw (0, 0)
            edge (4, 0)
            edge (4, -3)
            edge (4, 3);
        \draw (2, 0)
            edge (4, -1.5)
            edge (4, 1.5);
    \end{scope}
    \node[metric] (1_m11) at (6.1, 0) {$\collMetric{0}{1}$};
    \node[metric,right=\hdist of 1_m11] (1_m12) {$\collMetric{0}{2}$};
    
    \node[metric,below=\vdist of 1_m11] (1_m22) {$\standOnMetric{0}{1}$};
    \node[metric,left=\hdist of 1_m22] (1_m21) {$\giveWayMetric{0}{1}$};
    \node[metric,below=\vdist of 1_m12] (1_m23) {$\giveWayMetric{0}{2}$};
    \node[metric,right=\hdist of 1_m23] (1_m24) {$\standOnMetric{0}{2}$};
    
    \begin{scope}
        \draw (1_m21) -- (1_m11);
        \draw (1_m22) -- (1_m11);
        \draw (1_m23) -- (1_m11);
        \draw (1_m24) -- (1_m11);

        \draw (1_m21) -- (1_m12);
        \draw (1_m22) -- (1_m12);
        \draw (1_m23) -- (1_m12);
        \draw (1_m24) -- (1_m12);
    \end{scope}
    \begin{pgfonlayer}{callouts}
        \coordinate (mid1) at ($(1_m22)!0.5!(1_m23)$) {};
        \node[
            preference,
            draw=TUMBlue,
            fill=TUMBlue!20,
            aspect=2,
            callout absolute pointer={(1.4,1.2)},
            inner sep=16pt,
        ] at ($(mid1) + (0, 0.6)$) {};
    \end{pgfonlayer}

    \begin{scope}[shift={(front_2)}, rotate=210]
        \tikzset{every edge/.append style={trajectory,TUMOrange!60,out=0,in=180}}
        \draw (0, 0)
            edge (4, 0)
            edge (4, -3)
            edge (4, 3);
        \draw (2, 0)
            edge (4, -1.5)
            edge (4, 1.5);
    \end{scope}
    \begin{scope}
        \node[metric] (2_m11) at (8, 7) {$\collMetric{1}{0}$};
        \node[metric,right=\hdist of 2_m11] (2_m12) {$\collMetric{1}{2}$};
        
        \coordinate (mid2) at ($(2_m11.south east)!0.5!(2_m12.south west)$);
        \node[metric,below=\vdist of mid2] (2_m21) {
            $\giveWayMetric{1}{0} +
            \standOnMetric{1}{0} +
            \giveWayMetric{1}{2} +
            \standOnMetric{1}{2}$
        };
        \node[metric,below=\vdist of 2_m21] (2_m31) {$m^\mathtt{perf}_1$};
        
        \draw (2_m21) -- (2_m11);
        \draw (2_m21) -- (2_m12);
        \draw (2_m31) -- (2_m21);
        \begin{pgfonlayer}{callouts}
            \node[
                preference,
                draw=TUMOrange,
                fill=TUMOrange!20,
                aspect=2,
                callout absolute pointer={(7,10.2)},
                inner sep=18pt,
            ] at ($(2_m21) + (-0.05, 0.2)$) {};
        \end{pgfonlayer}
    \end{scope}

    \begin{scope}[shift={(front_3)}, rotate=-30]
        \tikzset{every edge/.append style={trajectory,TUMGreen!60,out=0,in=180}}
        \draw (0, 0)
            edge (4, 0)
            edge (4, -3)
            edge (4, 3);
        \draw (2, 0)
            edge (4, -1.5)
            edge (4, 1.5);
    \end{scope}
    \begin{scope}
        \node[metric] (3_m12) at (-5, 4) {$\collMetric{2}{0}$};
        \node[metric,left=\hdist of 3_m12] (3_m11) {$\collMetric{2}{1}$};
        
        \node[metric,below=\vdist of 3_m11] (3_m22) {$\giveWayMetric{2}{0}$};
        \node[metric,below=\vdist of 3_m12] (3_m23) {$\giveWayMetric{2}{1}$};
        \node[metric,below=\vdist of 3_m22] (3_m21) {$\standOnMetric{2}{0}$};
        \node[metric,below=\vdist of 3_m23] (3_m24) {$\standOnMetric{2}{1}$};
        
        \coordinate (mid3) at ($(3_m21.south east)!0.5!(3_m24.south west)$);
        \node[metric,below=\vdist of mid3] (3_m31) {$m^\mathtt{perf}_2$};
        
        \draw (3_m22) -- (3_m11);
        \draw (3_m23) -- (3_m11);
        \draw (3_m22) -- (3_m12);
        \draw (3_m23) -- (3_m12);

        \draw (3_m21) -- (3_m22);
        \draw (3_m24) -- (3_m22);
        \draw (3_m21) -- (3_m23);
        \draw (3_m24) -- (3_m23);

        \draw (3_m31) -- (3_m21);
        \draw (3_m31) -- (3_m24);
        
        \coordinate (mid3_2) at ($(3_m22)!0.5!(3_m23)$);
        \begin{pgfonlayer}{callouts}
            \node[
                preference,
                cloud puffs=12,
                draw=TUMGreen,
                fill=TUMGreen!20,
                aspect=0.5,
                callout absolute pointer={(-6.6,10)},
                inner sep=17pt,
            ] at ($(mid3_2) + (0, -0.4)$) {};
        \end{pgfonlayer}
    \end{scope}
\end{tikzpicture}
    \caption{%
        Modeling motion planning under \colregs in an encounter of three vessels as a posetal game.
        We visualize the continuous action space of each vessel using a finite number of trajectories.
        In the legal game, every vessel would use its legal preference as shown for $V_0$.
        In the true game, the vessels use refined preferences including their individual objectives and customized implementations of the \colregs as shown for $V_1$ and $V_2$.
    }
    \label{fig:colregs_game}
\end{figure}

To examine the automatically derived posetal games, we consider a scenario $\scenario$ representing an encounter of three vessels as shown in \cref{fig:colregs_game}.
In the legal game $\legalGame$, every vessel uses its legal preference, which only involves metrics and priorities prescribed by the law, as shown for $V_0$.
In the true game $\trueGame$, the vessels may use any refinement of the legal preference.
We demonstrate that \cref{prop:preference_priority_refinement,prop:preference_aggregation,prop:preference_augmentation} handle many commonly desired refinements.
To incorporate the individual objectives of each vessel, we can use \cref{prop:preference_augmentation} to augment the legal preference with the individual preference over suitable performance metrics.
In \cref{fig:colregs_game}, both $V_1$ and $V_2$ include a performance metric $m^\mathtt{perf}_{\egoIndex}$ using this technique.
\Cref{prop:preference_priority_refinement} allows us to define priorities between metrics that are incomparable according to the law.
Doing so, $V_2$ prioritizes give-way obligations over stand-on obligations.
Finally, we may aggregate any incomparable metrics into a single objective using \cref{prop:preference_aggregation}, \eg, to improve computational tractability.
This is shown for $V_1$, which aggregates all encounter rule metrics.

\subsection{Regulatory Change}
\label{subsec:colregs_regulatory_change}

We outline how the guideline refinement operations from \cref{subsec:handling_regulatory_change} could be used to accommodate potential future changes to the \colregs.
Although these modifications are hypothetical, they reflect changes that may reasonably be considered by regulators.

Many existing implementations of the \colregs implicitly or explicitly prioritize give-way obligations over stand-on obligations, \eg, \cite{wingqvistCollisionAvoidanceASVs2024,meyerCOLREGCompliantCollisionAvoidance2020,johansenShipCollisionAvoidance2016}.
To make this prioritization a legal requirement, we use \cref{prop:guideline_priority_refinement,cor:minimal_priority_refinement} to add the edge $\{ \standOnTag \} \posetPreceq_{\colregsGuideline} \{ \giveWayTag \}$ to $\colregsGuideline$.
Under the refined guideline, every vessel must prioritize give-way metrics $\giveWayMetric{\egoIndex}{\obstacleIndex}$ over stand-on metrics $\standOnMetric{\egoIndex}{\obstacleIndex}$ in its preference.
For the scenario from \cref{fig:colregs_game}, this makes the preference shown for $V_2$ the new legal preference.

\begin{figure}
    \centering
    \subfloat[]{%
        \label{subfig:adjusted_colregs_guideline:hierarchy}
        \centering
        \begin{tikzcd}[row sep=tiny,column sep=-30pt]
    \{ \nucTag \} \ar[dr,dash] & & \{ \hamperedTag \} \ar[dl,dash] \\
    & \{ \fishingTag \} \ar[d,dash] & \\
    & \{ \sailingTag \} \ar[d,dash] & \\
    & \{ \powerDrivenTag \} &
\end{tikzcd}%
    }
    \\
    \subfloat[]{%
        \label{subfig:adjusted_colregs_guideline:result}
        \centering
        \begin{tikzpicture}
    \def\vdist{0.2cm}
    \def\hdist{0.1cm}

    \tikzset{prio/.style={TUMBlue}}

    \node (collision) at (0, 0) {$\{ \collTag \}$};
    
    \node[below left=\vdist and \hdist / 2 of collision.south] (gw_h) {$\{ \metricTag{gw}, \metricTag{h} \}$};
    \node[below right=\vdist and \hdist / 2 of collision.south] (so_nuc) {$\{ \metricTag{so}, \metricTag{nuc} \}$};
    \node[left=\hdist of gw_h] (gw_nuc) {$\{ \metricTag{gw}, \metricTag{nuc} \}$};
    \node[right=\hdist of so_nuc] (so_h) {$\{ \metricTag{so}, \metricTag{h} \}$};

    \node[below left=\vdist and \hdist / 2 of gw_h.south] (gw_f) {$\{ \metricTag{gw}, \metricTag{f} \}$};
    \node[below right=\vdist and \hdist / 2 of so_nuc.south] (so_f) {$\{ \metricTag{so}, \metricTag{f} \}$};

    \node[below=\vdist of gw_f.south] (gw_s) {$\{ \metricTag{gw}, \metricTag{s} \}$};
    \node[below=\vdist of so_f.south] (so_s) {$\{ \metricTag{so}, \metricTag{s} \}$};

    \node[below=\vdist of gw_s.south] (gw_p) {$\{ \metricTag{gw}, \metricTag{p} \}$};
    \node[below=\vdist of so_s.south] (so_p) {$\{ \metricTag{so}, \metricTag{p} \}$};

    \draw (collision) -- (gw_nuc);
    \draw (collision) -- (gw_h);
    \draw (collision) -- (so_nuc);
    \draw (collision) -- (so_h);

    \draw (gw_nuc) -- (gw_f);
    \draw[prio] (gw_nuc) -- (so_f);
    \draw (gw_h) -- (gw_f);
    \draw[prio] (gw_h) -- (so_f);
    \draw[prio] (so_nuc) -- (gw_f);
    \draw (so_nuc) -- (so_f);
    \draw[prio] (so_h) -- (gw_f);
    \draw (so_h) -- (so_f);

    \draw (gw_f) -- (gw_s);
    \draw[prio] (gw_f) -- (so_s);
    \draw[prio] (so_f) -- (gw_s);
    \draw (so_f) -- (so_s);

    \draw (gw_s) -- (gw_p);
    \draw[prio] (gw_s) -- (so_p);
    \draw[prio] (so_s) -- (gw_p);
    \draw (so_s) -- (so_p);
\end{tikzpicture}%
    }
    \hfil
    \caption{%
        Modifying $\colregsGuideline$ to prioritize the encounter rule metrics according to the vessel class hierarchy.
        (a) The guideline $\guideline^\mathtt{R18}$ encoding the hierarchy.
        (b) The refined guideline after injection and priority refinement.
        The edges added via priority refinement are highlighted.
        We abbreviate some labels to keep the figure compact, \eg, $\metricTag{gw}$ is short for $\giveWayTag$.
    }
    \label{fig:adjusted_colregs_guideline}
\end{figure}

Let us now consider the responsibilities between vessels from \cite[Rule~18]{internationalmaritimeorganizationimoCOLREGsConventionInternational1972}.
The rule establishes a hierarchy of vessel classes and requires vessels to give-way to any vessel higher up in the hierarchy, which in turn stands on.
For multi-vessel encounters, a natural extension is to require that encounter obligations are fulfilled \wrt the highest-priority vessel first.
To incorporate this requirement, we first define the guideline $\guideline^\mathtt{R18}$ as shown in \cref{subfig:adjusted_colregs_guideline:hierarchy} to capture the hierarchy.
Using \cref{prop:guideline_injection}, we split both the $\{ \giveWayTag \}$ and $\{ \standOnTag \}$ concerns of $\colregsGuideline$ according to $\guideline^\mathtt{R18}$, \ie we compute $(\colregsGuideline \inject{\{ \giveWayTag \}} \guideline^\mathtt{R18}) \inject{\{ \standOnTag \}} \guideline^\mathtt{R18}$.
Adding the cross-priorities via \cref{prop:guideline_priority_refinement} yields the guideline shown in \cref{subfig:adjusted_colregs_guideline:result}, which enforces the desired prioritization.

Finally, \cite[Rule~1(b)]{internationalmaritimeorganizationimoCOLREGsConventionInternational1972} lets local authorities issue special rules that take precedence over the general regulations.
An example are the U.S. Inland Navigation Rules, which contain special provisions for the Great Lakes~\cite{unitedstatescoastguardNavigationRulesRegulations2024}.
Modeling the priorities of such a regime over its own concerns as a guideline $\guideline^\mathtt{local}$, we use \cref{prop:guideline_augmentation} to augment $\guideline^\mathtt{local}$ with $\colregsGuideline$.
In the resulting guideline, the local rules always take precedence over the general \colregs rules.

\tikzexternalenable
\section{Evaluation}
\label{sec:evaluation}

\providecommand{\changed}[1]{\textcolor{TUMOrange}{#1}}

\begin{figure*}[t]
  \centering
  \begin{tikzpicture}	
	\begin{axis}[
        hide axis,
        xmin=0, xmax=1,
        ymin=0, ymax=1,
        legend columns=-1,
        legend style={draw=none, fill=none, /tikz/every even column/.append style={column sep=1em}, /tikz/nodes={font=\scriptsize}},
    ]
        \addlegendimage{line legend, ultra thick, color=TUMBlue}
        \addlegendentry{Alfa}

        \addlegendimage{line legend, ultra thick, color=TUMOrange}
        \addlegendentry{Bravo}

        \addlegendimage{line legend, ultra thick, color=TUMGreen}
        \addlegendentry{Charlie}

        \addlegendimage{line legend, ultra thick, color=TUMPurple}
        \addlegendentry{Delta}

        \addlegendimage{initial state, TUMDarkGray}
        \addlegendentry{Initial state}

        \addlegendimage{goal, TUMDarkGray}
        \addlegendentry{Goal}
    \end{axis}
\end{tikzpicture} \\[0.2cm]
  \input{figures/eval_nice/trajectory_sets.tex}%
{%
\pgfplotstableread[col sep=comma]{data/trajectories/crossing/alfa.csv}{\trajCrossingAlfa}%
\pgfplotstableread[col sep=comma]{data/trajectories/crossing/bravo.csv}{\trajCrossingBravo}%
\pgfplotstableread[col sep=comma]{data/trajectories/triangle/alfa.csv}{\trajTriangleAlfa}%
\pgfplotstableread[col sep=comma]{data/trajectories/triangle/bravo.csv}{\trajTriangleBravo}%
\pgfplotstableread[col sep=comma]{data/trajectories/triangle/charlie.csv}{\trajTriangleCharlie}%
\pgfplotstableread[col sep=comma]{data/trajectories/crossing_overtaking/alfa.csv}{\trajCrossingOvertakingAlfa}%
\pgfplotstableread[col sep=comma]{data/trajectories/crossing_overtaking/bravo.csv}{\trajCrossingOvertakingBravo}%
\pgfplotstableread[col sep=comma]{data/trajectories/crossing_overtaking/charlie.csv}{\trajCrossingOvertakingCharlie}%
\pgfplotstableread[col sep=comma]{data/trajectories/four_agent_prio/alfa.csv}{\trajFourAgentPrioAlfa}%
\pgfplotstableread[col sep=comma]{data/trajectories/four_agent_prio/bravo.csv}{\trajFourAgentPrioBravo}%
\pgfplotstableread[col sep=comma]{data/trajectories/four_agent_prio/charlie.csv}{\trajFourAgentPrioCharlie}%
\pgfplotstableread[col sep=comma]{data/trajectories/four_agent_prio/delta.csv}{\trajFourAgentPrioDelta}%
\pgfplotstableread[col sep=comma]{data/trajectories/gauntlet/alfa.csv}{\trajGauntletAlfa}%
\pgfplotstableread[col sep=comma]{data/trajectories/gauntlet/bravo.csv}{\trajGauntletBravo}%
\pgfplotstableread[col sep=comma]{data/trajectories/gauntlet/charlie.csv}{\trajGauntletCharlie}%
\pgfplotstableread[col sep=comma]{data/trajectories/gauntlet/delta.csv}{\trajGauntletDelta}%

\pgfplotsset{action_space/.style={
    width=0.25\linewidth,
    height=0.25\linewidth,
    axis equal,
    title style={anchor=north, font=\footnotesize},
    grid style={line width=.1pt, draw=gray!20},
    major grid style={line width=.2pt,draw=gray!50},
    grid=both,
    label style={font=\scriptsize},
    tick label style={font=\scriptsize},
    minor tick num=4,
    xlabel={x [\si{\kilo \meter}]},
    xtick={-2, -1, ..., 3},
    ylabel={y [\si{\kilo \meter}]},
    ytick={-2, -1, ..., 3},
    cycle list name=trajectory,
}}

\begin{tikzpicture}
    \begin{groupplot}[
        group style={
            group size=3 by 1,
            x descriptions at=edge bottom,
            y descriptions at=edge left,
            horizontal sep=0.3cm,
            vertical sep=0.75cm,
        },
        action_space,
        ymin=-1.25, ymax=1.65,
    ]
        \nextgroupplot[title={(a) Crossing}, title style={at={(0.5,-1cm)}}, point meta min=0, point meta max=68]
        \addplot+[forget plot, trajectory, line width=0.6pt, all trajs, gradient] table[x expr=\thisrow{x} / 1000, y expr=\thisrow{y} / 1000] {\trajCrossingAlfa};
        \addplot+[initial state, forget plot] coordinates {(0.0, -1.0000)};
        \addplot+[goal] coordinates {(0.0, 1.4000)};
        \addplot+[forget plot, trajectory, line width=0.6pt, all trajs, gradient] table[x expr=\thisrow{x} / 1000, y expr=\thisrow{y} / 1000] {\trajCrossingBravo};
        \addplot+[initial state, forget plot] coordinates {(0.8660, 0.5000)};
        \addplot+[goal] coordinates {(-1.2124, -0.7000)};

        \nextgroupplot[title={(b) Triangle}, title style={at={(0.5,-1cm)}}, point meta min=0, point meta max=68]
        \addplot+[forget plot, trajectory, line width=0.6pt, all trajs, gradient] table[x expr=\thisrow{x} / 1000, y expr=\thisrow{y} / 1000] {\trajTriangleAlfa};
        \addplot+[initial state, forget plot] coordinates {(0.0, -1.0000)};
        \addplot+[goal] coordinates {(0.0, 1.4000)};
        \addplot+[forget plot, trajectory, line width=0.6pt, all trajs, gradient] table[x expr=\thisrow{x} / 1000, y expr=\thisrow{y} / 1000] {\trajTriangleBravo};
        \addplot+[initial state, forget plot] coordinates {(0.8660, 0.5000)};
        \addplot+[goal] coordinates {(-1.2124, -0.7000)};
        \addplot+[forget plot, trajectory, line width=0.6pt, all trajs, gradient] table[x expr=\thisrow{x} / 1000, y expr=\thisrow{y} / 1000] {\trajTriangleCharlie};
        \addplot+[initial state, forget plot] coordinates {(-0.8660, 0.5000)};
        \addplot+[goal] coordinates {(1.2124, -0.7000)};

        \nextgroupplot[title={(c) Crossing--overtaking}, title style={at={(0.5,-1cm)}}, point meta min=0, point meta max=94]
        \addplot+[forget plot, trajectory, line width=0.6pt, all trajs, gradient] table[x expr=\thisrow{x} / 1000, y expr=\thisrow{y} / 1000] {\trajCrossingOvertakingAlfa};
        \addplot+[initial state, forget plot] coordinates {(0.0, 0.0)};
        \addplot+[goal] coordinates {(1.5000, 0.0)};
        \addplot+[forget plot, trajectory, line width=0.6pt, all trajs, gradient] table[x expr=\thisrow{x} / 1000, y expr=\thisrow{y} / 1000] {\trajCrossingOvertakingBravo};
        \addplot+[initial state, forget plot] coordinates {(1.0000, -1.0000)};
        \addplot+[goal] coordinates {(1.0000, 1.2000)};
        \addplot+[forget plot, trajectory, line width=0.6pt, all trajs, gradient] table[x expr=\thisrow{x} / 1000, y expr=\thisrow{y} / 1000] {\trajCrossingOvertakingCharlie};
        \addplot+[initial state, forget plot] coordinates {(-0.5000, 0.0)};
        \addplot+[goal] coordinates {(1.9000, 0.0)};
    \end{groupplot}
\end{tikzpicture}
\hfil
\begin{tikzpicture}
    \begin{groupplot}[
        group style={
            group size=2 by 1,
            x descriptions at=edge bottom,
            y descriptions at=edge left,
            horizontal sep=0.3cm,
            vertical sep=0.75cm,
        },
        action_space,
        ymin=-1.65, ymax=1.65,
    ]
        \nextgroupplot[title={(d) Four-agent priority}, title style={at={(0.5,-1cm)}}, point meta min=0, point meta max=127]
        \addplot+[forget plot, trajectory, line width=0.6pt, all trajs, gradient] table[x expr=\thisrow{x} / 1000, y expr=\thisrow{y} / 1000] {\trajFourAgentPrioAlfa};
        \addplot+[initial state, forget plot] coordinates {(0.0, 0.0)};
        \addplot+[goal] coordinates {(1.5000, 0.0)};
        \addplot+[forget plot, trajectory, line width=0.6pt, all trajs, gradient] table[x expr=\thisrow{x} / 1000, y expr=\thisrow{y} / 1000] {\trajFourAgentPrioBravo};
        \addplot+[initial state, forget plot] coordinates {(1.0000, -1.0000)};
        \addplot+[goal] coordinates {(1.0000, 1.2000)};
        \addplot+[forget plot, trajectory, line width=0.6pt, all trajs, gradient] table[x expr=\thisrow{x} / 1000, y expr=\thisrow{y} / 1000] {\trajFourAgentPrioCharlie};
        \addplot+[initial state, forget plot] coordinates {(-0.7000, 0.0)};
        \addplot+[goal] coordinates {(2.2000, 0.0)};
        \addplot+[forget plot, trajectory, line width=0.6pt, all trajs, gradient] table[x expr=\thisrow{x} / 1000, y expr=\thisrow{y} / 1000] {\trajFourAgentPrioDelta};
        \addplot+[initial state, forget plot] coordinates {(2.5000, 0.0)};
        \addplot+[goal] coordinates {(0.3000, 0.0)};

        \nextgroupplot[title={(e) Gauntlet}, title style={at={(0.5,-1cm)}}, point meta min=0, point meta max=160]
        \addplot+[forget plot, trajectory, line width=0.6pt, all trajs, gradient] table[x expr=\thisrow{x} / 1000, y expr=\thisrow{y} / 1000] {\trajGauntletAlfa};
        \addplot+[initial state, forget plot] coordinates {(0.0, 0.0)};
        \addplot+[goal] coordinates {(2.2000, 0.0)};
        \addplot+[forget plot, trajectory, line width=0.6pt, all trajs, gradient] table[x expr=\thisrow{x} / 1000, y expr=\thisrow{y} / 1000] {\trajGauntletBravo};
        \addplot+[initial state, forget plot] coordinates {(1.0000, -1.0000)};
        \addplot+[goal] coordinates {(1.0000, 1.2000)};
        \addplot+[forget plot, trajectory, line width=0.6pt, all trajs, gradient] table[x expr=\thisrow{x} / 1000, y expr=\thisrow{y} / 1000] {\trajGauntletCharlie};
        \addplot+[initial state, forget plot] coordinates {(-0.5000, 0.0)};
        \addplot+[goal] coordinates {(2.6000, 0.0)};
        \addplot+[forget plot, trajectory, line width=0.6pt, all trajs, gradient] table[x expr=\thisrow{x} / 1000, y expr=\thisrow{y} / 1000] {\trajGauntletDelta};
        \addplot+[initial state, forget plot] coordinates {(1.4000, -1.4000)};
        \addplot+[goal] coordinates {(1.4000, 0.8000)};
    \end{groupplot}
\end{tikzpicture}%
}%
%
  \vspace{-0.2cm} %
  \caption{%
    Action spaces for all five geometric scenario configurations.
    Each colored curve represents a candidate trajectory for one agent, shaded from a pale tint at $t=0$ to the saturated hue at the horizon.
  }
  \label{fig:action_spaces}
\end{figure*}

We evaluate the framework in one-shot \colregs motion-planning games, in which each agent commits to a complete dynamically feasible trajectory before the encounter begins.
Our goals are threefold.
First, we verify that every rule violation at the committed equilibria can be audited and that defectors can be identified from public information.
Second, we confirm that private and regulatory customizations shrink the set of equilibria as predicted by \cref{thm:global_nes_eq_local_nes} and \cref{prop:equilibrium_refinement}, and we quantify the efficiency cost of private preferences.
Finally, we compare the communication cost and solution quality of the two distributed protocols.
We first describe the common experimental setup, and then address the three goals.

\subsection{Experimental Setup}
\label{subsec:eval_setup}

We implement our evaluation in CommonOcean\footnote{\url{https://commonocean.cps.cit.tum.de}}~\cite{krasowskiCommonOceanComposableBenchmarks2022} to ensure reproducibility of the results.
Our evaluation uses five geometric scenarios, which are shown in \cref{fig:action_spaces}: a two-agent crossing, a three-agent triangle, the crossing--overtaking conflict of \cref{fig:conflict_encounter}, a four-agent priority encounter, and a four-agent gauntlet.
If not specified otherwise, the vessels are general power-driven vessels according to \cite[Rule~3]{internationalmaritimeorganizationimoCOLREGsConventionInternational1972}.
For the crossing and triangle scenarios, we also consider mixed-class variants, where the vessel Alfa is engaged in fishing.
In the four-agent priority encounter, Bravo is a vessel engaged in fishing, and Delta is a sailing vessel.
In the scenarios with vessels of different classes, \ie crossing + fishing, triangle + fishing, and the four-agent priority encounter, the legal guideline is $\colregsGuideline$ refined with the vessel-class hierarchy of \cite[Rule~18]{internationalmaritimeorganizationimoCOLREGsConventionInternational1972} as described in \cref{subsec:colregs_regulatory_change} (\cref{subfig:adjusted_colregs_guideline:result}); we use it for the legal and true games, the social metric, the audits, and the price of preference alike.
In all other scenarios, the legal guideline is $\colregsGuideline$, which induces the same preferences as the refined guideline when all vessels belong to the same class.
The initial speed is \SI{5}{\meter\per\second} for every vessel except Charlie, which travels at \SI{11}{\meter\per\second} in the crossing--overtaking scenario and at \SI{7.5}{\meter\per\second} in the four-agent priority encounter and the gauntlet.

\subsubsection{Vessel Models}

To model the vessel dynamics, we use the yaw-constrained kinematic model from \cite[Sec.~V-B]{krasowskiCommonOceanComposableBenchmarks2022} with acceleration and yaw-rate as bounded inputs.
Hull dimensions and acceleration limits are taken from two CommonOcean vessel types:
type 1 is \SI{175}{\meter} by \SI{25.4}{\meter} with \SI{0.24}{\meter\per\second\squared}, and type 2 is \SI{304.8}{\meter} by \SI{32}{\meter} with \SI{0.0127}{\meter\per\second\squared}.
The maximum yaw rate is set per vessel between \SI{0.03}{\radian\per\second} and \SI{0.06}{\radian\per\second}.

\subsubsection{Action Spaces}

For every scenario, each agent receives a finite set of dynamically feasible trajectories, computed using the above vessel model.
Each agent receives paths that alter course to either side, hold a parallel offset, and return to the original track; the offsets range from \SI{50}{\meter} to \SI{850}{\meter}.
For each path, we consider multiple speed profiles, \eg, constant speed and brake-and-resume.
As a result, the triangle has $41$ actions per agent, the crossing--overtaking game has $36/4/90$ actions for Alfa/Bravo/Charlie, the gauntlet has $104{,}975$ joint profiles, and the largest game, the four-agent priority encounter, has $164{,}934$ joint profiles.
Unless stated otherwise, we enumerate the joint action space exhaustively to determine equilibria.

\subsubsection{True Preferences}

In every scenario, each vessel has a circular goal region with radius \SI{150}{\meter}.
Each vessel $\egoIndex \in \players$ measures its performance using a goal-reaching metric and a path-length metric
\begin{subequations}
  \label{eq:performance_metrics}
  \begin{align}
    m_i^{\mathrm{goal}}(\actionProfile) & := -d_i^{\mathrm{goal}}(\actionProfile), \label{eq:goal_metric}                                                                        \\
    m_i^{\mathrm{len}}(\actionProfile)  & := -\sum_{t}\norm{p_i^{t+\Delta t}(\actionProfile)-p_i^t(\actionProfile)} - v_i^0\, T_i(\actionProfile), \label{eq:path_length_metric}
  \end{align}
\end{subequations}
where $d_i^{\mathrm{goal}}(\actionProfile)$ is the distance from the final position of the vessel to its goal region (zero inside it), $p_i^t(\actionProfile)$ its position at time $t$, $v_i^0$ its initial speed, and $T_i(\actionProfile)$ its travel time until it stops at its goal.
We arrange these into a preference $\preference[\egoIndex]^\mathtt{perf}$ by prioritizing $m_i^{\mathrm{goal}}$ over $m_i^{\mathrm{len}}$ so that reaching the goal dominates not reaching it and, among trajectories that reach, the shorter and faster one is preferred.
Hence, a give-way detour costs path length even if the vessel still reaches its goal, so compliance is never free.
To obtain the true preference of the vessel, we use \cref{prop:preference_augmentation} to place $\preference[\egoIndex]^\mathtt{perf}$ below its legal preference according to \cref{sec:case_study}, analogous to the performance metrics of $V_1$ and $V_2$ in \cref{fig:colregs_game}.
Hence, performance can never compensate for a worse legal outcome.

\subsubsection{Social Metric}

To select the committed solution $\actionProfile^*$ from $\nes{\trueGame}$ according to \eqref{eq:socially_optimal_true_ne}, we use a lexicographic social metric that follows the two levels of the default \colregs guideline.
For an action profile $\actionProfile \in \actionSpace$, let
\begin{align*}
  \socialMetric^{\mathrm{coll}}(\actionProfile) & := \textstyle\sum_{\egoIndex \in \players} \sum_{\obstacleIndex \in \otherPlayers} \collMetric{\egoIndex}{\obstacleIndex}(\actionProfile),                                                                                                               \\
  \socialMetric^{\mathrm{enc}}(\actionProfile)  & := \textstyle\sum_{\egoIndex \in \players} \sum_{\obstacleIndex \in \otherPlayers} \bigl(\giveWayMetric{\egoIndex}{\obstacleIndex}(\actionProfile) / s^{\mathrm{gw}} + \standOnMetric{\egoIndex}{\obstacleIndex}(\actionProfile) / s^{\mathrm{so}}\bigr)
\end{align*}
be the total collision and encounter compliance, where $s^{\mathrm{gw}}$ and $s^{\mathrm{so}}$ are the median magnitudes of the violated give-way and stand-on metrics over all pairs of actions in the game for normalization, and
\begin{equation}
  \label{eq:social_metric}
  \socialMetric(\actionProfile) := \bigl(\socialMetric^{\mathrm{coll}}(\actionProfile),\, \socialMetric^{\mathrm{enc}}(\actionProfile)\bigr),
\end{equation}
with remaining ties broken by a fixed public rule.
Since all legal metrics are capped at zero, both components are non-positive and become $0$ exactly when all vessels comply with the corresponding rules.
Across levels, $\socialMetric$ respects the priority of the guideline: no improvement in encounter compliance can compensate for a worse collision outcome.
Within a level, the guideline does not prioritize among the metrics of different vessels, so $\socialMetric$ weighs them equally.
Formally, $\socialMetric$ corresponds to a refinement of the preference $\inducedPreference[\colregsGuideline]{\bigcup_{\egoIndex \in \players} \colregsMetrics{\egoIndex}}$ that the guideline induces over the union of all legal metrics, obtained by aggregating the incomparable metrics of each level with \cref{prop:preference_aggregation}.
Other refinements could also serve as the social metric.
Observe that $\socialMetric$ is based only on public information, so every agent can evaluate it independently.

\begin{figure*}[t]
  \centering
  \begin{tikzpicture}	
	\begin{axis}[
        hide axis,
        xmin=0, xmax=1,
        ymin=0, ymax=1,
        legend columns=-1,
        legend style={draw=none, fill=none, /tikz/every even column/.append style={column sep=1em}, /tikz/nodes={font=\scriptsize}},
    ]
        \addlegendimage{line legend, ultra thick, color=TUMBlue}
        \addlegendentry{Alfa}

        \addlegendimage{line legend, ultra thick, color=TUMOrange}
        \addlegendentry{Bravo}

        \addlegendimage{line legend, ultra thick, color=TUMGreen}
        \addlegendentry{Charlie}

        \addlegendimage{line legend, ultra thick, color=TUMPurple}
        \addlegendentry{Delta}

        \addlegendimage{initial state, TUMDarkGray}
        \addlegendentry{Initial state}

        \addlegendimage{goal, TUMDarkGray}
        \addlegendentry{Goal}
    \end{axis}
\end{tikzpicture} \\[0.2cm]
  \input{figures/eval_nice/committed.tex}%
  \vspace{-0.2cm} %
  \caption{%
    Socially optimal equilibrium $\actionProfile^*$ \eqref{eq:socially_optimal_true_ne} of the true game.
    The shading of the trajectories encodes time like in \cref{fig:action_spaces}.
  }
  \label{fig:committed_close_goals}
\end{figure*}

\subsection{Auditability}
\label{subsec:eval_auditability}

\Cref{fig:committed_close_goals} shows the committed equilibrium $\actionProfile^*$ for our seven scenarios.
Since not all rules can be satisfied simultaneously in general, the committed equilibrium $\actionProfile^*$ may violate individual rules.
The posetal structure enables us to audit these violations and justify them in terms of conflicting higher-priority requirements.
Furthermore, it enables us to identify defectors, \ie vessels that violate legal requirements even though compliance would have been possible.

\subsubsection{Auditing Violations}
\label{subsubsec:auditing_violations}

For each committed solution, we audit every violated rule against the legal hierarchy, considering the unilateral deviations of the violating vessel.
A violation is \emph{unsatisfiable} within the action space if no such deviation satisfies the rule.
Otherwise, it is \emph{forced} if no deviation can improve the rule without worsening a metric of higher or incomparable priority in the legal preference.
In \cref{fig:committed_close_goals}, we can see that none of the vessels collide, choosing trajectories that maintain the highest priority objective of collision avoidance.
All lower-priority violations at the selected equilibria are forced; none is unsatisfiable within the action space.
Hence, every violation is explained by the legal hierarchy.
We audit all seven scenarios.
The committed equilibria of the crossing and the crossing + fishing scenario satisfy every rule, so they have no violations.
In the triangle, every vessel is the give-way vessel toward one vessel and the stand-on vessel toward another; all three alter course to starboard and thereby violate their stand-on duties, since each of the $2$ deviations that satisfy the stand-on duty of a vessel worsens its give-way metric.
The triangle + fishing scenario has a single forced violation, the stand-on duty of Charlie toward Bravo.
Crossing--overtaking has two forced violations, the stand-on duty of Alfa toward Charlie and the stand-on duty of Charlie toward Alfa: every deviation that satisfies one of them worsens either a collision metric or a give-way metric.
Under the guideline $\colregsGuideline$, give-way and stand-on are incomparable, so the collision concern forces at least one encounter violation, and the social metric selects a trade-off in which the stand-on duties of Alfa and Charlie are violated while every give-way duty is satisfied.
The four-agent priority encounter has five forced violations: the stand-on duty of Alfa toward Charlie, the give-way duties of Charlie toward Alfa and toward the fishing vessel Bravo, and the stand-on duties of Delta toward Alfa and toward Charlie.
The gauntlet has two forced violations, the stand-on duty of Alfa toward Charlie and the give-way duty of Charlie toward Alfa.

\subsubsection{Identifying Defectors}
\label{subsubsec:identifying_defectors}

\begin{figure*}[t]
  \centering
  \begin{tikzpicture}	
	\begin{axis}[
        hide axis,
        xmin=0, xmax=1,
        ymin=0, ymax=1,
        legend columns=-1,
        legend style={draw=none, fill=none, /tikz/every even column/.append style={column sep=1em}, /tikz/nodes={font=\scriptsize}},
    ]
        \addlegendimage{line legend, ultra thick, color=TUMBlue}
        \addlegendentry{Alfa}

        \addlegendimage{line legend, ultra thick, color=TUMOrange}
        \addlegendentry{Bravo}

        \addlegendimage{line legend, ultra thick, color=TUMGreen}
        \addlegendentry{Charlie}

        \addlegendimage{line legend, ultra thick, color=TUMDarkRed}
        \addlegendentry{Defector}

        \addlegendimage{initial state, TUMDarkGray}
        \addlegendentry{Initial state}

        \addlegendimage{goal, TUMDarkGray}
        \addlegendentry{Goal}
    \end{axis}
\end{tikzpicture} \\[0.2cm]
  \input{figures/eval_nice/trajectory_sets.tex}%
\begin{tikzpicture}
    \pgfplotstableread[col sep=comma]{data/trajectories/triangle/alfa.csv}{\trajTriangleAlfa}%
    \pgfplotstableread[col sep=comma]{data/trajectories/triangle/bravo.csv}{\trajTriangleBravo}%
    \pgfplotstableread[col sep=comma]{data/trajectories/triangle/charlie.csv}{\trajTriangleCharlie}%
    \trajset{DefectorTriangleAlfa}{0}%
    \trajset{DefectorTriangleBravo}{3}%
    \trajset{DefectorTriangleCharlie}{40}%

    \pgfplotstableread[col sep=comma]{data/trajectories/crossing_overtaking/alfa.csv}{\trajAlfa}%
    \pgfplotstableread[col sep=comma]{data/trajectories/crossing_overtaking/bravo.csv}{\trajBravo}%
    \pgfplotstableread[col sep=comma]{data/trajectories/crossing_overtaking/charlie.csv}{\trajCharlie}%
    \trajset{DefectorCrossingOvertakingAlfaAlfa}{0}%
    \trajset{DefectorCrossingOvertakingAlfaBravo}{3}%
    \trajset{DefectorCrossingOvertakingAlfaCharlie}{1}%
    \trajset{DefectorCrossingOvertakingBravoAlfa}{25}%
    \trajset{DefectorCrossingOvertakingBravoBravo}{0}%
    \trajset{DefectorCrossingOvertakingBravoCharlie}{84}%
    \trajset{DefectorCrossingOvertakingCharlieAlfa}{8}%
    \trajset{DefectorCrossingOvertakingCharlieBravo}{2}%
    \trajset{DefectorCrossingOvertakingCharlieCharlie}{0}%

    \begin{groupplot}[
        group style={
            group size=4 by 1,
            x descriptions at=edge bottom,
            y descriptions at=edge left,
            horizontal sep=0.5cm,
            vertical sep=0.75cm,
        },
        width=0.29\linewidth,
        height=0.29\linewidth,
        axis equal,
        title style={anchor=north, font=\footnotesize},
        grid style={line width=.1pt, draw=gray!20},
        major grid style={line width=.2pt,draw=gray!50},
        grid=both,
        label style={font=\scriptsize},
        tick label style={font=\scriptsize},
        minor tick num=4,
        xlabel={x [\si{\kilo \meter}]},
        xtick={-2, -1, ..., 2},
        ylabel={y [\si{\kilo \meter}]},
        ytick={-2, -1, ..., 3},
        ymin=-1.25, ymax=1.65,
        cycle list name=trajectory,
    ]
        \nextgroupplot[title={(a) Alfa defects (triangle)}, title style={at={(0.5,-1cm)}}, point meta min=0, point meta max=68]
        \addplot+[forget plot, trajectory, pick trajs=DefectorTriangleAlfa, defector, gradient] table[x expr=\thisrow{x} / 1000, y expr=\thisrow{y} / 1000] {\trajTriangleAlfa};
        \addplot+[initial state, forget plot, defector] coordinates {(0.0, -1.0000)};
        \addplot+[goal, defector] coordinates {(0.0, 1.4000)};
        \addplot+[forget plot, trajectory, pick trajs=DefectorTriangleBravo, gradient] table[x expr=\thisrow{x} / 1000, y expr=\thisrow{y} / 1000] {\trajTriangleBravo};
        \addplot+[initial state, forget plot] coordinates {(0.8660, 0.5000)};
        \addplot+[goal] coordinates {(-1.2124, -0.7000)};
        \addplot+[forget plot, trajectory, pick trajs=DefectorTriangleCharlie, gradient] table[x expr=\thisrow{x} / 1000, y expr=\thisrow{y} / 1000] {\trajTriangleCharlie};
        \addplot+[initial state, forget plot] coordinates {(-0.8660, 0.5000)};
        \addplot+[goal] coordinates {(1.2124, -0.7000)};

        \nextgroupplot[title={(b) Alfa defects (crossing--overtaking)}, title style={at={(0.5,-1cm)}}, point meta min=0, point meta max=94]
        \addplot+[forget plot, trajectory, pick trajs=DefectorCrossingOvertakingAlfaAlfa, defector, gradient] table[x expr=\thisrow{x} / 1000, y expr=\thisrow{y} / 1000] {\trajAlfa};
        \addplot+[initial state, forget plot, defector] coordinates {(0.0, 0.0)};
        \addplot+[goal, defector] coordinates {(1.5000, 0.0)};
        \addplot+[forget plot, trajectory, pick trajs=DefectorCrossingOvertakingAlfaBravo, gradient] table[x expr=\thisrow{x} / 1000, y expr=\thisrow{y} / 1000] {\trajBravo};
        \addplot+[initial state, forget plot] coordinates {(1.0000, -1.0000)};
        \addplot+[goal] coordinates {(1.0000, 1.2000)};
        \addplot+[forget plot, trajectory, pick trajs=DefectorCrossingOvertakingAlfaCharlie, gradient] table[x expr=\thisrow{x} / 1000, y expr=\thisrow{y} / 1000] {\trajCharlie};
        \addplot+[initial state, forget plot] coordinates {(-0.5000, 0.0)};
        \addplot+[goal] coordinates {(1.9000, 0.0)};

        \nextgroupplot[title={(c) Bravo defects}, title style={at={(0.5,-1cm)}}, point meta min=0, point meta max=94]
        \addplot+[forget plot, trajectory, pick trajs=DefectorCrossingOvertakingBravoAlfa, gradient] table[x expr=\thisrow{x} / 1000, y expr=\thisrow{y} / 1000] {\trajAlfa};
        \addplot+[initial state, forget plot] coordinates {(0.0, 0.0)};
        \addplot+[goal] coordinates {(1.5000, 0.0)};
        \addplot+[forget plot, trajectory, pick trajs=DefectorCrossingOvertakingBravoBravo, defector, gradient] table[x expr=\thisrow{x} / 1000, y expr=\thisrow{y} / 1000] {\trajBravo};
        \addplot+[initial state, forget plot, defector] coordinates {(1.0000, -1.0000)};
        \addplot+[goal, defector] coordinates {(1.0000, 1.2000)};
        \addplot+[forget plot, trajectory, pick trajs=DefectorCrossingOvertakingBravoCharlie, gradient] table[x expr=\thisrow{x} / 1000, y expr=\thisrow{y} / 1000] {\trajCharlie};
        \addplot+[initial state, forget plot] coordinates {(-0.5000, 0.0)};
        \addplot+[goal] coordinates {(1.9000, 0.0)};

        \nextgroupplot[title={(d) Charlie defects}, title style={at={(0.5,-1cm)}}, point meta min=0, point meta max=94]
        \addplot+[forget plot, trajectory, pick trajs=DefectorCrossingOvertakingCharlieAlfa, gradient] table[x expr=\thisrow{x} / 1000, y expr=\thisrow{y} / 1000] {\trajAlfa};
        \addplot+[initial state, forget plot] coordinates {(0.0, 0.0)};
        \addplot+[goal] coordinates {(1.5000, 0.0)};
        \addplot+[forget plot, trajectory, pick trajs=DefectorCrossingOvertakingCharlieBravo, gradient] table[x expr=\thisrow{x} / 1000, y expr=\thisrow{y} / 1000] {\trajBravo};
        \addplot+[initial state, forget plot] coordinates {(1.0000, -1.0000)};
        \addplot+[goal] coordinates {(1.0000, 1.2000)};
        \addplot+[forget plot, trajectory, pick trajs=DefectorCrossingOvertakingCharlieCharlie, defector, gradient] table[x expr=\thisrow{x} / 1000, y expr=\thisrow{y} / 1000] {\trajCharlie};
        \addplot+[initial state, forget plot, defector] coordinates {(-0.5000, 0.0)};
        \addplot+[goal, defector] coordinates {(1.9000, 0.0)};
    \end{groupplot}
\end{tikzpicture}%
  \vspace{-0.2cm} %
  \caption{%
    Defector trajectories for (a) the triangle scenario and (b)--(d) the crossing--overtaking scenario.
    The highlighted agent ignores legal requirements and always moves straight to its goal.
    The shading of the trajectories encodes time like in \cref{fig:action_spaces}.
  }
  \label{fig:defector}
\end{figure*}

We now consider four scenarios that include a defector:
triangle with Alfa as the defector and crossing--overtaking with each of Alfa, Bravo, and Charlie as the defector in turn.
The defector always moves straight to its goal at full speed while the other agents respond legally;
the resulting trajectories are shown in \cref{fig:defector}.
Given the committed action profile $\actionProfile$, we can identify the agent $\egoIndex \in \players$ as the defector if its action is not a legal best-response, \ie $\actionProfile[\egoIndex] \notin \bestResponse{\inducedPreference[\legalMetrics{\egoIndex}]{\legalGuideline}}(\actionProfile[-\egoIndex])$.
Observe that this test only uses the publicly available information from the legal game and does not require the auditor to know the true preference of any agent.
The test successfully identifies the defector in three of the four cases.
The exception is Bravo, which is the stand-on vessel toward both other vessels and satisfies its stand-on duties:
moving straight to its goal at full speed is its action in the committed equilibrium $\actionProfile^*$, and the other agents respond with their actions in $\actionProfile^*$.
If the action of the defector is itself a legal best response, as in this case, the resulting behavior is legal, and the defector has merely steered the agents toward an equilibrium it favors.

\subsection{Customizability}
\label{subsec:eval_customizability}

We demonstrate customizability on the triangle scenario, holding the legal guideline fixed to $\colregsGuideline$ throughout this experiment.
Private preferences are refined on the homogeneous triangle with three power-driven vessels, and the legal guideline is refined on its mixed-class variant, due to the vessel-class hierarchy of \cite[Rule~18]{internationalmaritimeorganizationimoCOLREGsConventionInternational1972}.

\subsubsection{Private Preference Refinements}

Out of the $102$ \nasheqs of the legal game, only $17$ are \nasheqs of the true game, in which \cref{prop:preference_augmentation} places the performance metrics \eqref{eq:performance_metrics} below the legal preference of every vessel.
Every true equilibrium is legal, as shown in \cref{fig:customizability}a and predicted by \cref{thm:true_game_is_legal}.
We now further refine the private preferences of the vessels.
Using \cref{prop:preference_priority_refinement} to let Bravo prioritize its give-way metrics over its stand-on metrics reduces the number of true equilibria to $11$.
After Charlie applies the opposite priority refinement, prioritizing stand-on over give-way, only $6$ equilibria are left.
Every refinement only shrinks the set of equilibria rather than adding new ones, confirming \cref{prop:equilibrium_refinement}: the four sets are nested, as \cref{fig:customizability} shows.

\subsubsection{Guideline Refinements}

To reflect a change in the law, we refine the legal guideline by adding the vessel-class hierarchy of \cite[Rule~18]{internationalmaritimeorganizationimoCOLREGsConventionInternational1972} as described in \cref{subsec:colregs_regulatory_change}.
On the mixed-class triangle, this shrinks the legal set from $92$ to $50$ equilibria.
The number of legal equilibria differs from that of the homogeneous one, because the encounter metrics toward a fishing vessel follow the class-specific rules. %
The $42$ removed equilibria are those in which a power-driven vessel violates its give-way duty toward the fishing vessel.
The legal equilibria under the refined guideline are a subset of those under the original one, as expected (see \cref{fig:induction_homomorphism}).

\begin{figure}[t]
  \centering
  \begin{tikzpicture}	
	\begin{axis}[
        hide axis,
        xmin=0, xmax=1,
        ymin=0, ymax=1,
        legend columns=-1,
        legend style={draw=none, fill=none, /tikz/every even column/.append style={column sep=1em}, /tikz/nodes={font=\scriptsize}},
    ]
        \addlegendimage{original}
        \addlegendentry{Original}

        \addlegendimage{refined}
        \addlegendentry{Refined}
    \end{axis}
\end{tikzpicture} \\[0.2cm]
  \input{figures/eval_nice/trajectory_sets.tex}%
{%
\pgfplotstableread[col sep=comma]{data/trajectories/triangle/alfa.csv}{\trajAlfaTriangle}%
\pgfplotstableread[col sep=comma]{data/trajectories/triangle/bravo.csv}{\trajBravoTriangle}%
\pgfplotstableread[col sep=comma]{data/trajectories/triangle/charlie.csv}{\trajCharlieTriangle}%
\trajset{CustomTriangleAAlfa}{0, 1, 2, 3, 9, 13, 15, 19, 20, 31, 32, 33, 34, 39, 40}%
\trajset{CustomTriangleABravo}{0, 1, 2, 3, 9, 13, 15, 19, 20, 31, 32, 33, 34, 39, 40}%
\trajset{CustomTriangleACharlie}{0, 1, 2, 3, 9, 13, 15, 19, 20, 31, 32, 33, 34, 39, 40}%
\trajset{CustomTriangleBAlfa}{0, 1, 2, 3, 9, 13, 31, 39, 40}%
\trajset{CustomTriangleBBravo}{0, 1, 2, 3, 9, 13, 31, 39, 40}%
\trajset{CustomTriangleBCharlie}{0, 1, 2, 3, 9, 13, 31, 39, 40}%
\trajset{CustomTriangleCAlfa}{0, 1, 2, 3, 9, 13, 39, 40}%
\trajset{CustomTriangleCBravo}{2, 3, 9, 31, 39, 40}%
\trajset{CustomTriangleCCharlie}{0, 1, 3, 9, 13, 31, 39, 40}%
\trajset{CustomTriangleDAlfa}{0, 1, 3, 13, 39}%
\trajset{CustomTriangleDBravo}{3, 9, 39, 40}%
\trajset{CustomTriangleDCharlie}{0, 9, 13, 31, 39, 40}%

\pgfplotsset{customization/.style={
    width=0.58\linewidth,
    height=0.58\linewidth,
    axis equal,
    title style={anchor=north, font=\footnotesize},
    grid style={line width=.1pt, draw=gray!20},
    major grid style={line width=.2pt,draw=gray!50},
    grid=both,
    label style={font=\scriptsize},
    tick label style={font=\scriptsize},
    minor tick num=4,
    xlabel={x [\si{\kilo \meter}]},
    xtick={-2, -1, ..., 2},
    ylabel={y [\si{\kilo \meter}]},
    ytick={-2, -1, ..., 2},
    cycle list name=trajectory,
}}

\begin{tikzpicture}
    \begin{groupplot}[
        group style={
            group size=2 by 1,
            x descriptions at=edge bottom,
            y descriptions at=edge left,
            horizontal sep=0.5cm,
            vertical sep=0.75cm,
        },
        customization,
    ]
        \nextgroupplot[title={(a)}, title style={at={(0.5,-1cm)}}]
        \addplot+[forget plot, trajectory, pick trajs=CustomTriangleAAlfa, original] table[x expr=\thisrow{x} / 1000, y expr=\thisrow{y} /1000] {\trajAlfaTriangle};
        \addplot+[forget plot, trajectory, pick trajs=CustomTriangleABravo, original] table[x expr=\thisrow{x} / 1000, y expr=\thisrow{y} /1000] {\trajBravoTriangle};
        \addplot+[forget plot, trajectory, pick trajs=CustomTriangleACharlie, original] table[x expr=\thisrow{x} / 1000, y expr=\thisrow{y} /1000] {\trajCharlieTriangle};
        \addplot+[forget plot, trajectory, pick trajs=CustomTriangleBAlfa, refined] table[x expr=\thisrow{x} / 1000, y expr=\thisrow{y} /1000] {\trajAlfaTriangle};
        \addplot+[forget plot, trajectory, pick trajs=CustomTriangleBBravo, refined] table[x expr=\thisrow{x} / 1000, y expr=\thisrow{y} /1000] {\trajBravoTriangle};
        \addplot+[forget plot, trajectory, pick trajs=CustomTriangleBCharlie, refined] table[x expr=\thisrow{x} / 1000, y expr=\thisrow{y} /1000] {\trajCharlieTriangle};

        \nextgroupplot[title={(b)}, title style={at={(0.5,-1cm)}}]
        \addplot+[forget plot, trajectory, pick trajs=CustomTriangleBAlfa, original] table[x expr=\thisrow{x} / 1000, y expr=\thisrow{y} /1000] {\trajAlfaTriangle};
        \addplot+[forget plot, trajectory, pick trajs=CustomTriangleBBravo, original] table[x expr=\thisrow{x} / 1000, y expr=\thisrow{y} /1000] {\trajBravoTriangle};
        \addplot+[forget plot, trajectory, pick trajs=CustomTriangleBCharlie, original] table[x expr=\thisrow{x} / 1000, y expr=\thisrow{y} /1000] {\trajCharlieTriangle};
        \addplot+[forget plot, trajectory, pick trajs=CustomTriangleCAlfa, refined] table[x expr=\thisrow{x} / 1000, y expr=\thisrow{y} /1000] {\trajAlfaTriangle};
        \addplot+[forget plot, trajectory, pick trajs=CustomTriangleCBravo, refined] table[x expr=\thisrow{x} / 1000, y expr=\thisrow{y} /1000] {\trajBravoTriangle};
        \addplot+[forget plot, trajectory, pick trajs=CustomTriangleCCharlie, refined] table[x expr=\thisrow{x} / 1000, y expr=\thisrow{y} /1000] {\trajCharlieTriangle};
    \end{groupplot}
\end{tikzpicture}
\\[0.2cm]
\begin{tikzpicture}
    \begin{axis}[
        customization,
        title={(c)},
        title style={at={(0.5,-1cm)}},
    ]
        
        \addplot+[forget plot, trajectory, pick trajs=CustomTriangleCAlfa, original] table[x expr=\thisrow{x} / 1000, y expr=\thisrow{y} /1000] {\trajAlfaTriangle};
        \addplot+[forget plot, trajectory, pick trajs=CustomTriangleCBravo, original] table[x expr=\thisrow{x} / 1000, y expr=\thisrow{y} /1000] {\trajBravoTriangle};
        \addplot+[forget plot, trajectory, pick trajs=CustomTriangleCCharlie, original] table[x expr=\thisrow{x} / 1000, y expr=\thisrow{y} /1000] {\trajCharlieTriangle};
        \addplot+[forget plot, trajectory, pick trajs=CustomTriangleDAlfa, refined] table[x expr=\thisrow{x} / 1000, y expr=\thisrow{y} /1000] {\trajAlfaTriangle};
        \addplot+[forget plot, trajectory, pick trajs=CustomTriangleDBravo, refined] table[x expr=\thisrow{x} / 1000, y expr=\thisrow{y} /1000] {\trajBravoTriangle};
        \addplot+[forget plot, trajectory, pick trajs=CustomTriangleDCharlie, refined] table[x expr=\thisrow{x} / 1000, y expr=\thisrow{y} /1000] {\trajCharlieTriangle};
    \end{axis}
\end{tikzpicture}%
}%
%
  \vspace{-0.2cm} %
  \caption{%
    Customization of the private preferences via refinement of the legal preferences by adding (a) performance, (b) prioritization of give-way metrics over stand-on metrics for Bravo, and (c) stand-on metrics over give-way metrics for Charlie.
    Each plot shows one refinement step by drawing the trajectories occurring in at least one \nasheq of the original game below those of the refined game.
  }
  \label{fig:customizability}
\end{figure}

\subsection{Solution Protocols}
\label{subsec:eval_protocols}

We compare the two protocols of \cref{sec:solution_protocols}, belief play and distributed \ibr, on the triangle, crossing--overtaking, four-agent priority, and gauntlet scenarios.
\Cref{fig:protocol_communication} reports the communication cost of the protocols and \cref{tab:ibr_gap} the gap to the socially optimal equilibrium $\actionProfile^*$ \eqref{eq:socially_optimal_true_ne}.
The communication cost counts how many messages were sent via the communication channel, independent of the number of recipients.
Thus, a belief play broadcast and an \ibr point-to-point forward each count as one send.
The reported count is proportional to the total volume of data exchanged, because each message consists of one action profile and thus has a fixed size.
As the local solver in belief play, we employ exhaustive enumeration.
A run of \ibr depends on a random seed, which determines the initial action profile and breaks ties among multiple best responses.
\Cref{fig:protocol_communication} characterizes the communication cost of a single run: it reports the mean, standard deviation, and range of the number of sends over 15 independently seeded runs.
\Cref{tab:ibr_gap} characterizes the solution quality that is attainable by restarting the protocol: for $k \in \{1, 5, 10, 20\}$, \ibr is run from $k$ differently seeded initializations, the distinct equilibria reached are pooled, and the socially best among them according to \eqref{eq:social_metric} is committed.
Each entry of the table is therefore the outcome of one pool of $k$ runs and not an average, and its communication cost is the sum over these $k$ runs.
The pools are nested, so the gap cannot increase with $k$.

Distributed \ibr converges to a true equilibrium in all four considered scenarios while communicating far fewer action profiles than belief play.
As the local solver of belief play is deterministic, so is the communication cost of the protocol.
It sends every distinct local-equilibrium proposal: $159$ profiles on triangle, $22$ on crossing--overtaking, $124$ on four-agent priority, and $271$ on gauntlet.
Distributed \ibr first uses $\abs{\players}$ sends to construct the initial profile and then forwards one profile per agent in each best-response round;
its means are $11.4$, $11.4$, $17.6$, and $16.8$ profiles per run, respectively, which corresponds to about three best-response rounds.
The trade-off is the quality of the committed equilibrium:
\ibr commits to \emph{a} true \nasheq but not necessarily to the socially optimal one.
In contrast, belief play always commits $\actionProfile^*$, as it recovers all true \nasheqs by construction.
\Cref{tab:ibr_gap} quantifies this by reporting the optimality gap at the encounter level, \ie $\socialMetric^{\mathrm{enc}}(\actionProfile^*) - \socialMetric^{\mathrm{enc}}(\actionProfile)$, where $\actionProfile$ is the committed action profile.
Five initializations suffice for \ibr to reach $\actionProfile^*$ on the triangle, and ten on crossing--overtaking and four-agent priority.
On the gauntlet, its encounter compliance remains $0.08$ below that of $\actionProfile^*$ with $10$ and with $20$ initializations.

\begin{figure}[t]
  \centering
  \begin{tikzpicture}
    \pgfplotstableread[col sep=comma]{data/avg_communications.csv}{\avgCommunications}%

    \begin{axis}[
        ybar,
        bar width=0.6cm,
        width=\linewidth,
        height=4.5cm,
        grid style={line width=.1pt, draw=gray!20},
        major grid style={line width=.2pt,draw=gray!50},
        ymajorgrids={true},
        yminorgrids={true},
        label style={font=\scriptsize},
        tick label style={font=\scriptsize},
        enlarge x limits=0.2,
        symbolic x coords={triangle, crossing_overtaking, four_agent_prio, gauntlet},
        xtick={triangle, crossing_overtaking, four_agent_prio, gauntlet},
        xtick style={draw=none},
        xticklabels={Triangle, Crossing--overtaking, Four-agent priority, Gauntlet},
        x tick label style={
            text width=1.25cm,
            execute at begin node=\setlength{\baselineskip}{2pt}
        },
        ylabel={Action profiles sent [\#]},
        ymin={0},
        ymax={300},
        ytick={0, 50, ..., 300},
        minor y tick num=4,
        cycle list name=tumcolornomark,
        legend pos=north west,
        legend style={font=\scriptsize, inner ysep=0.7pt, cells={anchor=west}},
        legend image code/.code={
            \draw[#1, fill] (0cm,-0.1cm) rectangle (0.25cm,0.1cm);
        },
    ]
        \addplot+[thick] table[x=scenario, y=belief_play] {\avgCommunications};
        \addlegendentry{Belief play}

        \addplot+[thick, error bars/.cd, y dir=both, y explicit] table[x=scenario, y=dibr, y error=dibr_stddev] {\avgCommunications};
        \addlegendentry{Distributed \ibr}
    \end{axis}
\end{tikzpicture}%
  \caption{%
    Communication cost of a single run of each solution protocol.
    The error bars indicate the standard deviation over 15 independently seeded distributed \ibr runs.
  }
  \label{fig:protocol_communication}
\end{figure}

\begin{table}[t]
  \caption{%
    Solution quality of \ibr with restarts: optimality gap of the socially best equilibrium among those reached from $k$ initializations / number of distinct equilibria reached.
    Each entry stems from one pool of $k$ seeded runs.
  }
  \label{tab:ibr_gap}
  \centering
  \resizebox{\columnwidth}{!}{%
    \begin{tabular}{@{}lrrrrr@{}}
      \toprule
      \textbf{Scenario}             & \multicolumn{4}{c}{\textbf{Distributed \ibr}} & \textbf{Belief play}                                                                                             \\
      \cmidrule(lr){2-5}
                           & 1 init.                  & 5 inits.       & 10 inits.      & 20 inits.      &                \\
      \midrule
      triangle             & $0.84$\,/\,$1$           & $0.0$\,/\,$3$  & $0.0$\,/\,$5$  & $0.0$\,/\,$10$ & $0.0$\,/\,$17$ \\
      crossing--overtaking & $8.82$\,/\,$1$           & $8.82$\,/\,$2$ & $0.0$\,/\,$4$  & $0.0$\,/\,$4$  & $0.0$\,/\,$5$  \\
      four-agent priority  & $9.92$\,/\,$1$           & $4.05$\,/\,$4$ & $0.0$\,/\,$9$  & $0.0$\,/\,$13$ & $0.0$\,/\,$29$ \\
      gauntlet             & $15.59$\,/\,$1$          & $0.24$\,/\,$3$ & $0.08$\,/\,$7$ & $0.08$\,/\,$8$ & $0.0$\,/\,$43$ \\
      \bottomrule
    \end{tabular}
  }
\end{table}

\tikzexternaldisable
\section{Conclusion and Future Work}
\label{sec:conclusion}

We presented a framework for multi-agent motion planning under legal requirements with partially ordered objectives, building on posetal games~\cite{zanardiPosetalGamesEfficiency2022}.
Guidelines capture priorities among legal concerns independently of the concrete scenario, from which we systematically derive partially ordered agent preferences and the corresponding legal game.
To customize the framework while guaranteeing legality, agents augment these preferences with private objectives and regulators refine the guideline.
Since every such customization is a refinement and the set of equilibria depends antitonically on the preferences, customizations can only remove equilibria, so that legality is preserved by construction.
Even under private preferences, we can deploy our framework using a distributed \ibr protocol or belief play, which coordinates agents through conservative local beliefs and provably recovers the equilibria of the true game.
We instantiated the framework for \colregs-compliant motion planning, where it naturally lifts the pairwise rules to multi-vessel encounters.
Using the case study, we demonstrated customizability and auditability, including the ability to explain required rule violations and to identify defectors.
Our evaluation showed that distributed \ibr exchanges solution quality for communication efficiency and weaker information assumptions compared to belief play.

Several directions remain open.
First, we would like to extend our framework into a receding-horizon scheme, in which agents repeatedly re-solve the game as the encounter evolves and refine their beliefs about the preferences of the others based on observed behavior.
The main difficulty is to update the beliefs from observations while keeping them conservative, since the technical result only holds for conservative beliefs.
Second, deploying such a scheme on real vessels requires real-time capable solvers.
While exhaustive enumeration of the joint action space provides exact references, its cost grows exponentially with the number of agents.
Specialized planners for \colregs encounters that exploit the structure of the game, such as the pairwise symmetric collision metrics, could substantially reduce this cost.
Establishing conditions under which the COLREGS game is a potential game would, in addition, guarantee the convergence of distributed \ibr, which we have so far observed empirically.
Third, the auditability results are relative to the finite action lattice, so that a violation classified as unsatisfiable may become satisfiable under a richer action space.
Characterizing how the audit outcome depens on the discretization would strengthen the legal interpretation of such certificates.
Fourth, a defector whose action is still a legal best response can steer the equilibrium selection without violating the law, which calls for a mechanism-design perspective on the social metric.
Finally, applying our framework to other domains in which regulators specify only partial priorities, like autonomous driving or airborne collision avoidance, would further test its generality.
In both domains, interaction rules are also stated pairwise, so the lifting of pairwise rules to many agents would apply directly.

\appendix

\subsection{Proofs of the Preference Refinement Operations}
\label{app:preference_proofs}

To prove \cref{prop:preference_priority_refinement,prop:preference_aggregation,prop:preference_augmentation}, we first observe that two actions are indifferent if and only if they are equal in every metric:
\begin{lemma}[Indifference]
    \label{lem:action_indifference}
    Given a preference $\preference := \langle \metrics, \posetPreceq_{\preference} \rangle$, we have $\actionProfile \actionPreceq[\preference] \actionProfile'$ and $\actionProfile \actionSucceq[\preference] \actionProfile'$ if and only if
    \begin{equation*}
        \forall m \in \metrics : m(\actionProfile) = m(\actionProfile').
    \end{equation*}
\end{lemma}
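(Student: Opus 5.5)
The reverse direction is immediate, so the plan is to dispatch it first and spend the effort on the forward direction. Suppose $m(\actionProfile) = m(\actionProfile')$ for every $m \in \metrics$. Then the premise $m(\actionProfile) > m(\actionProfile')$ in \eqref{eq:action_order} is false for every metric, so the implication holds vacuously and $\actionProfile \actionPreceq[\preference] \actionProfile'$. Swapping the roles of the two actions gives $\actionProfile' \actionPreceq[\preference] \actionProfile$ in the same way.

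For the forward direction, I would argue by contradiction using the finiteness of $\metrics$. Assume $\actionProfile \actionPreceq[\preference] \actionProfile'$ and $\actionProfile' \actionPreceq[\preference] \actionProfile$, and suppose the set
\begin{equation*}
    D := \{ m \in \metrics \mid m(\actionProfile) \neq m(\actionProfile') \}
\end{equation*}
is nonempty. Since $\metrics$ is finite and $\posetPreceq_{\preference}$ is a partial order, $D$ contains an element $\bar m$ that is maximal in $D$, meaning no $m' \in D$ satisfies $m' \posetSucc_{\preference} \bar m$.

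There are two cases. If $\bar m(\actionProfile) > \bar m(\actionProfile')$, then $\actionProfile \actionPreceq[\preference] \actionProfile'$ yields some $m' \posetSucc_{\preference} \bar m$ with $m'(\actionProfile) < m'(\actionProfile')$. Such an $m'$ lies in $D$, which contradicts the maximality of $\bar m$. If instead $\bar m(\actionProfile) < \bar m(\actionProfile')$, the symmetric relation $\actionProfile' \actionPreceq[\preference] \actionProfile$ gives some $m' \posetSucc_{\preference} \bar m$ with $m'(\actionProfile') < m'(\actionProfile)$. Again $m' \in D$, the same contradiction. Hence $D = \emptyset$, which is exactly the claim.

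The only step needing care is the existence of a maximal element of $D$. It relies on $\metrics$ being finite, as required by the definition of a preference, and on the strict relation $\posetSucc_{\preference}$ being acyclic, which follows from antisymmetry and transitivity. Once those are in place, no compensating chain of ever more important metrics can go on forever, so the maximal element exists. The rest of the argument is direct unfolding of \eqref{eq:action_order}.
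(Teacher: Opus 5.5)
Your proof is correct and follows the paper's argument. The reverse direction holds vacuously. For the forward direction, you choose a $\posetPreceq_{\preference}$-maximal metric among those that differ on the two actions, which exists because $\metrics$ is finite, and then derive a contradiction from whichever of the two relations applies.
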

\begin{proof}
    The backward direction follows immediately since the universal quantifier is vacuously satisfied in both instances of \eqref{eq:action_order}.
    For the forward direction, we assume both
    \begin{subequations}
        \label{eq:action_indifference}
        \begin{align}
            \label{eq:action_indifference:a}
            &\begin{aligned}
                &\forall m \in \metrics : m(\actionProfile) > m(\actionProfile') \implies \\
                &\quad \exists m' \in \metrics : m' \posetSucc_{\preference} m \text{ and } m'(\actionProfile) < m'(\actionProfile')
            \end{aligned} \\
            \label{eq:action_indifference:b}
            &\begin{aligned}
                &\forall m \in \metrics : m(\actionProfile') > m(\actionProfile) \implies \\ 
                &\quad \exists m' \in \metrics : m' \posetSucc_{\preference} m \text{ and } m'(\actionProfile') < m'(\actionProfile).
            \end{aligned}
        \end{align}
    \end{subequations}
    We assume that there exists a metric $m$ so that $m(\actionProfile) \neq m(\actionProfile')$ and derive a contradiction.
    Since $\metrics$ is finite, we may choose $m$ that is maximal \wrt $\posetPreceq_{\preference}$ among all metrics that differ at $\actionProfile$ and $\actionProfile'$, \ie $m'(\actionProfile) = m'(\actionProfile')$ for all $m' \posetSucc_{\preference} m$.
    But, if $m(\actionProfile) > m(\actionProfile')$, there must be $m' \posetSucc_{\preference} m$ so that $m'(\actionProfile) < m'(\actionProfile')$ according to \eqref{eq:action_indifference:a}, which is a contradiction.
    If $m(\actionProfile) < m(\actionProfile')$ we use \eqref{eq:action_indifference:b} to derive the contradiction.
\end{proof}
As a direct consequence of \cref{lem:action_indifference}, we have $\actionProfile \actionPrec[\preference] \actionProfile'$ if and only if, in addition to \eqref{eq:action_order}, there exists a separating metric $m \in \metrics$ with $m(\actionProfile) < m(\actionProfile')$.
With this, we can prove \cref{prop:preference_priority_refinement,prop:preference_aggregation,prop:preference_augmentation} by showing that every strict relation of the original preference is preserved.
\begin{proof}[Proof of \cref{prop:preference_priority_refinement}]
    Let $\actionProfile \actionPrec[\preference] \actionProfile'$.
    By \cref{lem:action_indifference}, \eqref{eq:action_order} holds for $\preference$ and some $m \in \metrics$ has $m(\actionProfile) < m(\actionProfile')$.
    Since ${\posetPreceq_{\preference}} \subseteq {\posetPreceq_{\preference'}}$, every dominating metric required by \eqref{eq:action_order} for $\preference$ still dominates under $\preference'$, so \eqref{eq:action_order} holds for $\preference'$ as well, while the separating metric $m$ persists.
    Hence, $\actionProfile \actionPrec[\preference'] \actionProfile'$.
\end{proof}
\begin{proof}[Proof of \cref{prop:preference_aggregation}]
    Let $\actionProfile \actionPrec[\preference] \actionProfile'$.
    We verify \eqref{eq:action_order} for $\preference'$.
    Consider any metric $m$ of $\metrics'$ with $m(\actionProfile) > m(\actionProfile')$.
    If $m \neq \hat{m}$, the dominating metric supplied by \eqref{eq:action_order} for $\preference$ is preserved by~\eqref{subeq:aggregation_order:none} unless it is $m_k$ for some $k \in \{ 1, 2 \}$.
    In the latter case, \eqref{subeq:aggregation_order:succ} provides $\hat{m} \posetSucceq_{\preference'} m$.
    If $\hat{m}$ improves, it is the required dominating metric.
    Otherwise, since $m_k$ improves and $f$ is strictly increasing in each argument, the other metric $m_l$, with $l \in \{ 1, 2 \} \setminus \{ l \}$, worsens.
    Since $m_l$ worsens and $\actionProfile \actionPrec[\preference] \actionProfile'$, \eqref{eq:action_order} supplies an improving metric above $m_l$ that survives the aggregation.
    By~\eqref{subeq:aggregation_order:prec}, this metric sits above $\hat{m}$, and hence above $m$ by transitivity.
    If $m = \hat{m}$, strict monotonicity implies $m_k(\actionProfile) > m_k(\actionProfile')$ for some $k \in \{ 1, 2\}$, and the dominating metric for $m_k$ under $\preference$, which by~\eqref{subeq:aggregation_order:prec} sits above $\hat{m}$ in $\preference'$, improves on $\actionProfile'$.
    A separating metric of $\preference$ likewise yields one for $\preference'$, so $\actionProfile \actionPrec[\preference'] \actionProfile'$ by \cref{lem:action_indifference}.
\end{proof}
\begin{proof}[Proof of \cref{prop:preference_augmentation}]
    Let $\actionProfile \actionPrec[\preference_1] \actionProfile'$.
    Every metric of $\metrics_2$ is ranked, in $\preference$, strictly below every metric of $\metrics_1$, so it can neither serve as nor invalidate a dominating metric for a comparison decided within $\metrics_1$.
    Since $\actionProfile \actionPrec[\preference_1] \actionProfile'$ is strict, there exists at least one separating metric $m \in \metrics_1$ with $m(\actionProfile) < m(\actionProfile')$.
    $m$ persists and dominates every metric in $\metrics_2$, yielding $\actionProfile \actionPrec[\preference] \actionProfile'$ by \cref{lem:action_indifference}.
\end{proof}

\bibliographystyle{IEEEtran}
\bibliography{bib/clean}

\end{document}